\newtheorem{theorem}{Theorem}[section]
\newtheorem{definition}[theorem]{Definition}
\newtheorem{lemma}[theorem]{Lemma}
\newtheorem{corollary}[theorem]{Corollary}
\newtheorem{proposition}[theorem]{Proposition}
\newtheorem{claim}[theorem]{Claim}
\newtheorem{example}[theorem]{Example}
\def\nat{{\mathbb N}}
 \def\real{{\mathbb R}}
 \def\rat{{\mathbb Q}}
\newcommand{\Fix}{{\mathrm{Fix}}}
\newcommand{\coNP}{\ensuremath{\mathsf{coNP}}}
\newcommand{\NP}{\ensuremath{\mathsf{NP}}}
\newcommand{\coUP}{\ensuremath{\mathsf{coUP}}}
\newcommand{\UP}{\ensuremath{\mathsf{UP}}}
\newcommand{\Ptime}{\ensuremath{\mathsf{P}}}
\newcommand{\FP}{\ensuremath{\mathsf{FP}}}
\newcommand{\PPAD}{\ensuremath{\mathsf{PPAD}}}
\newcommand{\PPADS}{\ensuremath{\mathsf{PPADS}}}
\newcommand{\PPA}{\ensuremath{\mathsf{PPA}}}
\newcommand{\PPP}{\ensuremath{\mathsf{PPP}}}
\newcommand{\CLS}{\ensuremath{\mathsf{CLS}}}
\newcommand{\EOPL}{\ensuremath{\mathsf{EOPL}}}
\newcommand{\UniqueEOPL}{\ensuremath{\mathsf{UEOPL}}}
\newcommand{\PLS}{\ensuremath{\mathsf{PLS}}}
\newcommand{\Val}{\ensuremath{\mathsf{Val}}}
\newcommand{\TFNP}{\ensuremath{\mathsf{TFNP}}}
\newcommand{\Tarski}{\ensuremath{\mathtt{Tarski}}}
\newcommand{\Arrival}{\ensuremath{\mathtt{Arrival}}}
\newcommand{\RecursiveArrival}{\ensuremath{\mathtt{RecursiveArrival}}}
\providecommand{\E}{\ensuremath{\mathbb{E}}}
\begin{document}

\title{Tarski's Theorem, Supermodular Games, and the Complexity of Equilibria\thanks{A short conference version of this paper appeared
in the ITCS'2020 conference proceedings (\cite{EPRY-itcs20}). \\ Research partially supported by NSF grants: CCF-2107187, CCF-2212233, CCF-2332922,  CCF-1703295, CCF-1763970, and 2229929.}}

\author{Kousha Etessami\\U. of Edinburgh
\\{\tt kousha@inf.ed.ac.uk}
\and 
Christos H. Papadimitriou\\Columbia U.
\\{\tt christos@cs.columbia.edu} 
\and Aviad Rubinstein\\Stanford U.
\\{\tt aviad@cs.stanford.edu}
\and
Mihalis Yannakakis\\Columbia U.
\\{\tt mihalis@cs.columbia.edu}
}

\date{}

\maketitle

\begin{abstract}
The use of monotonicity and Tarski's theorem in existence proofs of
equilibria is very widespread in economics, while Tarski's theorem is
also often used for similar purposes in the context of verification.
However, there has been relatively little in the way of analysis of
the complexity of finding the fixed points and equilibria guaranteed
by this result.  We study a computational formalism based on monotone
functions on the $d$-dimensional grid with sides of length $N$, and
their fixed points, as well as the closely connected subject of
supermodular games and their equilibria.  It is known that finding
some (any) fixed point of a monotone function can be done in time
$\log^d N$, and we show it requires at least $\log^2 N$ function
evaluations already on the 2-dimensional grid, even for randomized
algorithms.  We show that the general $\Tarski$ problem of finding some
fixed point, when the monotone function is given succinctly (by a
boolean circuit), is in the class $\PLS$ of problems solvable by local
search and, rather surprisingly, also in the class $\PPAD$.  Finding
the greatest or least fixed point guaranteed by Tarski's theorem,
however, requires $d\cdot N$ steps, and is \NP-hard in the white box
model.  For supermodular games, we show that finding an equilibrium is essentially computationally equivalent to the Tarski
problem, and finding the maximum or minimum equilibrium is similarly
harder.  Interestingly, two-player supermodular games where the
strategy space of one player is one-dimensional can be solved in
$O(\log N)$ steps.  We also show that computing (approximating) the
value of Condon's (Shapley's) stochastic games reduces to the $\Tarski$
problem.
An important open problem highlighted by this work is
proving better upper or lower bounds on the
(blackbox query) complexity of the Tarski problem.
\vspace*{0.2in}

\noindent {\bf Keywords.}    Monotone functions, complete lattices,
Tarski's fixed point theorem, computational complexity, randomized/deterministic query complexity, total search
problems, \PPAD{}, \PLS{},
supermodular games, games with strategic complementarities, Nash equilibria,  stochastic games.

\end{abstract}

\setcounter{page}{0}
\thispagestyle{empty}

\newpage

\section{Introduction}

Equilibria are paramount in economics, because guaranteeing their
existence in a particular strategic or market-like framework enables
one to consider ``What happens at equilibrium?'' without further
analysis.  Equilibrium existence theorems are nontrivial to prove.
The best known example is Nash's theorem \cite{Nash51}, whose proof in
1950, based on Brouwer's fixed point theorem, transformed game theory,
and inspired the Arrow-Debreu price equilibrium results \cite{AD54},
among many others.  Decades later, complexity analysis of these
theorems and corresponding solution concepts by computer scientists
has created a fertile and powerful field of research \cite{AGT-book}. 

Not all equilibrium theorems in economics, however, rely on Brouwer's
fixed point theorem for their proof (even though, in a specific sense
made clear and proved in this paper, they could have...).  Many of
the exceptions 
ultimately rely on {\em Tarski's fixed point theorem}
\cite{Tarski55}, stating that all monotone functions on a complete
lattice have a fixed point --- and in fact a whole sublattice of fixed
points with a largest and smallest element
\cite{Topkis79,MR90,Topkis98}.  In contrast to the equilibrium
theorems whose proof relies on Brouwer's fixed point theorem, there
has been relatively little complexity analysis of Tarski's fixed point theorem
and the equilibrium results it enables. (We discuss prior related work at the
end of this introduction.)

Here we present several results in this direction. 
Let $[N] = \{1,\ldots,N\}$.
To formulate the
basic problem, we consider a monotone function $f$ on the
$d$-dimensional grid $[N]^d$, that is, a function
$f:[N]^d\mapsto[N]^d$ such that for all $x, y \in [N]^d$, $x \geq y$ implies $f(x)\geq f(y)$, where ``$x \geq y$'' denotes the standard coordinate-wise partial order on vectors in $\nat^d$, meaning $x \geq y$ if and only if $x_i \geq y_i$ for all $i \in [d]$. 
In the black-box oracle model, we can query this function with specific vectors $x \in [N]^d$.
In the white-box model,
we
assume that the function is presented by a boolean
circuit.
Thus, $d$ and $N$ are the basic parameters to our
model; it is useful to think of $d$ as the dimensionality of the
problem, while $N$ is something akin to the inverse of the desired
approximation $\epsilon$.\footnote{See Proposition \ref{con-disc} for
a formalization of this relationship between $N$ and $\epsilon$.} 

\begin{itemize}
\item Tarski's theorem in the grid framework is easy to prove.
 Let $\bar{1} = (1,\ldots,1)$ denote the ($d$-dimensional) all-1 vector.
  Consider the sequence of grid points $\bar{1}, f(\bar{1}), 
f(f(\bar{1})),
  \ldots, f^i(\bar{1}), \ldots$. From monotonicity of $f$,
by induction on $i$ we get, for all $i \geq 0$,  $f^i(\bar{1}) \leq f^{i+1}(\bar{1})$.
  Unless a fixed point is arrived at, the sum of the coordinates
  must increase at each iteration.  Therefore, after at most $dN$ iterations of $f$ applied to $\bar{1}$, a
  fixed point is found.  In other words $f^{dN}(\bar{1}) = f^{dN+1}(\bar{1})$.

\item This immediately suggests an $O(dN)$ algorithm.  But an
  $O(\log^d N)$ algorithm is also known\footnote{This algorithm
   appears to have been first observed in \cite{DangQiYe}.}:
   Consider the $(d-1)$-dimensional function obtained 
   by fixing the ``input value'' in the $d$'th coordinate
   of the function $f$ with some value $r_d$ (initialize $r_d := \lceil N/2 \rceil$).
   Find a fixed point $z^* \in [N]^{d-1}$ of this $(d-1)$-dimensional 
monotone function $f(z,r_d)$ (recursively).
   If the $d$th coordinate $f_d(z^*,r_d)$ of $f(z^*,r_d)$, is equal to $r_d$, then $(z^*,r_d)$
   is a fixed point of the overall function $f$, and we are done.  Otherwise, a
   binary search on the $d$'th coordinate is enabled:   
   we need to look for a
  larger (smaller) value of $r_d$  if
  $f_d(z^*,r_d) > r_d$  (respectively, if $f_d(z^*,r_d) < r_d$).
  By an easy induction, this establishes the $O(\log^d N)$ upper bound  (\cite{DangQiYe}).

\item
  In Theorem \ref{thm:main-2d} we prove that this
  algorithm is essentially optimal in the black box sense for the case $d=2$.
  We provide a class of monotone functions $f:[N]^2 \rightarrow [N]^2$
  that we call {\em
herringbones}: two monotonic paths, one starting from
  $\bar{1}$ and the other from $\bar{N}$, meeting at the fixed point,
  while all other points in the $ N \times N$ grid are mapped
  diagonally: $f(x)= x+(-1,+1)$ or $x+(+1,-1)$, whichever of the
  points is closer to the monotonic path that contains the fixed
  point.  We prove that any randomized algorithm needs to make
  $\Omega(\log^2 N)$ queries (with high probability) to find the fixed
  point on such herringbone functions.

\item Tarski's theorem further asserts that there is a greatest and a
  least fixed point, and these fixed points are especially useful in
  the economic applications of the result (see for example 
  \cite{MR90}).  It is not hard to see, however, that finding these
  fixed points is \NP-hard, and takes $\Omega(dN)$ time in the black
  box model (see Proposition \ref{lfp:hard}).

\item In terms of complexity classes, the problem $\Tarski{}$ is
  obviously in the class \TFNP{} of total function (total search) problems.  But
  where exactly?  We show (Theorem \ref{thm:tarski-in-pls}) that it belongs in the class
  \PLS{} of local optimum search problems.

\item Surprisingly, $\Tarski{}$ is also in the class 
   $\FP^\PPAD$ of search problems polynomial-time Turing reducible to a Brouwer fixed point problem 
  (Theorem \ref{thm:tarski-in-P-to-PPAD}), and thus,
 by the known fact (\cite{BussJohnson12}) that the class $\PPAD$ is closed 
under P-time Turing reductions 
 $\Tarski{}$ is in $\PPAD$ (Corollary \ref{cor:tarski-in-PPAD}).
This result
presents a heretofore unsuspected connection between two main
sources of equilibrium results in economics.

\item {\em Supermodular games} \cite{Topkis79,MR90,Topkis98} --- or games
with strategic complementarities 
  --- comprise a large and important class of economic models,  with 
  complete lattices as strategy spaces, 
in which a player's best response is a
  monotone function (or monotone correspondence) of the other player's strategies.  They always
  have pure Nash equilibria due to Tarski's theorem.  We show that finding
  an equilibrium for a supermodular game with (discrete) Euclidean grid strategy 
spaces, under a succinct representation of a best response function, is  
essentially computationally equivalent to the problem of finding a Tarski fixed point
of a monotone map 
(Proposition \ref{prop:supmod-to-tarski-reduct} and Theorem \ref{mon-to-sup}).  
If there are two
  players and one of them has a one-dimensional strategy space, we show that a Nash
  equilibrium can be found in logarithmic time (in the size of the strategy spaces).

\item {\em Stochastic games}  \cite{Shapley53, Condon92}.   We show that the 
problems of computing the (irrational) value
of Shapley's discounted stochastic games  to desired accuracy, and 
computing the exact value of Condon's simple stochastic games (SSG), 
are both  P-time reducible to the 
$\Tarski{}$ problem.  The proofs employ known characterizations of the value of both Shapley's stochastic games and Condon's SSGs in terms of 
monotone fixed point equations, which can also be viewed as 
monotone ``polynomially contracting'' maps with a unique fixed
point, and from properties of polynomially contracting maps, see \cite{EY2010}.

\item In Section \ref{sec:gen-lattice} we extend the study of the black-box Tarski fixed point problem
from the setting of monotone functions on the Euclidean grid
$[N]^d$ to a
more general black box model for monotone functions $f:L \rightarrow
L$, over an arbitrary finite lattice $(L,\preceq)$, where the lattice's
elements $L \subseteq \{0,1\}^n$ are encoded as binary strings of some
given length $n$, and where we assume the entire lattice $(L,\preceq)$
is known {\em explicitly} by the algorithm making the queries, 
which moreover has unbounded
computational power, but only has oracle access to the monotone
function $f:L \rightarrow L$.   We show that it is possible
to carry over query upper bounds obtained for the grid lattice $[N]^d$
to this more general black box setting.
In particular, 
generalizing the $\log^d N$
algorithm for Euclidean grids,
we show that in this black box model there is a
deterministic algorithm that computes a fixed point of 
$f:L \rightarrow L$ using $O(\log^d(|L|))$ queries to the function $f$, where 
$d$ is the {\em dimension} of
the lattice $(L,\preceq)$.  The {\em dimension} of a lattice
$(L,\preceq)$, and more generally the dimension of any partial order, 
can be defined as the smallest integer $d \geq 1$ such that the 
relation $\preceq$ is the
intersection of $d$ total orders on the same underlying set $L$.
Equivalently, it is the smallest $d \geq 1$ such that there
is an injective embedding  of $(L,\preceq)$ in the Euclidean grid
$([|L|]^d,\leq)$, where $\leq$ is the standard coordinate-wise partial
order on $[|L|]^d$.  Note that a lower bound of
$\Omega(\log^2 (|L|))$ queries for computing a fixed point of a monotone
function $f:L \rightarrow L$ in this black box model follows directly from our
lower bound of $\Omega(\log^2 N)$ for monotone functions on the
2D grid $f:[N]^2 \rightarrow [N]^2$.  
At present, we do not know any
better lower bound than $\Omega(\log^2 (|L|))$ in this black box model for 
arbitrary finite
lattices.\footnote{Note that this black box model for monotone functions on general finite lattices is very different
from the one considered in \cite{CLT08}, where the lattice itself
is not known explicitly by the algorithm, but 
is only accessible via an oracle for its partial order.
Hence the linear $\Omega(|L|)$ lower bound on the number of queries 
(including queries to the partial order itself) given 
in \cite{CLT08} for finding a fixed point
has no
bearing on the black box model we consider, where the lattice itself is explicitly known,
and only the monotone function is given by an oracle.}

\item  Finally, in Appendix \ref{appendix:B}, we point out
that another intriguing problem called $\Arrival$  (\cite{DGKMW17}),
a reachability problem on directed graphs
whose complexity status remains open,
is reducible to $\Tarski$. This follows directly from a more recent
result by G\"{a}rtner et. al. \cite{GHH21}, who exploited
Tarski fixed point computation to provide a subexponential time
algorithm for $\Arrival$.
\end{itemize}

\noindent {\em Prior related work:} in recent years a number of technical
reports and papers by Dang, Qi, and Ye, have considered
the complexity of computational problems related to Tarski's
theorem \cite{DangQiYe,DangYe-expanding, DangYe18-tech}.  In
particular, in \cite{DangQiYe} the authors provided the
already-mentioned $\log^d N$ algorithm for computing a Tarski fixed
point for a discrete map, $f:[N]^d \rightarrow
[N]^d$, which is monotone under the coordinate-wise order.
In \cite{DangQiYe} they also establish that determining the uniqueness
of the fixed point of a monotone map under coordinate-wise order is
\coNP-hard, and that uniqueness under lexicographic order is also
\coNP-hard (already in one dimension).  
In \cite{DangYe-expanding} the authors studied another variant of the
Tarski fixed point problem, namely computing another fixed point of a monotone
function in an expanded domain where the smallest point is a fixed point;
this variant is \NP-hard (the claim in the paper that this problem is
in $\PPA{}$ has been withdrawn by the authors \cite{DangYe-personal-com19}).
In earlier work, Echenique \cite{Eche07}, studied algorithms
for computing all pure Nash equilibria in supermodular games (and
games with strategic complementaries) whose strategy spaces are 
discrete grids. Of course computing all
pure equilibria is harder than
computing {\em some} pure equilibrium; indeed, we show that computing the
least (or greatest) pure equilibrium of such a supermodular game
is already \NP-hard (Corollary \ref{cor:l-g-equil-np-hard}).
In earlier work Chang, Lyuu, and Ti \cite{CLT08} considered
the complexity of Tarski's fixed point theorem 
over a general finite lattice given via an oracle for its
partial order (not given it explicitly)
and given an oracle for the 
monotone function,
and they observed that the total number of oracle queries required 
to find some fixed point in this model is
linear in the number of elements of the lattice. 
They did not study monotone functions on euclidean grid lattices,
and their results have no bearing on this setting, nor on the black-box setting
for general finite lattices that we consider in section \ref{sec:gen-lattice}.

\noindent {\em Subsequent work:}  Since the publication
of the conference version \cite{EPRY-itcs20}  of this paper in 2020,
a number of subsequent papers 
have established further results, including in particular improved
(but still exponential in the dimension $d$) upper bounds 
on the black-box and white-box complexity of the Tarski fixed point problem.
We provide an overview of 
subsequent work in the Conclusions section.

The rest of this paper is organized as follows:
Section \ref{sec:basics}  provides some basic background and
definitions.   Section \ref{sec:ppad-pls} proves that the $\Tarski$ problem
is contained in both $\PLS$ and $\PPAD$. 
The proof of containment in $\PPAD$ uses a result due to Buss 
and Johnson (\cite{BussJohnson12}) that $\PPAD$ is closed under
polynomial-time Turing reductions; for completeness, we provide
a proof of their result in Appendix \ref{appendix:A}.
Section \ref{sec:lower}
provides a proof that in the black-box oracle model the (expected) number
of queries required by a (randomized) algorithm to find a fixed point 
a 2-dimensional monotone function $f:[N]^2 \rightarrow [N]^2$
is $\Omega(\log^2 N)$.  It  provides two separate proofs, one
for randomized algorithms, and a separate proof for deterministic
algorithms which might be more conducive to generalization.
Section \ref{sec:super-modular} provides brief background
on supermodular games and games with strategic complementarities,
and shows that computing a pure Nash equilibrium for supermodular
games that are succinctly presented (via their (infimum or supremum) best-response 
function)  is inter-reducible to computing a Tarski fixed point
for a monotone function on a grid lattice $[N]^d$.
Section \ref{sec:condon-shapley} shows that computing (approximating) the
value Condon's (Shapley's) stochastic games is P-time reducible to $\Tarski$.
Section \ref{sec:gen-lattice} considers the generalization of
the Tarski fixed point computation problem to the setting of
a general finite lattice of ``dimension'' $d$, and relates query upper bounds for this,
in settings where the lattice is explicitly known, to query upper bounds
for the Tarski fixed point problem on the $d$-dimensional Euclidean grid lattice.   In Section \ref{sec:conclusions}, we conclude
by describing several subsequent results that appeared
after the publication of the conference version of this paper,
which in particular provide better query upper bounds for finding
a Tarski fixed point.  We also list a number of intriguing questions that remain open.  Lastly, in Appendix \ref{appendix:B} we point out that a recent result
(\cite{GHH21}) implies another intriguing problem, the $\Arrival$ problem, is reducible
to $\Tarski{}$.

\section{Basics}

\label{sec:basics}

A partial order $(L, \leq)$ is a {\em complete lattice} if every nonempty subset $S$ of $L$ has a least upper bound (or supremum or join, denoted $\sup S$ or $\lor S$) and a greatest lower bound (or infimum or meet, denoted $\inf S$ or $\land S$) in $L$.
A function $f: L \rightarrow L$ is {\em monotone} if for all pairs of elements $x,y \in L$, $x \leq y$ implies $f(x) \leq f(y)$.
A point $x \in L$ is a {\em fixed point} of $f$ if $f(x)=x$.
Tarski's theorem (\cite{Tarski55}) states that the set $Fix(f)$ of fixed points of $f$ is
a nonempty complete lattice under the same partial order $\leq$;
in particular, $f$ has a greatest fixed point (GFP) and a least fixed point (LFP).

In this paper we will take as our underlying lattice $L$ 
a finite discrete Euclidean grid, which we fix for
simplicity to be the integer grid $[N]^d$, for some positive integers $N, d$, where $[N] =\{ 1,\ldots,N \}$.
Comparison of points is componentwise, i.e. $x \leq y$
iff $x_i \leq y_i$ for all $i=1, \ldots,d$.
We will also consider the corresponding continuous $d$-dimensional box,
$[1,N]^d$, where for $a, b \in \real$ with $a \leq b$, $[a,b] = \{x \in \real \mid a \leq x \leq b\}$.
Both, the discrete and continuous box are clearly complete lattices.

Given a monotone function $f$ on the integer grid $[N]^d$,
the problem is to compute a fixed point of $f$ (any point in $Fix(f)$).
A generally harder problem is to compute specifically the LFP of $f$
or the GFP of $f$. 
We consider the oracle model, in which the function $f$ is
given by a black-box oracle, and the complexity of the algorithm is measured in terms of the number of queries to the oracle.
Alternatively, we also consider an explicit model in which
$f$ is given explicitly by a polynomial-time algorithm
(a polynomial-size Boolean circuit), and then the complexity of the algorithm is measured in the ordinary Turing model.
Note that the number of bits needed to represent a point 
in the domain is $d \log N$,
so polynomial time here means polynomial in $d$ and $n=\log N$.
The number $N^d$ of points in the domain is exponential.

A standard value iteration algorithm provides a simple way to compute
the LFP of $f$: starting from the lowest point of the lattice,
which here is the all-1 vector $1$, apply repeatedly $f$.
This generates a monotonically increasing sequence of points $1 \leq f(1) \leq f^2(1) \leq \ldots$ until a fixed point is reached, which
is the LFP of $f$. In every step of the sequence, at least one coordinate is strictly increased, therefore a fixed point is reached in at most $(N-1)d$ steps.  In the worst case, the process may take that long, which is exponential in the bit size $d \log N$.
Similarly, the GFP can be computed by applying repeatedly $f$
starting from the highest point of the lattice, i.e., from the all-$N$ point, until a fixed point is reached.

Another way to compute some fixed point of a monotone function $f$ (not necessarily the LFP or the GFP) is by a divide-and-conquer algorithm.
In one dimension, we can use binary search: if the
domain is the set $L(l,h)=\{ x \in Z \mid l \leq x \leq h \}$ of integers between the lowest point $l$ and the highest point $h$,
then compute the value of $f$ on the midpoint $m= \lceil (l+h)/2 \rceil$.
If $f(m)=m$ then $m$ is a fixed point; if $f(m) < m$ then recurse on the lower half $L(l,f(m))$, and if
$f(m) > m$ then recurse on the upper half $L(f(m),h)$. The monotonicity of $f$ implies that $f$ maps the respective half interval into itself. 
Hence the algorithm correctly finds a fixed point in at most
$\log N$ iterations, where $N$ is the number of points.

In the general $d$-dimensional case, suppose that the domain is the set of integer points in the box defined by the lowest point $l$ and the highest point $h$, i.e.
$L(l,h)=\{ x \in Z^d \mid l \leq x \leq h \}$. 
Consider the set of points with $d$-th coordinate equal to $m= \lceil (l+h)/2 \rceil$;
their first $d-1$ coordinates induce a $(d-1)$-dimensional lattice
$L'(l,h)=\{ x \in Z^{d-1} \mid l_i \leq x_i \leq h_i, i=1,\ldots d-1 \}$.
Define the function $g$ on $L'(l,h)$ by  
letting $g(x)$ consist of the first $d-1$ components 
of $f(x,m)$. It is easy to see that $g$ is a monotone function
on $L'(l,h)$. Recursively, compute a fixed point $x^*$ of $g$.
If $f_d(x^*,m) = m$, then $(x^*,m)$ is a fixed point of $f$
(this holds in particular if $l=h$).
If $f_d(x^*,m) > m$, then recurse on $L(f(x^*,m),h)$.
If $f_d(x^*,m) < m$, then recurse on $L(l,f(x^*,m))$.
In either case, monotonicity implies that if the algorithm recurses,
then $f$ maps the smaller box into itself and thus has a fixed point in it. 
An easy induction shows that the complexity of this algorithm is $O((\log N)^d)$,  (\cite{DangQiYe}).

Computing the least or the greatest fixed point is in general hard,
even in one dimension, both in the oracle and in the explicit model.

\begin{proposition}\label{lfp:hard}
Computing the LFP or the GFP of an explicitly given polynomial-time monotone function
in one dimension is \NP-hard. In the oracle model, the problem requires
$\Omega(N)$ queries for a domain of size $N$.
\end{proposition}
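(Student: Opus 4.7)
The plan is to handle the two parts of the proposition separately: an \NP-hardness reduction from {\sc Sat} for the explicit model, and a needle-in-a-haystack adversary argument for the $\Omega(N)$ oracle lower bound. In both cases the statement for the GFP will follow from the one for the LFP by composing with the order-reversing involution $x \mapsto N+1-x$ on $[N]$, which preserves monotonicity while swapping the roles of LFP and GFP.

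For the reduction, given a {\sc Sat} instance $\phi$ on $n$ variables, I would set $N = 2^n + 1$ and define $f_\phi:[N]\to[N]$ by $f_\phi(N) = N$ and, for $x \in [N-1]$, interpret $x-1$ as an $n$-bit truth assignment and let $f_\phi(x) = x$ if that assignment satisfies $\phi$ and $f_\phi(x) = x+1$ otherwise. Because every output lies in $\{x, x+1\}$, for any $x_1 < x_2$ we have $f_\phi(x_1) \leq x_1 + 1 \leq x_2 \leq f_\phi(x_2)$, so $f_\phi$ is monotone; it is plainly computed by a polynomial-size boolean circuit. Its fixed points are $N$ together with the integer encodings of all satisfying assignments of $\phi$, so the LFP equals $N$ iff $\phi$ is unsatisfiable, and computing the LFP therefore decides {\sc Sat}.

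For the $\Omega(N)$ oracle lower bound, I would consider the family $\{f_k\}_{k \in [N-1]}$ defined by $f_k(N) = N$, $f_k(k) = k$, and $f_k(x) = x+1$ for every other $x \in [N-1]\setminus\{k\}$; the same $\{x,x+1\}$-valued argument as above shows each $f_k$ is monotone, its fixed-point set is exactly $\{k, N\}$, and its LFP equals $k$. An adversary answers each query $x < N$ with $x+1$ (and answers $x = N$ with $N$), which is consistent with every $f_{k'}$ such that $k' \neq x$, so each query eliminates at most one value of $k$ from the set of still-consistent candidates. Since the algorithm cannot commit to the LFP until a unique candidate remains, at least $N-2$ queries are required.

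The main subtlety I anticipate is in the choice of family: natural threshold-style families, such as $f_k(x) = x+1$ for $x < k$ and $f_k(x) = x$ for $x \geq k$, admit a binary-search strategy that pins down $k$ in only $O(\log N)$ queries. One must therefore use a family whose members differ at just a single point, so that the oracle's answer is uninformative unless the query happens to be an exact hit on the hidden coordinate; the construction above is calibrated precisely to have this property.
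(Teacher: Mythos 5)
Your proposal is correct and matches the paper's approach essentially exactly: the same SAT reduction (with the identical $\{x,x+1\}$-valued construction, merely shifted to the $[N]$-indexing), and the same single-fixed-point oracle adversary; your explicit family $\{f_k\}$ and the order-reversing involution for the GFP simply spell out details the paper leaves implicit.
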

\begin{proof}
We prove the claim for the LFP; the GFP is similar.
We reduce from the Satisfiability (SAT) problem. Given a Boolean formula $\phi$ in $n$ variables,
let the domain $D=\{ 0, 1, \ldots, 2^n \}$, and define the function $f$ as follows.
For $x \leq 2^n-1$, viewing $x$ as an $n$-bit binary number, it corresponds
to an assignment to the $n$ variables of $\phi$; let $f(x)=x$ if the assignment $x$
satisfies $\phi$, and let $f(x)=x+1$ otherwise. Define $f(2^n)=2^n$.
Clearly $f$ is a monotone function and it can be computed in polynomial time.
If $\phi$ is not satisfiable then the LFP of $f$ is $2^n$, while if $\phi$ is 
satisfiable then the LFP is not $2^n$.

For the oracle model, use the same domain $D$ and let $f$ map every
$x \leq 2^n-1$ to $x$ or $x+1$, and $f(2^n)=2^n$.
The LFP is not $2^n$ iff there exists an $x \leq 2^n-1$ such that $f(x)=x$,
which in the oracle model requires trying all possible $x \leq 2^n-1$.
\end{proof}

In the case of a continuous domain $[1,N]^d$, we may not be able to
compute an exact fixed point, because the exact fixed points may all involve irrational values\footnote{Consider, e.g., the 1-dimensional monotone function $f:[0,1] \rightarrow [0,1]$, given by $f(x) := \frac{1}{2} x^2 + \frac{1}{4}$. 
Clearly, this is a monotone map from the unit interval $[0,1]$ to itself; furthermore, it is easy to check using the quadratic formula that the unique fixed point solution $x \in [0,1]$ satisfying $x = f(x) = (1/2)x^2 + (1/4)$ is $(1-\frac{1}{\sqrt{2}})$, which is irrational.}, and thus we have to be content with
approximation. Given an $\epsilon>0$, an $\epsilon$-approximate fixed point is a point $x$ such that $|f(x)-x| \leq \epsilon$, where we use the $L_{\infty}$ (max) norm, i.e. $|f(x)-x| = \max \{ |f_i(x)-x_i| | i=1,\ldots, d \}$. 
In this context, polynomial time means polynomial in $\log N, d,$ and
$\log (1/ \epsilon)$ (the number of bits of the approximation).
An $\epsilon$-approximate fixed point need not be close to any actual fixed point of $f$. A problem that is generally harder
is to compute a point that approximates some actual fixed point, and an even harder task is to approximate specifically the
LFP or the GFP of $f$.
Value iteration algorithm, starting from the lowest point converges
in the limit to the LFP (and if started from the highest point, 
it converges to the GFP),
but there is no general bound on the number of iterations needed
to get within $\epsilon$ of the LFP (or the GFP).
The algorithm does however reach an $\epsilon$-approximate fixed point within
$N d /\epsilon$ iterations (note, this is exponential in $\log N$ and $\log (1/ \epsilon)$). 

It is easy to see that the approximate fixed point problem for the continuous case reduces to the exact fixed point problem for the discrete case.

\begin{proposition}\label{con-disc}
For any $\epsilon > 0$, the problem of computing an $\epsilon$-approximate fixed point of a given monotone function on the continuous domain $[1,N]^d$
reduces to the problem of computing an exact fixed point for a monotone function on the discrete domain
$[ \: N  \cdot \left\lceil 1/\epsilon \right\rceil \: ]^d$.
\end{proposition}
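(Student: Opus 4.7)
The plan is to discretize the continuous domain $[1,N]^d$ at spacing $\epsilon$ and define a discrete monotone function whose exact fixed points are $\epsilon$-approximate fixed points of $f$. Concretely, identify the grid $[N/\epsilon]^d$ with the set of points $G = \{1, 1+\epsilon, 1+2\epsilon, \ldots\}^d \cap [1,N]^d$ via the bijection $y \mapsto 1 + \epsilon(y-\bar{1})$; I will freely confuse a discrete grid point with the continuous point it represents. Given the continuous monotone function $f:[1,N]^d \to [1,N]^d$, define $\tilde{f}:G \to G$ by
\[
\tilde{f}(y) \;=\; \bigl(\,\lfloor f(y)_1 \rfloor_\epsilon,\, \ldots,\, \lfloor f(y)_d \rfloor_\epsilon\,\bigr),
\]
where $\lfloor t \rfloor_\epsilon$ denotes the largest element of $G$'s $i$-th coordinate axis that is at most $t$. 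Since $f(y) \in [1,N]^d$, the coordinatewise floor lies in $G$, so $\tilde{f}$ is well-defined.

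Next I would verify the two properties needed for the reduction. First, $\tilde{f}$ is monotone on the discrete lattice $G$: if $y \leq y'$ in $G$, then by monotonicity of $f$ we have $f(y) \leq f(y')$ coordinatewise, and coordinatewise flooring to a fixed grid preserves $\leq$, so $\tilde{f}(y) \leq \tilde{f}(y')$. Second, any exact fixed point $y^*$ of $\tilde{f}$ satisfies $\lfloor f(y^*)_i \rfloor_\epsilon = y^*_i$ for every $i$, hence
\[
0 \;\leq\; f(y^*)_i - y^*_i \;<\; \epsilon \qquad \text{for all } i=1,\ldots,d,
\]
so $\|f(y^*) - y^*\|_\infty < \epsilon$, i.e., $y^*$ is an $\epsilon$-approximate fixed point of $f$ in the continuous sense.

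Putting it together: to solve the continuous $\epsilon$-approximate fixed point problem for $f$ on $[1,N]^d$, construct $\tilde{f}$ on the discrete grid (of size essentially $[N/\epsilon]^d$), invoke an exact fixed point algorithm for monotone functions on the discrete grid, and return the resulting fixed point. The oracle for $\tilde{f}$ is obtained from a single oracle call to $f$ plus $d$ floor operations, so the reduction is polynomial both in the oracle model and in the explicit (circuit) model. The only mild subtlety, which is the step I expect to take most care, is the bookkeeping needed when $N/\epsilon$ is not an integer and for the boundary coordinates (ensuring the floor never falls below $1$ or exceeds $N$); this is handled by clipping, which trivially preserves monotonicity. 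No deeper obstacle arises.
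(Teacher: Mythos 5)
Your proposal is correct and takes essentially the same approach as the paper: discretize the domain at spacing $\epsilon$ and define the discrete map by rounding $f$'s output to the nearest grid point, observing that rounding preserves monotonicity and that a fixed point of the rounded map is an $\epsilon$-approximate fixed point of $f$. The only (immaterial) difference is that you floor while the paper rounds to the nearest integer on the scaled grid $\{k,\ldots,Nk\}^d$ with $k=\lceil 1/\epsilon\rceil$, giving a two-sided $1/(2k)<\epsilon$ error bound rather than your one-sided $<\epsilon$ bound; both suffice.
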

\begin{proof}
Given the monotone function $f$ on the continuous domain $D_1=[1,N]^d$,
consider the discrete domain 
$D_2= \{ x \in {\mathbb Z}^d | k \leq x_i \leq Nk, i=1, \ldots, d \}$,
where $k= \lceil 1/\epsilon \rceil$, and define
the function $g: D_2 \rightarrow D_2$ as follows.
For every $x \in D_2$, let $g(x)$ be obtained from $kf(x/k)$ by
rounding each coordinate to the nearest integer, with ties broken (arbitrarily) in favor of the ceiling.
Since $f$ is monotone, $g$ is also monotone.
If $x^*$ is a fixed point of $g$, then $k f(x^*/k)$ is within 1/2
of $x^*$ in every coordinate, i.e., $|| k f(x^*/k) - x^* ||_\infty \leq 1/2$, and hence 
$|| f(x^*/k) - x^*/k ||_\infty \leq 1/2k < \epsilon$. Thus  $x^*/k$ is
an $\epsilon$-approximate fixed point of $f$.
\end{proof}

\section{Computing a Tarski fixed point is in \PLS{} $\cap$ 
$\PPAD{}$} 

\label{sec:ppad-pls}

For a monotone function $f: [N]^d
\rightarrow [N]^d$ (with respect to the
coordinate-wise ordering),  we are interested
in computing a fixed point $x^* \in \Fix(f)$,
which we know exists by Tarski's theorem.
We shall formally define this as a discrete total search problem,
using a standard construction to avoid the ``promise'' that $f$ is monotone.

Recall that a general discrete {\em total search problem}
(with polynomially bounded outputs), $\Pi$, 
has a set of {\em valid input instances} $D_\Pi \subseteq \{0,1\}^*$, and
associates with each valid input instance  
$I \in D_\Pi$, a  non-empty set ${\mathcal O}_I
\subseteq \{0,1\}^{p_\Pi(|I|)}$ of {\em acceptable outputs}, 
where $p_\Pi(\cdot)$ is some polynomial.  (So the bit encoding length of
every acceptable output is polynomially bounded in the bit encoding 
length of the input $I$.)

We are interested in the complexity of the following total search 
problem:

\begin{definition}{{\large $\Tarski{}${\rm :}}}

\vspace*{0.1in}

{\bf Input:} A function $f: [N]^d \rightarrow [N]^d$
with $N = 2^n$ for some $n \geq 1$, 
given by a boolean circuit, $C_f$, 
with $(d \cdot n)$ input gates and $(d \cdot n)$ output gates.

{\bf Output:}   Either a (any) fixed point $x^* \in \Fix(f)$,
or else a witness pair of vectors $x,y \in [N]^d$ such that $x \leq y$ and 
$f(x) \not\leq f(y)$.

\end{definition}

Note  $\Tarski$ is a {\em total} search problem:
If $f$ is monotone, it will contain a fixed point in $[N]^d$,
and otherwise it will contain such a witness pair of vectors that
exhibit non-monotonicity. (If it is non-monotone it 
may of course have both witnesses for
non-monotonicity and fixed points; it is important that in this case either kind of output will do.\footnote{Indeed, deciding whether such a function given by a boolean circuit is non-monotone in easily seen to be \NP-complete.})  Note that if we restricted $\Tarski$ to the subproblem $\Tarski_d$, where the dimension $d$ of the input function is assumed to be a fixed constant $d$, not dependent on the input, then for each fixed $d$ the problem $\Tarski_d$ is solvable in polynomial time, using the already mentioned $O(\log^d N)$ time algorithm.

\subsection*{$\Tarski{} \in \PLS{}$}

Recall that a
total search problem, $\Pi$, is in the
complexity class \PLS{} ({\em Polynomial Local Search})
if it satisfies all of the following conditions (see \cite{JPY88,Yan97}):

\begin{enumerate}
\item  For each valid input instance $I \in D_\Pi \subseteq \{0,1\}^*$ 
of $\Pi$, there is an associated non-empty  
set $S_I \subseteq \{0,1\}^{p(|I|)}$ of  {\em solutions},
and an associated {\em payoff function}\footnote{Or, cost function, if 
we were considering local minimization. But here
we focus on local maximization.}, 
$g_I: S_I \rightarrow {\mathbb Z}$.  
For each $s \in S_I$,  
there is an associated set of {\em neighbors}, 
${\mathcal N}_I(s) \subseteq S_I$.

A solution $s \in S_I$ is called 
a {\em local optimum} (local maximum) if  for all $s' \in {\mathcal N}_I(s)$,
$g_I(s) \geq g_I(s')$.   We let ${\mathcal O}_I$ denote the set of all 
local optima for instance $I$.  (Clearly ${\mathcal O}_I$ is 
non-empty, because $S_I$ is non-empty.)

\item  There is a polynomial time algorithm, $A_{\Pi}$,
that given a string $I \in \{0,1\}^*$, decides whether $I$
is a valid input instance $I \in D_{\Pi}$,  and if so outputs 
some solution $s_0 \in S_I$.

\item There is a polynomial time algorithm, $B_{\Pi}$, that
given valid instance $I \in D_{\Pi}$ and a string 
$s \in \{0,1\}^{p(|I|)}$,  
 decides whether  $s \in S_I$, and if so, outputs the
payoff $g_I(s)$. 

\item There is a polynomial time algorithm, $H_\Pi$, that
given valid instance $I \in D_{\Pi}$ and $s \in S_I$,
decides whether $s$ is a local optimum, i.e., whether $s \in {\mathcal O}_{I}$,
and otherwise computes a strictly improving 
neighbor $s' \in {\mathcal N}_I(s)$, such
that $g_I(s') > g_I(s)$.
\end{enumerate}

\begin{theorem} $\Tarski{}  \in  \PLS{}$.
\label{thm:tarski-in-pls}
\end{theorem}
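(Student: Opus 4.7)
The plan is to cast the Tarski problem directly as polynomial local search by using Tarski's own value iteration as the neighbor relation and the sum of coordinates as the potential. The starting point $\bar{1}$ satisfies $\bar{1} \leq f(\bar{1})$ trivially, so the ``ascending'' points $x$ with $x \leq f(x)$ form a nonempty set on which one step of $f$ is guaranteed to strictly increase $\sum_i x_i$ provided $f$ behaves monotonely along the way; when it fails to do so, we will read off a certificate of non-monotonicity for free.

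More concretely, given an input circuit $C_f$ I would define the PLS instance as follows. Every $I \in \{0,1\}^*$ is a valid instance (the initial algorithm $A_\Pi$ simply parses the circuit and outputs $s_0 := \bar{1}$), and the solution set is
\[
S_I \;=\; \{\, x \in [N]^d \;:\; x \leq f(x)\,\}.
\]
The payoff is $g_I(x) = \sum_{i=1}^d x_i$, and the neighborhood is defined by ${\mathcal N}_I(x) = \{f(x)\}$ when $f(x) \in S_I$ and $f(x) \neq x$, and ${\mathcal N}_I(x) = \emptyset$ otherwise. Each of the bookkeeping predicates $B_\Pi$ and $H_\Pi$ reduces to a constant number of circuit evaluations and coordinate-wise comparisons of $d\log N$-bit vectors, so each is polynomial time. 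Since $\bar{1} \in S_I$ always, the set of solutions is nonempty as required.

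Next I would verify the two things that make this PLS formulation correctly solve $\Tarski$. First, each improving move strictly raises the payoff: if $x \in S_I$ and $f(x) \neq x$, then $x \leq f(x)$ and $x \neq f(x)$ force $\sum_i f(x)_i > \sum_i x_i$, and $f(x) \in S_I$ requires $f(x) \leq f(f(x))$, hence the neighbor is legal and strictly better. Second, a local optimum $x$ falls into exactly one of two cases: either $f(x) = x$, in which case $x$ is a Tarski fixed point, or $f(x) \neq x$ and $f(x) \notin S_I$, which means $x \leq f(x)$ while $f(x) \not\leq f(f(x))$; then the pair $(x, f(x))$ is precisely the witness of non-monotonicity demanded by the $\Tarski$ output specification. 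So a polynomial-time reduction from the local optima of this PLS problem to $\Tarski$ outputs is immediate.

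The only part that requires any care, rather than being routine, is the handling of non-monotone inputs: one must make sure the neighbor function $H_\Pi$ is still totally defined and polynomially computable on arbitrary circuits, and that every local optimum translates to an acceptable output (not merely a fixed point). The definition above takes care of this because $H_\Pi$ only ever looks at $f(x)$ and $f(f(x))$, computes the comparisons $x \leq f(x)$ and $f(x) \leq f(f(x))$, and branches accordingly, so it never gets stuck; and the case analysis in the previous paragraph exhausts both kinds of acceptable outputs. I do not expect any real obstacle here beyond being pedantic about these definitions, and the proof should amount to stating the construction and verifying these two observations.
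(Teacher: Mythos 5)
Your proposal is correct and matches the paper's proof in all essentials: the same solution set of ascending points $\{x : x \leq f(x)\}$, the same potential $\sum_{i} x_i$, and the same neighbor relation given by one application of $f$. The only cosmetic difference is that the paper additionally packs the non-monotonicity witness pairs $(x,y)$ directly into the PLS solution set (assigning them the maximal payoff $dN+1$ and an empty neighborhood) so that the local optima coincide \emph{verbatim} with the acceptable outputs of $\Tarski$, whereas you leave such pairs out and instead append a trivial final translation mapping a non-fixed-point local optimum $x$ to the pair $(x, f(x))$; both handle the non-monotone case correctly.
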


\begin{proof}
Each valid input instance $I_f \in D_{\Tarski} \subseteq
\{0,1\}^*$ of \Tarski{} is an encoding of a 
function $f: [N]^d \rightarrow  [N]^d$
via a boolean circuit $C_f$.
We can view the problem $\Tarski{}$
as a polynomial local search problem, as follows:

\begin{enumerate}
\item Define the set of ``solutions'' associated with valid input $I_f$ to be 
the disjoint union $S_{I_f}= S'_{I_f} \cup S''_{I_f}$,
where  $S'_{I_f} = 
\{ x \in [N]^d \mid  x \leq f(x) \}$
and $S''_{I_f} = \{ (x,y) \in [N]^d \times [N]^d \mid  x \leq y  \wedge
f(x) \not\leq f(y) \} $.
Clearly,  $\Fix(f) \subseteq S'_{I_f} \subseteq S_{I_f}$.
Let the payoff function $g_{I_f}: S_{I_f} \rightarrow {\mathbb Z}$,
be defined as follows.  For $x \in S'_{I_f}$,
$g_{I_f}(x) := \sum_{i=1}^d x_i$;  for $(x,y) \in S''_{I_f}$,
$g_{I_f}(x,y) := dN+1$.
We define the neighbors of solutions as follows. 
For any $x \in S'_{I_f}$,  if $f(x) \leq f(f(x))$ then
let the neighbors of $x$ be 
the singleton-set ${\mathcal N}_{I_f}(x)
:= \{ f(x) \}$.   Note that in this case again $f(x) \in S'_{I_f}$.
Otherwise, if $f(x) \not\leq f(f(x))$, then
let ${\mathcal N}_{I_f}(x)
:= \{ (x, f(x)) \}$. Note that in this case $(x,f(x)) \in S''_{I_f}$,
since $f(x) \not\leq f(f(x))$. 
  For $(x,y) \in S''_{I_f}$, let  ${\mathcal N}_{I_f}(x,y) 
:= \emptyset$ be the empty set.
Thus, the set of local optima is by definition 
${\mathcal O}_{I_f} = \{ x \in S'_{I_f} \mid   \sum^d_{i=1} x_i \geq 
\sum_{i=1}^d f_i(x) \}  \cup S''_{I_f} $.

Observe that in fact ${\mathcal O}_{I_f} = \Fix(f) \cup S''_{I_f}$.
Indeed,  if $x \in {\mathcal O}_{I_f}$ then $x \in S'_{I_f}$ 
meaning $x \leq f(x)$, and also
$\sum^d_{i=1} x_i \geq \sum_{i=1}^d f_i(x)$.  But this is
only possible if $f(x) = x$, i.e., $x \in \Fix(f)$.
Likewise, if $(x,y) \in {\mathcal O}_{I_f}$ then $(x,y) \in S''_{I_f}$.
On the other hand, if $x \in \Fix(f)$, then clearly $x \in S'_{I_f}$
and $\sum^d_{i=1} x_i = \sum_{i=1}^d f_i(x)$, hence $x \in
{\mathcal O}_{I_f}$.

\item There is a polynomial time algorithm $A_{\Tarski}$ that,
given a string $I_f \in \{0,1\}^*$ first determines whether
this is a valid input instance, by checking that it suitably encodes
a boolean circuit (straight-line program) 
$C_f$ with $(d \log N )$ input gates and the same number of
output gates, 
and thereby defines a function $f:[N]^d \rightarrow [N]^d$.
If the input is a valid instance, then $A_{\Tarski}$
outputs a solution $s_0 \in S_{I_f}$,
by just letting $s_0 := {\mathbf 1} \in [N]^d$
be the all 1 vector.
Clearly,  ${\mathbf 1} \leq f({\mathbf 1})$, so indeed 
${\mathbf 1} \in S'_{I_f} \subseteq S_{I_f}$.

\item There is a polynomial time algorithm $B_{\Tarski}$ that,
given a valid instance $I_f$ and given
a string $s \in \{0,1\}^*$,
first decides
whether $s \in S_{I_f}$.
It does so as follows:  if $s$
is (a binary encoding of) $x \in [N]^d$, 
then $B_{\Tarski}$  computes
$f(x)$ using the given boolean
circuit $C_f$ (encoded in instance $I_f$), and checking whether $x \leq f(x)$.
If instead $s = (x,y) \in [N]^d \times [N]^d$,  then it checks whether
$(x,y)$ is a witness of non-monotonicity, by computing $f(x)$ and
$f(y)$ using $C_f$, and checking that both $x \leq y$ and $f(x) \not\leq f(y)$
hold.

If $s \in S_{I_f}$, 
the algorithm can also easily output
the value of the objective $g_{I_f}(s)$.
Namely, if $s = x \in S'_{I_f}$, then
$g_{I_f}(s) := \sum_{i=1}^d x_i$, and if $s = (x,y) \in S''_{I_f}$
then $g_{I_f}(s) := dN + 1$.

\item Finally, there is a polynomial time algorithm $H_{\Tarski}$
that, given an instance $I_f \in D_{\Tarski}$ and a solution $s \in S_{I_f}$,
decides whether $s \in {\mathcal O}_{I_f} = \Fix(f) \cup S''_{I_f}$,
and otherwise outputs $s' \in {\mathcal N}_{I_f}(s)$,
such that $g_{I_f}(s') > g_{I_f}(s)$.
Firstly, if $s = (x,y) \in S''_{I_f}$ (which we can check as in the prior
item), then clearly $s \in {\mathcal O}_{I_f}$
and there is nothing more to do.
If on the other hand $s = x \in S'_{I_f}$,
the algorithm uses the given circuit $C_f$ to compute $f(x)$,
checks first whether $f(x)=x$. If so, we are done.
If not, it checks whether $f(x) \leq f(f(x))$  and if so 
it outputs $x' := f(x)$. 
In this case, since $x \leq f(x)$ and $x \neq f(x)$, we indeed have
strictly improved the objective:
$g_{I_f}(x') = \sum_{i=1}^d f_i(x) >  \sum_{i=1}^d x_i = g_{I_f}(x)$.
Finally, if $f(x) \not\leq f(f(x))$ it outputs the pair 
$(x,f(x))$.   Note that in this case $(x,f(x)) \in {\mathcal N}_{I_f}(x)$,
and that we do strictly improve the objective value, since
$g_{I_f}((x,f(x)) = dN+1 > \sum_{i=1}^d N \geq  \sum_{i=1}^d x_i
= g_{I_f}(x)$.
\end{enumerate}

We have thus shown that $\Tarski{}$ satisfies
all the conditions of being in $\PLS{}$.
\end{proof}

\subsection*{$\Tarski{} \in \PPAD$}

To show that $\Tarski \in \PPAD$,  we
first show that $\Tarski \in \FP^{\PPAD}$ meaning that the total search
problem $\Tarski{}$ can be solved by a polynomial time algorithm,
${\mathcal M}$, with oracle access to $\PPAD$.  The algorithm
${\mathcal M}$ should take an input $I_f \in \{0,1\}^*$, and firstly
decide whether it is a valid instance $I_f \in D_\Tarski$, and if so
it can make repeated, adaptive, calls to an oracle for solving a
$\PPAD$ total search problem.  After at most polynomial time (and
hence polynomially many such oracle calls) as a function of the input
size $|I_f|$, ${\mathcal M}$ should output either an integer vector $x
\in \Fix(f)$, or else output a pair of vectors $x,y \in [N]^d$ with $x
\leq y$ and $f(x) \not\leq f(y)$, which witness non-monotonicity of
the function $f: [N]^d \rightarrow [N]^d$ defined by the input
instance $I_f$.

Once we have established that $\Tarski \in \FP^{\PPAD}$, the
fact that $\Tarski \in \PPAD$ follows as a simple corollary,
using a result due to Buss and Johnson \cite{BussJohnson12}
that $\PPAD$ is closed under polynomial-time Turing
reductions.

There are a number of equivalent ways to define the total search 
complexity class $\PPAD$.   
Rather than give the original definition (\cite{Pap94}),
we will use an equivalent characterization of $\PPAD$
(a.k.a., $\mathsf{linear}$-$\mathsf{FIXP}$) from 
\cite{EY2010} (see  section 5 of \cite{EY2010}).
Informally, according to this characterization, 
a discrete total search problem, $\Pi$, is in $\PPAD$ if and only if
it can be reduced in P-time to computing a Brouwer fixed point
of an associated ``polynomial piecewise-linear'' continuous function that maps a non-empty
convex polytope to itself.
More formally, $\Pi$ is in $\PPAD$ if it satisfies
all of the following conditions:

\begin{enumerate}
\item Each valid instance $I \in D_{\Pi} \subseteq \{0,1\}^*$ can be associated
with a ``polynomial-time definable'' (see below)
piecewise-linear continuous function $F_I: W(I) \rightarrow W(I)$. 
Here $W(I) \subseteq {\mathbb R}^{d_I}$ is a non-empty (rational) convex polytope.

\item There is a polynomial time algorithm, $R^1_\Pi$, that, given
a string $I \in \{0,1\}^{d_I}$, first decides whether $I$ is a valid instance in $D_{\Pi}$
of $\Pi$,  and if so, outputs a rational matrix $A^I \in \rat^{m_I \times d_I}$
and a rational vector $b^{I} \in \rat^{d_I}$, 
such that $W(I) = \{ x \in \real^{d_I} \mid  A^I x \leq b^I \}$ is a non-empty convex polytope.

\item There is a polynomial time oracle algorithm, $R^2_\Pi$,
that ``computes'' the piecewise-linear function $F_I$ in the following sense.

For any real vector $x \in \real^{d_I}$, consider an oracle
$O_x$ with the following property:
when  $O_x$ is called with a  
rational vector $a \in \rat^{d_I}$ and a rational value $c \in \rat$,
then 
the oracle outputs $O_x(a,c) = {\mathrm{TRUE}}$ if $a^T x \leq c$,
and otherwise it outputs $O_x(a,c) = {\mathrm{FALSE}}$.

$R^2_\Pi(I,O_x)$, runs in time polynomial in $|I|$,
and hence makes $poly(|I|)$ many calls to the oracle $O_x$,
for any $x \in \real^{d_I}$.
When given as input a valid instance $I \in D_\Pi$ and oracle
access to  $O_x$ for some $x \in \real^{d_I}$,  
$R^2_\Pi(I,O_x)$
outputs  ``No'' if
$x \not\in W(I)$, and otherwise, if $x \in W(I)$, then it outputs a rational 
matrix $C \in \rat^{d_I \times d_I}$, and a rational vector $C' \in \rat^{d_I}$,
such that $F_I(x) = C x + C' \in W(I)$.   (Note that since
$R^2_\Pi$ runs in polynomial time in the input size $|I|$, 
the bit encoding
sizes of the coefficients in $C$ and $C'$ are polynomial in $|I|$.)

Note that in this sense $R^2_{\Pi}$ does indeed define the piecewise-linear function
$F_I : W(I) \rightarrow W(I)$.   Specifically,  for $x \in W(I)$, 
the sequence of (polynomially many) 
oracle queries  made by $R^2_{\Pi}(I,O_x)$  defines
a system of linear inequalities (with rational, polynomially bounded, coefficients) 
satisfied by $x$ which define a ``piece'' or ``cell''
such that $x \in {\mathcal C}_x \subseteq W(I)$, and
such that $F_I$ is linear on ${\mathcal C}_x$; specifically 
such that for any $y \in {\mathcal C}_x$,
$F_I(y) =  C y + C'$.

\item  There is a polynomial time algorithm $R^3_\Pi$ that, given an instance $I \in D_\Pi$,
and given any rational fixed point $x^* \in \Fix(F_I) \cap \rat^{d_I}$,  
outputs an acceptable output in
${\mathcal O}_I$ for the instance $I$ of the total search problem $\Pi$.
\end{enumerate}

By Brouwer's theorem, the set
 $\Fix(F_I) = \{ x \in W(I) \mid  F_I(x) = x \}$ of fixed points of $F_I$ is non-empty.    
Moreover,
because of the ``polynomial piecewise-linear'' nature of $F_I$, 
$\Fix(F_I)$ must also contain a 
{\em rational} fixed point
$x^* \in {\mathbb Q}^{d_I}$, with polynomial bit complexity as a function of $|I|$ 
(see \cite{EY2010}, Theorem 5.2).   See \cite{EY2010}, section 5, for more details on
this characterization of \PPAD.

Given two vectors $l \leq h \in {\mathbb Z}^d$,  let 
$L(l,h) = \{x \in {\mathbb Z}^d \mid  l \leq x \leq h \}$, and
let $B(l,h) =\{ x \in {\mathbb R}^d \mid l \leq x \leq h \}$.

\medskip

\begin{theorem} 
$\Tarski \in \FP^{\PPAD}$.
\label{thm:tarski-in-P-to-PPAD}
\end{theorem}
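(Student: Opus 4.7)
The plan is to design a polynomial-time algorithm $\mathcal{M}$ with adaptive access to a $\PPAD$ oracle that solves $\Tarski$. I will frame each oracle invocation as a $\linearFIXP$ instance in the sense of \cite{EY2010} and use the rational Brouwer fixed point returned by the oracle either to extract a discrete Tarski fixed point directly, to output a non-monotonicity witness, or to strictly shrink the search region and iterate.

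The first step is to construct a canonical continuous piecewise-linear extension $\tilde f : [1,N]^d \to [1,N]^d$ of $f$ via Kuhn's simplicial decomposition: each integer unit cube splits into $d!$ simplices, one per permutation $\pi$ of $[d]$, each with a chain of integer vertices $v_0 \le v_1 \le \cdots \le v_d$ in the coordinate-wise order (where $v_k = v_{k-1} + e_{\pi(k)}$), and $\tilde f$ is defined as the affine interpolation of $f(v_0), \ldots, f(v_d)$ on each simplex. Continuity across shared faces follows from the standard Kuhn property, and I would verify that $\tilde f$ forms a valid $\linearFIXP$ instance: given a rational $x$, the containing Kuhn simplex is identified by computing $\lfloor x\rfloor$ and sorting the fractional coordinates via polynomially many linear-comparison oracle calls, after which $C_f$ supplies the values at the $d+1$ vertices and hence the coefficients of the local affine map.

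The algorithm $\mathcal{M}$ maintains an integer sub-box $B_t = [l_t, h_t] \subseteq [1,N]^d$, initially $[1,N]^d$, invariantly guaranteed to contain a fixed point of the clipped function $\hat f_{B_t}(x) := \max(l_t, \min(h_t, f(x)))$. At stage $t$, $\mathcal{M}$ invokes the $\PPAD$ oracle on the $\linearFIXP$ instance for the Kuhn extension of $\hat f_{B_t}$, receiving a rational fixed point $x^*_t \in B_t$ with its containing Kuhn simplex $S_t$ (vertices $v_0 \le \cdots \le v_d$ and barycentric coordinates $\lambda_0, \ldots, \lambda_d$). Using $C_f$, $\mathcal{M}$ evaluates $f$ at each $v_i$ and branches: if some $f(v_i) = v_i$, output this integer fixed point; if some comparable pair $v_i \le v_j$ in the chain has $f(v_i) \not\le f(v_j)$, output the non-monotonicity witness; otherwise, use the fixed-point identity $\sum_i \lambda_i (f(v_i) - v_i) = 0$ together with monotonicity of $f$ on $v_0, \ldots, v_d$ to identify a strict sub-box $B_{t+1} \subsetneq B_t$ that still contains a fixed point, and recurse.

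The main obstacles are twofold. Identifying the refined sub-box $B_{t+1}$ is the technical heart of the argument: in the easy cases $f(v_0) \le v_0$ or $f(v_d) \ge v_d$, $\mathcal{M}$ shrinks to $[l_t, v_0]$ or $[v_d, h_t]$ respectively (each of which still contains a fixed point via Tarski on the sub-lattice); more generally, one exploits the structural fact that each coordinate sequence $(f_j(v_i) - (v_i)_j)_{i=0}^d$ is nondecreasing except for at most a unit dip at $i = \pi^{-1}(j)$, combining this with the fractional fixed-point identity to pin down a coordinate $j$ and an integer level $c$ along which the sign of $f_j - (\cdot)_j$ reverses, and restricting to the slice $\{x : x_j = c\}$ yields a Tarski instance in one fewer dimension. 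Bounding the total number of oracle calls by $\mathrm{poly}(d, \log N)$ is the second challenge; I would interleave the above refinements with explicit bisection on the longest coordinate of $B_t$ (as in the existing $O((\log N)^d)$ divide-and-conquer algorithm) to guarantee that the diameter of $B_t$ drops by a constant factor between successive stages, giving an $O(d \log N)$ bound on oracle calls overall. The corollary $\Tarski \in \PPAD$ then follows immediately from Buss--Johnson's closure of $\PPAD$ under polynomial-time Turing reductions \cite{BussJohnson12}.
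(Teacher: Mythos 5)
Your high-level framework matches the paper's: extend $f$ to a piecewise-linear Brouwer map via the Kuhn/Freudenthal simplicial decomposition, invoke a $\PPAD{}$ oracle to get a rational fixed point $x^*$, and use the simplex containing $x^*$ either to find an integer fixed point, extract a non-monotonicity witness from the totally-ordered chain, or shrink the box and recurse on a clipped/thresholded restriction of $f$. That skeleton is correct.

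The gap is in the shrinking step, which you yourself flag as ``the technical heart.'' You work with $v_0$ and $v_d$, the extreme vertices of the \emph{full} Kuhn simplex containing $x^*$, and you call $f(v_0)\le v_0$ or $f(v_d)\ge v_d$ the ``easy cases.'' But neither condition is guaranteed to hold, since $\lambda_0$ or $\lambda_d$ may be zero. The right objects are the extreme vertices $v\le u$ of the \emph{minimal face} of the simplex that contains $x^*$ in its strict interior (i.e., the vertices with strictly positive barycentric weight). For those, one can actually \emph{prove} that if $f$ is monotone on the chain then $f(v)\le v$ and $f(u)\ge u$: if, say, $f_i(u)<u_i$, then $f_i(u)\le u_i-1$, and monotonicity forces $f_i(y)\le u_i-1\le y_i$ for every vertex $y$ of the face (any two of which differ by at most $1$ per coordinate), which together with $\lambda_u>0$ contradicts the fixed-point identity $\sum\lambda_j y^j_i=\sum\lambda_j f_i(y^j)$. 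So this is not an ``easy case''---it is the \emph{only} case (failure of it hands you a witness pair). You then recurse on the smaller of the two \emph{disjoint} sub-boxes $[l_t,v]$ and $[u,h_t]$ (disjoint because $v<u$ strictly, since $x^*$ is non-integer and the face has $\ge 2$ vertices), which halves the number of lattice points per iteration and gives the $O(d\log N)$ bound without any auxiliary bisection.

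Your ``more general'' fallback---pinning a coordinate $j$ and level $c$ and ``restricting to the slice $\{x:x_j=c\}$''---does not work as stated: fixing $x_j=c$ and projecting gives a $(d-1)$-dimensional monotone map whose fixed point $z^*$ has no reason to satisfy $f_j(z^*,c)=c$, so it does not lift to a fixed point of $f$, and the slice is not a self-map sub-lattice. The standard $(\log N)^d$ divide-and-conquer handles this by binary search on $c$, but that is not what a single sub-box refinement accomplishes. One further detail you gloss over: after clipping to $B_t$, if $f$ is not monotone a vertex $y$ of the returned face may satisfy $f(y)\not\in B_t$; the paper handles this by observing that such a $y$ together with the corner $l_t$ or $h_t$ already yields a monotonicity violation (since the corners were certified $f(l_t)\ge l_t$, $f(h_t)\le h_t$ at the previous level), and that otherwise the clipping was inert. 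Your write-up needs this case to be total.
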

\begin{proof}
Suppose we are given an instance $I_f \in D_\Tarski$ of $\Tarski$,  corresponding
to a function $f: [N]^d \rightarrow [N]^d$  (given by a boolean circuit $C_f$).

Let $a = {\mathbf 1} \in {\mathbb Z}^d$, and $b = {\mathbf N} \in {\mathbb Z}^d$, denote
the all $1$, and all $N$, vectors respectively.
Thus $L(a,b) = [N]^d$.
We first extend the discrete function $f: L(a,b) \rightarrow L(a,b)$ to a (polynomial piecewise-linear) continuous function 
$f': B(a,b) \rightarrow B(a,b)$, by a suitable linear interpolation.
By Brouwer's theorem,  $f'$ has a fixed point in $B(a,b)$.
However,  $f'$ may have non-integer fixed points that do not
correspond to (and are not close to)  any fixed point of $f$  (indeed, since we do
not apriori know that $f$ is monotone, there may not be any integer fixed points). 

Nevertheless, we will show that finding any such fixed point of $f'$ allows us
to make progress (via a divide and conquer binary search), towards either finding 
a discrete fixed point of $f$ (if it is monotone), or finding witnesses for a violation
of monotonicity of $f$.

We now define $f'$ in detail.  
Consider the following simplicial decomposition\footnote{Known as
{\em Freudenthal's simplicial division} \cite{Freud42}.}
 of $B(a,b)$. 
For each $i \in [d]$, let $e^i \in \{0,1\}^d$ denote
the standard unit vector with $0$'s in every coordinate except a $1$ in the $i$'th coordinate.
For each integer vector $y \in L(a,b-{\mathbf 1})$,  
and for every permutation $\pi = (\pi_1,\ldots,\pi_d)$ of $[d]$,
define the subsimplex $S^{y,\pi}$ as the convex hull of the following $d+1$
(affinely independent) vertices  $y^0, \ldots,y^{d+1} \in L(a,b)$, given by
$y^0 = y$,  and for $i \in \{1,\ldots, d\}$, $y^i = y^{i-1} + e^{\pi_i}$.

The union of all $d!$ simplices $\{S^{y,\pi} \mid \pi \ \mbox{is a permutation of $[d]$}\}$ 
constitutes a simplicial subdivision of the d-cube $B(y, y+{\mathbf 1})$,
and the union of all such simplices, for all $y \in L(a,b-{\mathbf 1})$
constitutes a simplicial subdivision of $B(a,b)$.
Note the following important property of this simplicial subdivision, which we 
exploit:  the vertices $y^{0}, y^{1}, \ldots, y^{d}$ of each subsimplex $S^{y,\pi}$ are totally
ordered with respect to coordinate-wise order: $y^0 \leq y^{1} \leq y^{2} \leq \ldots \leq y^{d}$.

Given this simplicial subdivision of $B(a,b)$,  we define $f': B(a,b) \rightarrow B(a,b)$
so that it linearly interpolates $f$ inside each subsimplex $S^{y,\pi}$.
Specifically,  
for any point $x \in S^{y,\pi}$,  there is a unique vector 
$\lambda = (\lambda_0, \lambda_1, \ldots, \lambda_{d}) \in [0,1]^{d+1}$, such that $\sum_{j=0}^d 
\lambda_j = 1$,  and  such that $x = \sum_{j=0}^d \lambda_j y^j$.
We define $f'(x) := \sum_{j=0}^d \lambda_j f(y^j)$.
Note that $f'$ agrees with $f$ on integer points in $L(a,b)$.
Also if $x$ belongs to several $x \in S^{y,\pi}$ (i.e. lies on some common
faces of the subsimplices), then only the common vertices
will have nonzero coefficients in any subsimplex, thus they all yield the same value
for $f'(x)$, and hence $f'(x)$ is consistently defined. 

Our next task is to show that computing a rational fixed point of $f'(x)$ is in $\PPAD$, which
will allow us to use the $\PPAD$ oracle to find such a rational fixed point.
Applying the definition of $\PPAD$ we have given above, all we need to do is to 
specify a polynomial time oracle algorithm that, given oracle access $O_x$ to
some $x \in \real^d$, can first locate the subsimplex $S^{y,\pi}$ such that
$x \in S^{y,\pi}$ (or report that $x$ is not in the domain $B(a,b)$), and 
then compute the matrix $C$ and vector $C'$ that specify the affine transformation
such that $f'(x) = Cx + C'$.    It was explained in \cite{EY2010}
(see page 2583, second paragraph) how to do this
for a standard simplicial decomposition, and essentially the same
approach works for the simplicial decomposition we are using here.

Thus, $f'$ is a polynomial piecewise-linear Brouwer function, 
and we can compute a 
rational fixed point  $x^* \in \Fix(f')$ for it in $\PPAD$.
If $x^*$ is an integer vector, we are done: we have found a fixed point of $f$.

Suppose, on the other hand, that the computed fixed point $x^*$ of $f'$ is non-integer
in some coordinate.  It is 
still useful.  
Consider the cell $C \subseteq S^{y,\pi}$, 
defined as the convex hull of the unique subset $Y' = \{y^{j_1}, y^{j_2}, \ldots, y^{j_k}\}$ 
of the vertices $Y= \{y^0, \ldots,y^{d}\}$
of $S^{y,\pi}$,
such that $C$  contains $x^*$ in its {\em strict} interior.
In other words, $x^* = \sum_{r=1}^k \lambda_{j_r} y^{j_r}$, such that $0 < \lambda_{j_r} < 1$
for all $r \in \{1,\ldots,k\}$.
Let $u= y^{j_t}$ be the 
maximum vertex of $C$, and let
$v= y^{j_q}$ be the minimum vertex of $C$  (the vertices of $C$ are ordered since
they are a subset of the vertices of
$S^{y,\pi}$). 

Suppose that $f$ is monotone and $f(u)_i < u_i$ for some coordinate $i$.
Then $f(u)_i \leq u_i-1$ because $f(u)_i$ is an integer.
Furthermore, for all vertices $y^{j_r}$ of $C$, since $y^{j_r} \leq u = y^{j_t}$, 
we must also have $f_i(y^{j_r}) \leq f_i(u) \leq u_i-1 \leq y^{j_r}_i$
(where the last inequality holds because two vertices of $C$  differ
 in any given coordinate by at most 1).
But we have $\sum_{r=1}^k \lambda_{j_r} y^{j_r}_i = x^*_i
= f'(x^*)_i = \sum_{r=1}^k \lambda_{j_r} f(y^{j_r})_i$,
which is impossible, since $y^{j_r}_i \geq f(y^{j_r})_i$
for every $r$, and $y^{j_t}_i > f(y^{j_t})_i$,
and $\lambda_{j_t} > 0$.
Thus, since $x^*$ is a fixed point of $f'$,
it cannot be the case that $f$ is monotone and $f(u)_i < u_i$ for
some coordinate $i$.
Therefore, if $f$ is monotone, then
$f(u) \geq u$ (in all coordinates).
For a completely analogous reason, if $f$ is monotone, we also have $f(v) \leq v$.

Suppose, on the other hand we either find that $f(u) \not\geq u$,  or
that $f(v) \not\leq v$.   Then necessarily, it must be the case that there 
are a pair of vertices $y^{j_b}, y^{j_e}$ of the cell $C$ containing $x^*$ in its interior,
such that $y^{j_b} \leq y^{j_e}$ but $f(y^{j_b}) \not\leq f(y^{j_e})$.
So, in this case, we examine all such pairs to find such a pair, we 
halt and output $(y^{j_b}, y^{j_e})$ as a witness pair for
the non-monotonicity of $f$.

Assume on the other hand that $f(u) \geq u$ and $f(v) \leq v$.
Note that in that case, if $f$ is monotone, 
then it maps the sublattice $L(u,b)$ to itself,
and it also maps the disjoint sublattice $L(a,v)$ to itself.
Thus, if  $f$ is monotone, $f$ must have an integer fixed point in both 
$L(a,v)$ and $L(u,b)$.

So, we can choose the smaller of these two sublattices, 
consider the function $f$ restricted to that sublattice,
and continue recursively to find a fixed point in that sublattice (if $f$ is monotone)
or a violation of monotonicity.
If $f$ is not monotone, it is possible that it maps some points in the sublattice
$L(a,v)$ (or $L(u,b)$) to points outside. 
Therefore, in the recursive call for the sublattice,
when we define the piecewise-linear function $f'$ on the corresponding
box $B(a,v)$ (or $B(u,b)$) we take the maximum with $a$ and minimum with $v$
(or $u$ and $b$ respectively),
i.e., threshold it, so that it maps the box to itself, and hence it
is a Brouwer function.
When the $\PPAD{}$ oracle gives us back a fixed point $x^*$ for this
(possibly thresholded) function $f'$, we find the vertices $y^{j_r}$ of 
the cell $C$ that contains $x^*$ in its strict interior
(i.e. the ones that have nonzero coefficients in the convex combination)
and test if $f$ maps all of them within the current box.
If this is not the case then we get a violation of monotonicity:
Suppose wlog that the current box is $B(a,v)$ (similarly if it is $B(u,b)$).
If $f(y^{j_r}) \not\geq a$ then $(a,y^{j_r})$ is a violating pair 
because $f(a) \geq a$; if $f(y^{j_r}) \not\leq v$ then
$(y^{j_r},v)$ is a violating pair because $f(v) \leq v$.
Thus, if $f(y^{j_r})$ lies outside the current box, then
we return the discovered violating pair and terminate.
Otherwise, the thresholding did not affect the $f(y^{j_r})$ and $f'(x^*)$
and we proceed as explained above.
  
Every iteration decreases the
total number of points in our current lattice by a factor of $2$, from the number of points $N^d$
in the original lattice $L(a,b)$.
So after at most $\log(N^d) = d \log N$ iterations, we either
find a fixed point of $f$, or we find a witness pair of integer vectors that witness the non-monotonicity
of $f$, or else in the trivial base case where we have 
reduced the domain under consideration to a singleton set $L(a',b')$ with $a'=b'$ (with the inductively maintained property that $a' \leq f(a')$ and $f(b') \leq b'$), we necessarily have $a' = f(a')$, i.e., we have found a fixed point.
\end{proof}

\begin{corollary}
$\Tarski \in \PPAD$.
\label{cor:tarski-in-PPAD}
\end{corollary}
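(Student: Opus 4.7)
The plan is almost immediate given what has already been established. Theorem \ref{thm:tarski-in-P-to-PPAD} shows that $\Tarski \in P^{\PPAD}$, i.e., there is a polynomial-time algorithm that solves $\Tarski{}$ using adaptive oracle calls to some $\PPAD$ problem (in that construction, the canonical ``piecewise-linear Brouwer'' problem). So what remains is purely a closure statement: a class-level fact about $\PPAD$ rather than anything specific to monotone functions.

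The key ingredient I would invoke is the result of Buss and Johnson \cite{BussJohnson12}, cited in the introduction, which establishes that $\PPAD$ is closed under polynomial-time Turing reductions, i.e., $P^{\PPAD} = \PPAD$ (at the level of total search problems). This is nontrivial in general for $\mathsf{TFNP}$ subclasses, but is exactly what we need here.

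With that in hand, the plan for the proof of the corollary is simply: invoke Theorem \ref{thm:tarski-in-P-to-PPAD} to get $\Tarski \in P^{\PPAD}$, then apply the Buss--Johnson closure result to conclude $\Tarski \in \PPAD$. No further construction is needed. The ``main obstacle'' here is really just to cite the closure theorem correctly; there is no combinatorial content left to prove at this stage, since the heavy lifting — designing the piecewise-linear interpolation $f'$, the simplicial subdivision, and the divide-and-conquer recursion on sublattices driven by an oracle-returned rational fixed point — has already been carried out in the proof of Theorem \ref{thm:tarski-in-P-to-PPAD}.
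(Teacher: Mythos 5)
Your proposal is correct and matches the paper's proof exactly: both deduce $\Tarski \in \PPAD$ by combining Theorem \ref{thm:tarski-in-P-to-PPAD} (that $\Tarski \in P^{\PPAD}$) with the Buss--Johnson result that $\PPAD$ is closed under polynomial-time Turing reductions. Nothing further is needed.
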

\begin{proof}
This follows immediately from Theorem \ref{thm:tarski-in-P-to-PPAD},
combined with a result due to Buss and Johnson 
(\cite{BussJohnson12}, Theorem 6.1),  
that $\PPAD$ is closed under polynomial-time Turing 
reductions.
Indeed, Theorem 6.1 in \cite{BussJohnson12}
asserts that several different total search complexity classes are 
closed under P-time Turing reductions, including
$\PPAD$, $\PLS$, and $\PPA$.\footnote{Whereas, in \cite{BussJohnson12} they left open whether $\PPP$
is closed under P-time Turing reductions, and conjectured that it is not. 
More recently, \cite{FGPR24} provide 
evidence in the black-box setting, strongly suggesting that $\PPP$ may not be closed under P-time Turing reductions.}
However, their proof of Theorem 6.1 only shows this for the case of $\PLS$, 
and they say ``the others are similar and are left to the reader''.
Likewise,  Johnson's PhD thesis \cite{Johnson2011}, which contains and expands on the contents of \cite{BussJohnson12},  asserts the same theorem (\cite{Johnson2011}, Theorem 6.1.1) but again
only provides a proof for the case of $\PLS$ and says ``the others are similar and are left
to the reader''.
For the sake of completeness, in Appendix \ref{appendix:PPAD-closed-under-Turing-red} 
we provide a proof that $\PPAD$ is closed under polynomial time Turing reductions.
\end{proof}

\section{The 2-dimensional lower bound}

\label{sec:lower}

Consider a monotone function defined on the $ N \times N$ grid
$f:[N]^2\mapsto [N]^2$.  Let $A$ be any (randomized) 
{\em black-box  algorithm} for finding a fixed point of the function by computing a
sequence of queries of the form $f(x,y) = ?$; $A$ can of course be
{\em adaptive} in that any query can depend in arbitrarily complex
ways on the answers to the previous queries.  For example, the
divide-and-conquer algorithm described in the introduction is a black-box algorithm.  
The following result tells us this algorithm is
essentially optimal for two dimensions.

\begin{theorem}
\label{thm:main-2d}Given black-box access to a monotone function $f:\left[N\right]^{2}\rightarrow\left[N\right]^{2}$, any (randomized) algorithm for
finding a fixed point of $f$ requires $\Omega(\log^{2} N)$
queries  (in expectation).
\end{theorem}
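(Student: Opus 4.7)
I would use Yao's minimax principle: to show a randomized $\Omega(\log^2 N)$ lower bound with high probability, it suffices to exhibit a distribution $\mathcal{D}$ over monotone functions $f:[N]^2\to[N]^2$ such that every deterministic algorithm that makes fewer than $c\log^2 N$ oracle queries fails to output a fixed point with probability bounded away from zero. The support of $\mathcal{D}$ will be the family of herringbone functions sketched in the introduction: each instance is determined by a monotone staircase path $P$ from $(1,1)$ to $(N,N)$ together with a distinguished point $p^*\in P$, where on $P$ each point maps to its successor in the direction of $p^*$, while off-path points map diagonally by $(-1,+1)$ or $(+1,-1)$ toward $P$. The first step is to verify that this construction yields a monotone function, and to certify that $p^*$ is the unique fixed point.

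The distribution $\mathcal{D}$ must be chosen so that the fixed point is genuinely ``hidden.'' I would first draw $p^*$ uniformly from (say) the middle portion of the anti-diagonal, and then pick the path $P$ from $(1,1)$ to $p^*$ and from $p^*$ to $(N,N)$ by a balanced random construction: for instance, recursively subdivide the path's horizontal and vertical extents so that at every dyadic scale the column crossed by the path in a given row is uniformly spread within a rescaled range. This product-like structure will be crucial because it ensures that knowledge at one scale gives negligible information at others.

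The heart of the argument is an adversary that plays against a deterministic querier. It maintains a set $\mathcal{C}_t\subseteq\mathrm{supp}(\mathcal{D})$ of herringbones consistent with its past answers and, on each new query $q$, returns the value maximising the remaining weight in $\mathcal{C}_t$. A key observation is that whenever $q$ lies off the path of some instance in $\mathcal{C}_t$, the answer encodes essentially one bit (``upper-left of $P$'' vs.\ ``lower-right of $P$''), so the adversary can keep $\mathcal{C}_t$ from shrinking by more than a constant factor, except possibly when the query happens to land on $P$; the random construction of the path lets one bound the probability of such ``path hits'' over the steps of the algorithm.

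The main obstacle, and what drives the bound from $\log N$ up to $\log^2 N$, is that a naive cardinality potential ($|\mathcal{C}_t|\le N^2$) only yields $\Omega(\log N)$. To overcome this, I would design a two-level potential $\Phi_t$ that mirrors the recursive structure of the $O(\log^2 N)$ divide-and-conquer upper bound: one factor measures the remaining uncertainty about the column of $p^*$, the other measures, for each still-plausible column, the uncertainty about where the path crosses that column. Initially $\Phi_0=\Theta(\log^2 N)$, and the adversary's task is to show that a single query can decrease $\Phi_t$ by at most a constant, essentially because a query ``at column $x$'' can sharpen the row-uncertainty associated with $x$ but conveys almost no information about the row-uncertainty at other columns, and vice versa. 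Establishing this decoupling --- formally, that each query progresses on at most one axis of the two-level potential --- is what I expect to be the hard step. Once it is in hand, the lower bound follows: the algorithm cannot output (and in particular verify) $p^*$ until $\Phi_t=0$, which requires $\Omega(\log^2 N)$ queries.
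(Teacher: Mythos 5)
Your high-level plan matches the paper's: both use herringbone functions, both randomize over the choice of path and the choice of fixed point on it, and both reduce the $\Omega(\log^2 N)$ bound to a two-level argument in which an outer binary search for the fixed point's position along the path must be interleaved with an inner binary search for the path's position at each scale. But the crucial technical step --- which you yourself flag as ``the hard step,'' namely that each query advances at most one axis of your two-level potential --- is exactly where a concrete construction is needed, and your proposal does not supply one that makes the decoupling go through. The worry is that an off-path query is not a purely ``local'' bit: it tells the algorithm that the entire path avoids a quadrant, which in a generic random monotone path can simultaneously shrink the plausible row ranges at many columns. With your dyadic, self-similar random path it is not clear that this leakage is controlled, and without that, the claim that $\Phi_t$ drops by only $O(1)$ per query is unsupported.

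The paper's proof achieves the decoupling by a carefully engineered construction rather than a generic random path: the path is confined to a narrow band $x - y \in [-N^{1/4}, N^{1/4}]$ around the anti-diagonal, partitioned into $\Theta(\sqrt N)$ regions of anti-diagonal width $\sqrt N$, with the $x-y$ value re-randomized independently in each region and the transition hidden in a randomly chosen special sub-region. This makes the inner problems (locating the path inside a region) genuinely independent binary searches of depth $\Omega(\log N)$, and the outer problem (binary search for $(i^*,j^*)$ among regions) requires $\Omega(\log N)$ of them; queries in one region reveal essentially nothing about non-adjacent regions because the path's lateral offset there is fresh randomness. To complete your proposal along your own lines, you would need to either (a) adopt a similar band-plus-independent-offsets construction so that the ``one bit per off-path query'' statement becomes precise and local, or (b) actually define $\Phi_t$ and prove the constant-decrement claim, which in effect forces you to re-derive the independence structure the paper builds in by hand.
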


To prove this theorem we construct 
a ``hard distribution'',  $\mathcal{D}_{N,2}$,
on monotone functions, $f:[N]^2 \rightarrow [N]^2$, 
and show that the expected
number of queries required by any (randomized) algorithm
on a function sampled from the distribution $\mathcal{D}_{N,2}$
is at least $\Omega(\log^2 N)$.

The hard distributions $\mathcal{D}_{N,2}$ will be
over a class of monotone functions which 
we call ``herringbones", an example of 
which is depicted in Figure \ref{fig:2d-f}.
For notational convenience only, throughout our proof below we assume that $N^{1/4}$
(and hence also $\sqrt{N}$) is an integer.
This is without loss of generality:  for any positive integer
$N$,   let $N' \leq N$ be the largest integer no greater than $N$
such that $(N')^{1/4}$ is an integer.   Note that when $N$ is  large enough
($N \geq 1000$ suffices),  $N' \geq N/2$.
Given a herringbone function $f':[N']^d \rightarrow [N']^d$
(or a distribution $\mathcal{D}_{N',2}$ over such herringbone functions), 
we can extend $f'$ to the monotone function $f:[N]^2 \rightarrow [N]^2$
(or the corresponding distribution $\mathcal{D}_{N,2}$ over such monotone functions $f$,  respectively) where $f$ is
defined by letting $f(x,y) :=  f'(\min(x,N'), \min(y,N'))$, 
for all $(x,y) \in [N]^2$.  
Note that 
$f$ has 
exactly the same fixed points as $f'$, and any lower
bound for finding a fixed point for a function from the distribution
$\mathcal{D}_{N',2}$ also holds for the distribution $\mathcal{D}_{N,2}$.  We will show a lower bound of $\Omega(\log^2 N')$ 
over $\mathcal{D}_{N',2}$ when
$(N')^{1/4}$ is an integer.  Since $N' \geq N/2$ for large enough $N$, we thus also obtain a lower bound of $\Omega(\log^2 N)$ over 
$\mathcal{D}_{N,2}$.

\subsubsection*{Preliminaries: lower bound for binary search}

Before proceeding, we establish a simple lemma on the expected
number of queries required to find a random unknown 
element $i^* \in [n]$,  when the only thing
available to the algorithm is a comparison oracle that
compares the unknown
element $i^*$ with a chosen element $j \in [n]$, chosen by the algorithm,
and returns which of the three cases (A) $i^* > j$, (B) $i^* = j$, or
(C) $i^* < j$ holds.

\begin{claim}[Cost of binary search with partial probability of success]\label{claim:binary}
  Let $i^*$ be chosen uniformly at random from $[n] = \{1,\ldots, n\}$. 
  For any randomized
  algorithm with access only to the above oracle for comparison with $i^*$,
  if 
   the algorithm succeeds with probability $p > 0$ to
  find $i^*$,
assuming the algorithm always makes at least one query, 
  it must make $\Omega(p \log(n))$ queries in expectation.
\end{claim}

Note that the claim extends to the case where we condition on the
algorithm making at least one query. (The decision of whether or not
to make the first query is independent of the input.)

\begin{proof}
Fix a random seed $r$ for the internal coin flips of the randomized 
algorithm, and let $1 \leq t_r  \leq n$ be
the number of elements $j \in [n]$ such that if $i^*=j$ then
the algorithm succeeds with
random seed $r$.\footnote{Note that $t_r \geq 1$, because we assume
the algorithm always makes at least one query.}  Consider the ternary decision tree, $T_r$, corresponding to
the algorithm's queries, given random seed $r$. Then with probability
$p_r := \frac{t_r}{n}$, the uniformly random unknown input 
$i^* \in [n]$  is such that the algorithm reaches one of the
at least $t_r$ success leaves. 
 
It follows that at least $(\frac{2}{3} t_r)$ success
leaves in $T_r$ must have depth at least $\log_3(t_r)$.
Hence the 
average depth of success leaves in $T_r$ is at least $\Omega(\log(t_r)) = \Omega(\log(p_r n))$.   

Hence, the expected number of comparisons
conditioning on random seed $r$ is at least $\Omega(p_r \log(p_r n))$.
Taking expectation over all $r$, we have that
\begin{eqnarray} 
\E_r[p_r\log(p_r n)]  &  = &    \E_r [p_r (\log(p_r) + \log (n)) ]  \nonumber \\
                      &  = &   \E_r [p_r \log(p_r) ] + \E_r [p_r] \log(n)  \quad \quad \quad \quad \quad \quad \quad  \quad \quad \mbox{(by linearity of expectation)} \nonumber \\
                      & \geq & \E_r[p_r] \log(\E_r[p_r]) 
                                + \E_r[p_r] \log(n) \label{eqn:inequality-expect-bin-search} \\
                      & = &  \E_r[p_r] \log (\E_r[p_r] \log(n)) 
                      \nonumber \\
                      & = &   p \log(p n)  \nonumber
\end{eqnarray}
where inequality (\ref{eqn:inequality-expect-bin-search}) follows from Jensen's inequality
and the convexity of the function
$x \log x$ over the domain $x > 0$.

Since we assume that the algorithm makes at least one query, we can strengthen the lower bound to $\Omega(\max\{1,p\log(p n)\})$. Finally, 
since for all $p \in [0,1]$ we have\footnote{
For concreteness, here we assume $\log(x)$ denotes log base 2.} $p \log (p) \in [-0.6, 0]$,
notice that whenever $p\log(p n) \geq 1$ 
we have $p \log(p n) = 
p \log(p) + p \log(n) = \Theta(p\log(n))$.
\end{proof}

\subsubsection*{The basic construction}

Given a monotone path from $\left(1,1\right)$ to $\left(N,N\right)$
on the $N \times N$ grid graph,
where each two consecutive nodes along the path must 
differ by exactly 1 in one coordinate and be equal in the other coordinate
(i.e., have Manhattan distance 1),
and given a point $\left(i^{*},j^{*}\right)$
on the path, we construct $f$ as follows: 
\begin{itemize}
\item We let $\left(i^*,j^*\right)$ be the unique fixed point of $f$, i.e.
$f\left(i^{*},j^{*}\right)\triangleq\left(i^{*},j^{*}\right)$. 
\item At all other points on the path, $f$ is directed towards the fixed
point: for a point $\left(x,y\right)$ on the path that is dominated by
$\left(i^{*},j^{*}\right)$, meaning $\left(x,y\right) \leq \left(i^{*},j^{*}\right)$, we let $f(x,y)$ be the
next point on the path, i.e. $f(x,y)=\left(x+1,y\right)$
or $f(x,y)=\left(x,y+1\right)$. Similarly, for a point
$(x,y)$ that is on the path and dominates $\left(i^{*},j^{*}\right)$,
we let $f(x,y)$ be the previous point on the path.
\item For all points outside the path, $f$ is directed towards the path as follows.
Observe that the path partitions $\left[N\right]^{2}$ into three
(possibly empty) subsets: below the path, the path, and above the
path. For a point $\left(x,y\right)$ below the path, we set $f\left(x,y\right)\triangleq\left(x-1,y+1\right)$.
Similarly, for a point $\left(x,y\right)$ above the path, $f\left(x,y\right)\triangleq\left(x+1,y-1\right)$.
\end{itemize}

An example of such a function $f:[5]^2 \rightarrow [5]^2$ is given in Figure \ref{fig:2d-f}.  The following is easily verified:

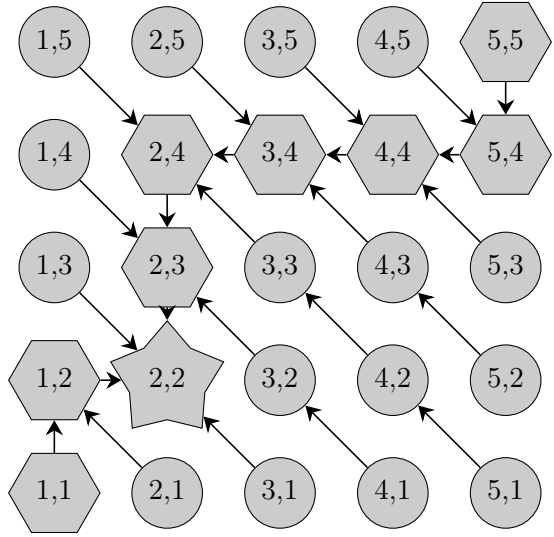
\begin{figure}
\begin{center}
\begin{tikzpicture}[darkstyle/.style={draw,fill=gray!40,minimum size=17}]
\tikzstyle{offpath} = [darkstyle, circle]
\tikzstyle{pathnode} = [darkstyle, regular polygon,regular polygon sides=6]
\tikzstyle{fixedpoint} = [darkstyle, star]

  \foreach \x in {1,...,5}
    \foreach \y in {1,...,5} 
       \node [offpath]  (\x\y) at (1.5*\x,1.5*\y) {\x,\y};

\def\x{1}
\def\y{1}
       \node [pathnode]  (\x\y) at (1.5*\x,1.5*\y) {\x,\y};

\def\x{1}
\def\y{2}
       \node [pathnode]  (\x\y) at (1.5*\x,1.5*\y) {\x,\y};
        
\def\x{2}
\def\y{2}
       \node [fixedpoint]  (\x\y) at (1.5*\x,1.5*\y) {\x,\y};

\def\x{2}
\def\y{3}
       \node [pathnode]  (\x\y) at (1.5*\x,1.5*\y) {\x,\y};

\def\x{2}
\def\y{4}
       \node [pathnode]  (\x\y) at (1.5*\x,1.5*\y) {\x,\y};

\def\x{3}
\def\y{4}
       \node [pathnode]  (\x\y) at (1.5*\x,1.5*\y) {\x,\y};

\def\x{4}
\def\y{4}
       \node [pathnode]  (\x\y) at (1.5*\x,1.5*\y) {\x,\y};

\def\x{5}
\def\y{4}
       \node [pathnode]  (\x\y) at (1.5*\x,1.5*\y) {\x,\y};

\def\x{5}
\def\y{5}
       \node [pathnode]  (\x\y) at (1.5*\x,1.5*\y) {\x,\y};

\tikzstyle{arrowstyle} = [-{Stealth[angle'=60]},semithick]

\draw [arrowstyle] (11)--(12);
\draw [arrowstyle] (12)--(22);
\draw [arrowstyle] (55)--(54);
\draw [arrowstyle] (54)--(44);
\draw [arrowstyle] (44)--(34);
\draw [arrowstyle] (34)--(24);
\draw [arrowstyle] (24)--(23);
\draw [arrowstyle] (23)--(22);

\def\y{5}
 \foreach \x in {1,...,4}
{\pgfmathtruncatemacro{\xx}{\x + 1}
\pgfmathtruncatemacro{\yy}{\y -1 }
\draw [arrowstyle] (\x\y)--(\xx\yy);}

\def\x{1}
 \foreach \y in {3,4}
{\pgfmathtruncatemacro{\xx}{\x + 1}
\pgfmathtruncatemacro{\yy}{\y -1 }
\draw [arrowstyle] (\x\y)--(\xx\yy);}

 \foreach \x in {3,...,5}
  \foreach \y in {1,...,3}
{\pgfmathtruncatemacro{\xx}{\x - 1}
\pgfmathtruncatemacro{\yy}{\y +1 }
\draw [arrowstyle] (\x\y)--(\xx\yy);}

 \def \x{2}
  \def\y{1}
{\pgfmathtruncatemacro{\xx}{\x - 1}
\pgfmathtruncatemacro{\yy}{\y +1 }
\draw [arrowstyle] (\x\y)--(\xx\yy);}

\end{tikzpicture}

\end{center}
\caption{A 2-dimensional ``{\em herringbone}'' monotone function.
\label{fig:2d-f}}
\end{figure}

\begin{claim}
For any choice of path and point $\left(i^{*},j^{*}\right)$ on the
path, $f$ constructed as above is monotone.
\end{claim}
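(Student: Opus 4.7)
The plan is a case analysis on where the two comparable points $p \leq q$ in $[N]^2$ sit relative to the monotone path $\Gamma$ and the chosen fixed point $(i^*,j^*)$. Classify each point into one of three regions: $L$ (strictly below $\Gamma$), $P$ (on $\Gamma$), or $U$ (strictly above $\Gamma$); when a point lies in $P$ I further split on whether it is dominated by, equal to, or dominates $(i^*,j^*)$. The goal is to verify $f(p) \leq f(q)$ in every resulting combination.

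The easy cases are those where $p$ and $q$ lie in the same off-path region: if both are in $L$, then $f$ adds the same vector $(-1,+1)$ to each, so $p \leq q$ trivially gives $f(p) \leq f(q)$; the $U$-case is symmetric. When both points lie on $\Gamma$ they are linearly ordered along the path; if both sit on the same side of $(i^*,j^*)$ the rule "$f$ is the next/previous path vertex" shifts each of them one path-step in the same direction, and monotonicity of $\Gamma$ preserves their order. If they sit on opposite sides of $(i^*,j^*)$ then by construction $f(p) \leq (i^*,j^*) \leq f(q)$, and we are done.

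The substantive case is $p \in L$ and $q \in U$ (the mirror case is symmetric). Here $f(p) = (p_1-1,p_2+1)$ and $f(q) = (q_1+1,q_2-1)$, so $f(p) \leq f(q)$ reduces to the single nontrivial inequality $q_2 \geq p_2 + 2$. This is where the geometry of the monotone path is essential. Let $y_{\min}(x)$ and $y_{\max}(x)$ denote the smallest and largest row of $\Gamma$ in column $x$; both are non-decreasing in $x$, and by definition $p \in L$ gives $p_2 \leq y_{\min}(p_1)-1$, while $q \in U$ together with $q_1 \geq p_1$ gives $q_2 \geq y_{\max}(q_1)+1 \geq y_{\min}(p_1)+1$. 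Chaining, $q_2 \geq y_{\min}(p_1)+1 \geq p_2 + 2$, as needed. The remaining mixed cases (one point in $P$, the other strictly off $\Gamma$) reduce to one-coordinate inequalities of the same flavor: the off-path point moves by $\pm 1$ in each coordinate and the on-path point moves by one unit step along $\Gamma$, so the necessary one-unit gap between their coordinates is supplied by the same "the path separates $L$ from $U$" observation specialized to a single coordinate.

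The main obstacle, and essentially the only nonroutine ingredient, is the $L$-vs-$U$ step: one must extract from "$\Gamma$ is a monotone path from $(1,1)$ to $(N,N)$" the geometric fact that any two comparable points on opposite sides of $\Gamma$ differ by at least two in the relevant coordinate. Once that fact is recorded via the monotonicity of $y_{\min}$ and $y_{\max}$, all remaining combinations of regions (including those touching the boundary, where well-definedness of $f(p) \in [N]^2$ needs a quick check using $p_2 < y_{\min}(p_1) \leq N$ and $p_1 > 0$) collapse into short arithmetic verifications from $p \leq q$ and the single-step nature of every path move.
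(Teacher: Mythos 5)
The paper states this claim without proof, so there is nothing to compare against; your job here is to give a complete verification, and your proof is essentially correct. The case analysis by region (strictly below $\Gamma$, on $\Gamma$, strictly above $\Gamma$) is the natural decomposition, the observation that $y_{\min}(\cdot)$ and $y_{\max}(\cdot)$ are non-decreasing (together with their column-reversed analogues $x_{\min}(\cdot),x_{\max}(\cdot)$ on rows, needed for the mirror case $p\in U$, $q\in L$) is the right geometric lemma, and the $L$-versus-$U$ computation is carried out cleanly.

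One point deserves more care than your phrasing suggests. In the mixed cases (say $p$ on $\Gamma$ and $q\in U$), the ``one-unit gap'' coming from the separation of the path is not always enough. If $f(p)=p$ or $f(p)$ moves $p$ down or left, then indeed $q_2\geq y_{\max}(q_1)+1\geq y_{\max}(p_1)+1\geq p_2+1$ suffices. But if the path moves up at $p$, so that $f(p)=(p_1,p_2+1)$, the required inequality is $q_2\geq p_2+2$, and the separation of $p$ from $q$ alone only delivers $q_2\geq p_2+1$. The extra unit comes from the additional fact that $(p_1,p_2+1)$ is itself a path point, hence $y_{\max}(p_1)\geq p_2+1$; equivalently, one should apply your separation observation to the pair $(f(p),q)$ rather than $(p,q)$, since $f(p)$ is again on $\Gamma$. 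The same wrinkle appears, with $x$ and $y$ swapped, in the other three mixed sub-cases. This is a small omission, easy to patch within your framework, but as written ``the necessary one-unit gap \dots is supplied by the same observation'' glosses over exactly the sub-case where the one-unit gap does not by itself close the argument.

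Two cosmetic nits: in the well-definedness remark the condition should read $p_1>1$ (not $p_1>0$) to guarantee $p_1-1\geq 1$; and it is worth stating explicitly, since it silently underlies the on-path case, that the vertices of a monotone lattice path are totally ordered under the componentwise order, so ``$p\leq q$ and both on $\Gamma$'' is the same as ``$p$ precedes $q$ along $\Gamma$.''
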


\paragraph*{Choosing the fixed point}

In our hard distribution, once we fix a path, we choose $\left(i^{*},j^{*}\right)$
uniformly at random among all points on the path. 
\begin{claim}[Finding the fixed point requires $\log(N)$ queries to the path]
\label{claim:points-on-path}Given the path and given 
oracle access to $f$,
any (randomized) algorithm that finds a point $\left(i',j'\right)$
on the path that is within $\sqrt{N}$ (Manhattan distance) from $\left(i^{*},j^{*}\right)$
requires  querying of $f$ at an expected number of at least
$\Omega\left(\log N\right)$ points on the
path that are pairwise at least $\sqrt{N}$-apart. 
\end{claim}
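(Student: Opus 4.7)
The plan is to prove the claim via Yao's principle plus a shrinking‑interval adversary argument. Fix $(i^*,j^*)$ uniformly over the $L=2N-1$ path points and consider only deterministic algorithms. Two preliminary observations: first, by the herringbone construction $f$ off the path is a deterministic function of the path alone (which the algorithm already knows through its path oracle), so off-path queries carry no information about $(i^*,j^*)$; second, each on-path query at $q$ reveals only the ternary outcome of whether $q$ precedes, equals, or follows $(i^*,j^*)$ in path order, and since the path is monotone, Manhattan distance between path points equals their path distance, so ``within $\sqrt{N}$ in Manhattan distance'' is the same as ``within $\sqrt{N}$ path positions''.

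I would run the standard adversary that maintains an interval $I=[a,b]$ (in path order) of still-consistent positions, initially $I=[1,L]$, and that always responds to an in-$I$ query $q$ by taking the larger side of $I\setminus\{q\}$. Classify each in-$I$ query as \emph{fat} if both $q-a\geq\sqrt{N}$ and $b-q\geq\sqrt{N}$, and \emph{thin} otherwise. A fat query shrinks $|I|$ by a factor of at most $2$ and by at least $\sqrt{N}$ positions, while a thin shrinks $|I|$ by fewer than $\sqrt{N}$ positions. The key structural lemma is that fats are pairwise at path distance $>\sqrt{N}$: if $q_i$ is fat at time $i$ and $q_j$ is a later fat, then $I_i$ is adjacent to $q_i$ on one side, $I_{j-1}\subseteq I_i$, and $q_j$ lies at distance $\geq\sqrt{N}$ from the boundary of $I_{j-1}$ nearest to $q_i$, giving $|q_j-q_i|\geq\sqrt{N}+1$.

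For the algorithm to output within $\sqrt{N}$ of $(i^*,j^*)$ the adversary's interval must end at size $|I_{\text{final}}|\leq 2\sqrt{N}$, and with $k_f$ fats and $k_t$ thins this forces $L/2^{k_f}-k_t\sqrt{N}\leq 2\sqrt{N}$, i.e.\ $k_t\geq 2\sqrt{N}/2^{k_f}-2$. The fats already contribute $k_f$ to the $\sqrt{N}$-packing by the structural lemma. For the thins I would observe that left-thins sit within a $\sqrt{N}$-window of the left endpoint of $I$, which advances strictly rightward by their cumulative shrinkage, and symmetrically on the right, so the union of thin positions covers a region whose total length equals the total thin shrinkage; extracting every $\lceil\sqrt{N}\rceil$-th thin along the monotonically moving boundary yields an $\sqrt{N}$-packed subset of thins of size $\Omega\!\left((L/2^{k_f}-2\sqrt{N})/\sqrt{N}\right)=\Omega(2\sqrt{N}/2^{k_f})$. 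Taking the larger of the two contributions, the overall $\sqrt{N}$-packing is at least $\Omega\!\left(\max\{k_f,\;2\sqrt{N}/2^{k_f}\}\right)=\Omega(\log N)$ no matter how the algorithm trades $k_f$ against $k_t$. The delicate step is justifying the thin-packing lower bound, since consecutive thins can be as little as one step apart; the care is in exploiting the monotone motion of the boundary to extract a well-separated subset rather than relying on pairwise distances of thins directly.
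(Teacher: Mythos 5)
Your proposal takes essentially the same route as the paper — the paper's own proof is literally two sentences, observing that off-path queries carry no information and deferring to ``the standard lower bound for binary search'' — but you have attempted to actually supply the adversary argument that the paper treats as standard. The high-level structure (Yao's principle, interval-maintaining adversary that takes the larger side, the fat/thin dichotomy, the observation that fats are pairwise $>\sqrt{N}$ apart, the bound $S_{\mathrm{thin}}\geq L/2^{k_f}-2\sqrt{N}$ on the required thin shrinkage, and the final $\max\{k_f,\ \sqrt{N}/2^{k_f}\}=\Omega(\log N)$ balancing) is sound and would give a correct proof.

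However, the one step you yourself flag as ``delicate'' is in fact where your argument, as written, has a gap. You claim that the thins admit a $\sqrt{N}$-packed subset of size $\Omega\bigl(S_{\mathrm{thin}}/\sqrt{N}\bigr)$ by ``extracting every $\lceil\sqrt{N}\rceil$-th thin.'' That extraction produces only $k_t/\sqrt{N}$ points, where $k_t$ is the \emph{number} of thins, and $k_t/\sqrt{N}$ is generically much smaller than $S_{\mathrm{thin}}/\sqrt{N}$: if each thin shrinks the interval by close to $\sqrt{N}$ (which the adversary cannot prevent), then $k_t\approx S_{\mathrm{thin}}/\sqrt{N}$ and your extraction yields only $\approx S_{\mathrm{thin}}/N$ points, which is $O(1)$ when $k_f$ is, say, a small constant — not enough to close the case $k_f=o(\log N)$. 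The fix is to run the greedy extraction not on the left-thins alone but on the full ordered sequence of queries that advance the left boundary (left-thins together with left-discarding fats); in that sequence every gap to a thin successor is $\leq\sqrt{N}$, so each greedy step of length more than $2\sqrt{N}$ must land on a fat, and one can show each greedy interval contributes at most $2\sqrt{N}$ to the total thin-gap length. This yields a packing of size at least $S_{a,\mathrm{thin}}/(2\sqrt{N})$ on the left side (and symmetrically on the right), which is the bound you need; your version, phrased in terms of skipping thins by index, does not deliver it. There is also a small point worth noting: the deterministic adversary gives a worst-case instance for each algorithm, whereas Claim 4.3 is a distributional statement over a uniformly random $(i^*,j^*)$, so to finish via Yao one should either rephrase the adversary argument information-theoretically (at the $\sqrt{N}$-bucket level there are $\Theta(\sqrt{N})$ equally likely target buckets and each bucket-distinct query gives a ternary outcome) or otherwise convert the argument to a high-probability statement. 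Both issues are repairable, and the overall plan is right; the paper simply elides all of this as ``standard.''
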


\begin{proof}
Observe that once we fix the path and the path is known, the values of $f$ outside the
path do not reveal any information about the location of $\left(i^{*},j^{*}\right)$ on the path. 
So, knowing the path, the algorithm only learns from queries on the path. 

Consider an easier problem where for every query $(i,j) \in [N]^2$ the algorithm learns the values of $f$ for every point in the set $\mathcal{Q}(i,j) := \{(i',j') \in [N]^2 \mid i - \sqrt{N} \leq i' \leq i+\sqrt{N} \ \ \mbox{and} \ \  j - \sqrt{N} \leq j' \leq j + \sqrt{N} \}$. 
For this easier problem the queries are clearly $\sqrt{N}$-apart, wlog. 

For the easier problem, Claim~\ref{claim:binary}, in the special case where $p=1$, implies that the algorithm needs $\Omega(\log(\sqrt{N})) = \Omega(\log(N))$ queries in expectation.
The lower bound for the original problem follows because it is harder (weaker queries).
\end{proof}

\subsubsection*{Choosing the central path}

Our goal now is to prove that it is hard to find many distant points
on the path. 
To simplify the analysis, we will only consider the special case where
all points $\left(x,y\right)$ on the path satisfy $x-y\in\left[-N^{1/4},N^{1/4}\right]$.  (Recall that we are assuming wlog that $N^{1/4}$ is an integer.)
We partition the $N \times N$ grid into $\Theta\left(\sqrt{N}\right)$
regions of the form $R_{a}\triangleq\left\{ \left(x,y\right) \in [N]^2 \mid x+y\in[a,a+\sqrt{N})\right\}$, where
$a \in \{ k \sqrt{N} + 2 \mid 0 \leq k \leq (2\sqrt{N}-1) \  \mbox{and} \ k \in \nat \} $.
Notice that each region $R_a$ intersects the path at exactly $\sqrt{N}$
points, except for the last region $R_{2N-\sqrt{N} +2}$ which intersects the path at $\sqrt{N}-1$ points. The path enters each region\footnote{For the first and last region, the path is obviously forced to start
at $\left(1,1\right)$ (respectively end at $\left(N,N\right)$);
but those two regions can only account for two of the $\Omega\left(\log N\right)$
distant path points required by Claim \ref{claim:points-on-path},
so we can safely ignore them.} at a point $\left(x,y\right)$ for an integer value $x-y$ chosen uniformly
at random in the interval $\left[-N^{1/4}, N^{1/4} \right]$. We will argue 
that in order to find a point on the path
in any region $R_{a}$, the algorithm must query the function at $\Omega\left(\log N\right)$
points in $R_{a}$ or its neighboring regions. 

Each region is further partitioned into $\Theta\left(N^{1/4}\right)$
sub-regions $S_{a}\triangleq\left\{ \left(x,y\right)\mid x+y\in[a,a+2N^{1/4})\right\} $.
For each region, we choose a special sub-region uniformly at random.
In all non-special sub-regions, the path proceeds while maintaining
$x-y$ fixed, up to $\pm1$. Inside the special sub-region, the value
of $x-y$ for path points changes from the value chosen at random
for the current region, to the value chosen at random for the next
region. 

Given a choice of random $x-y$ entry point for each region, and a
random special sub-region for each region, we consider an arbitrary
path that satisfies the description above. This completes the description
of the construction.
\begin{claim}[Finding the special sub-region requires $\log(N)$ queries]
\label{claim:Finding-the-special} \hfill

Let $\pi_a$ be the probability that the algorithm queries any point in
region $R_{a}$, and let $q_a$ be the probability that the algorithm queries any point in the {\em special sub-region} in $R_{a}$, conditioning on querying at least one point in $R_{a}$.
Then the algorithm makes $\Omega\left(\pi_a q_a \log (N)\right)$ queries (in expectation) to points in $R_{a}$.
\end{claim}

\begin{proof}
Any query to a point in $R_{a}$ gives no more information about the special sub-region than a comparison query. Thus we can apply Claim~\ref{claim:binary} to get that conditioned on querying at least one point in 
$R_{a}$, the algorithm makes at least $\Omega\left(q_a \log (N)\right)$ queries in expectation. 
\end{proof}

Let $S_{a}$ and $S_{b}$ be the special sub-regions of two consecutive
regions. Let\\ $T\triangleq\left\{ \left(x,y\right)\mid x+y\in[a+2N^{1/4},b)\right\} $
be the union of all the sub-regions between $S_{a}$ and $S_{b}$.
Observe that the value of $x-y$ remains fixed (up to $\pm1$) for
all points in the intersection of the path with $T$. Also, the construction
of $f$ outside $S_{a}\cup T\cup S_{b}$ does not depend at all on
this value. 
\begin{claim}[Finding the path without finding the special sub-region requires $\log(N)$ queries] \hfill

Let $\pi_{a,b}$ be the probability that the algorithm queries any point in
$R_{a} \cup R_{b}$. 
Let $q^{'}
_{a,b}$ denote the probability that the algorithm queries any point in the intersection of the path and $T$ without querying $S_{a} \cup S_{b}$.
Then the algorithm makes $\Omega\left(\pi_{a,b}q^{'}
_{a,b}\log (N)\right)$ queries (in expectation) to points in $R_{a} \cup R_{b}$.
\end{claim}

Notice that in order to query a point in the intersection of
the path and region $R_{a}$, the algorithm must either query the special sub-region of $R_a$ or its neighboring regions, or find the path in between special sub-regions. Summing the probabilities that these events happen, we have by the previous two claims that:
\begin{lemma} [Finding a point on the path requires $\log(N)$ queries]\label{lem:region} \hfill 

Let $\pi_a$ be the probability that the algorithm queries any point in
region $R_{a}$, and let $p_a$ be the probability that the algorithm queries a point in the intersection of the path and region $R_{a}$. Then it makes at least $\Omega\left(\pi_a p_a\log(N)\right)$
queries (in expectation) in $R_{a}$ or its neighboring regions.
\end{lemma}

\subsubsection*{Completing the proof}
Recall that the algorithm finds a point in the intersection of the path with region $R_a$ with probability $\pi_a p_a$.
By Claim~\ref{claim:points-on-path}, we have that 
\begin{gather}\label{eq:points-on-path}
\sum_a \pi_a p_a = \Omega\left( \log(N)\right).
\end{gather}
Therefore the algorithm's total number of queries is at least
\begin{align*}
\frac{1}{3}\sum_a {\text{queries to $R_a$ or neighboring regions}} 
&= \sum_a \pi_a p_a \Omega\left( \log(N)\right) && \text{(Lemma~\ref{lem:region})}\\
& \geq \Omega\left( \log^2(N)\right)  && \text{(Eq.~\eqref{eq:points-on-path}).}
\end{align*}
(Here $\frac{1}{3}$ corrects for the fact that we may triple-count the queries in each region when summing over it and its neighbors.)
%
This completes the
proof of Theorem \ref{thm:main-2d}. \qed

\subsection{An alternative proof}

\begin{theorem}
\label{thm:det-2d}
Any deterministic  black box algorithm for finding a Tarski fixed point in two dimensions needs $\Omega(\log ^2 N)$ queries.
\end{theorem}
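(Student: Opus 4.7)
The plan is to give a direct adversary argument against any deterministic black-box algorithm, using the same ``herringbone'' family of monotone functions from the proof of Theorem \ref{thm:main-2d}. Rather than choosing the path at random and bounding success probability, the adversary reveals information lazily, committing to answers only as forced by the queries seen so far. This lazy-adversary approach has the advantage that the structure of forced queries naturally splits into a hierarchy of binary-search subproblems, which is precisely the feature that makes the proof template a plausible starting point for an inductive attack on higher dimensions.

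Concretely, the adversary maintains a monotone path from $(1,1)$ to $(N,N)$ that is only partially specified and an unknown fixed point $(i^*,j^*)$ on that path, together with a nested chain of ``live'' diagonal intervals $J_0 \supseteq J_1 \supseteq \cdots$ corresponding to the levels of an outer binary search for the diagonal position of $(i^*,j^*)$. For each query $(x,y)$, the adversary answers using the herringbone rule: points clearly outside the strip containing every consistent path receive the diagonal deflection $(x\pm 1, y\mp 1)$, while queries that enter the strip force the adversary to commit to local pieces of the path, and only those pieces.

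To obtain $\Omega(\log^2 N)$ I would proceed in three steps. First, show a deterministic analogue of Claim \ref{claim:points-on-path}: the algorithm cannot return a valid fixed point until it has localized $(i^*,j^*)$ to, say, Manhattan radius $\sqrt{N}$, and the adversary can keep at least two consistent fixed-point positions distance $\sqrt{N}$ apart until this happens. Second, show that this localization forces the adversary to commit to path positions at $\Omega(\log N)$ well-separated ``pivot'' diagonals, which are exactly the pivots of an outer binary search; the potential function here is the length of the current live interval $J_i$, which shrinks by a constant factor only when such a pivot is resolved. Third, argue that resolving each pivot is itself a one-dimensional Tarski instance: between two already-resolved pivots the offset $x-y$ along the path is an unconstrained monotone function of the diagonal coordinate, and locating the path at the current pivot reduces to binary search in this one-dimensional degree of freedom. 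A standard adversary argument for one-dimensional Tarski (or binary search) lower-bounds each such subproblem by $\Omega(\log N)$ queries, and the product gives $\Omega(\log^2 N)$.

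The main obstacle is executing steps two and three while preserving global monotonicity of $f$, and while ensuring that the one-dimensional subproblems at different pivots are genuinely independent — the adversary must commit to a path piece at one pivot without leaking information about the free one-dimensional parameter at any other pivot. I expect to handle this with a two-level potential: an outer counter tracking the number of resolved pivots, and an inner counter tracking the remaining uncertainty at the currently active pivot, with a careful invariant guaranteeing that off-path diagonal answers and commitments in one sub-strip are consistent with an arbitrary choice in any other sub-strip. Once this invariant is in place, the algorithm can only decrement the inner potential by a constant per query and can only decrement the outer potential when the inner potential reaches zero, yielding the claimed $\Omega(\log^2 N)$ lower bound. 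This is also the feature that suggests induction on $d$: the one-dimensional inner subproblem should be replaceable by a $(d-1)$-dimensional Tarski instance, with the outer binary search on the $d$-th coordinate intact.
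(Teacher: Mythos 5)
Your high-level intuition matches the paper's own informal description of its proof --- both aim to show that the algorithm must solve a sequence of $\Omega(\log N)$ one-dimensional binary-search subproblems, and both use a lazy adversary committing to herringbone functions. But the point where you flag a ``main obstacle'' is in fact a genuine gap, not a routine detail, and the mechanism you propose to plug it (an independence invariant between sub-strips, managed by a two-level potential) does not hold in the simple form you state. The one-dimensional degree of freedom at different pivot diagonals is \emph{not} unconstrained: the path is monotone, so the offset $x-y$ changes by $\pm 1$ per diagonal step, and the number of consistent path pieces between two already-resolved pivots is governed by a binomial coefficient depending on the offsets at both ends. Off-path queries between two pivots therefore \emph{do} leak information coupling the choices at both of them, and you cannot simply decree that ``commitments in one sub-strip are consistent with an arbitrary choice in any other.'' Without this independence, the clean product structure (outer counter $\times$ inner counter) you want does not decompose.

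The paper avoids this difficulty by using a single global potential, $L_t = \log(\#\text{monotone paths in the current domain})$, rather than a two-level independent potential. A non-decisive query removes a rectangular block, and the adversary picks the answer (NW or SE) that keeps at least half the paths, so $L_t$ drops by $O(1)$ per ordinary query; the only queries that can cause a large drop are the \emph{decisive} ones (which roughly halve the path count) and the \emph{short} ones (queries within distance $w/2 = N^{1/2}/2$ of the forbidden boundary, which drop $L_t$ by at most $N^{1/2}\log N$). Arithmetic on a potential that starts near $N$ forces $\Omega(\log N)$ ``effective'' decisive queries, and Lemma~\ref{lem:effective} shows these are pairwise $> w$ apart in both coordinates. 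The role of your per-pivot $\Omega(\log N)$ binary search is then played by Lemma~\ref{lem:walls}: a geometric argument that for each effective decisive query $q$ there must already be $\Omega(\log N)$ \emph{walls} (boundaries of forbidden rectangles from earlier non-decisive queries) crossing the NW--SE line through $q$ within distance $w/2$. Because the decisive queries are $w$-separated, each non-decisive query is charged to $O(1)$ of them, giving the $\Omega(\log^2 N)$ count --- with no independence assumption needed. If you want to rescue your two-level-potential plan, you would need to replace the ``independence invariant'' with something like the paper's path-count bound, which quantifies exactly how much information a query in one region can leak about another; otherwise step three of your plan does not go through.
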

This proof could be more promising for generalization: its gist is that any such algorithm must solve
$\Omega(\log N)$ {\em independent one-dimensional problems.}

\begin{proof}

We shall describe a simple strategy for the adversary that achieves
this bound.  The adversary's strategy is to again commit to  
``herringbone'' functions 
as in Figure \ref{fig:2d-f}:
the function consists of a {\em main path} consisting of a
monotonically increasing path from $(1,1)$ to a point $x^*$, and a
monotonically decreasing path from $(N,N)$ to $x^*$, with each
step along the path, except for $x^*$, changing one dimension of the
argument by one unit.  For all points $(x,y)$ off the main path,
$f(x,y)$ is either $(x-1,y+1)$ or $(x+1,y-1)$, depending on whether
$(x,y)$ is below or above the main path, respectively; thus, the graph
of the function is again herringbone-like, consisting of the main path, plus
$45^{\hbox{\rm o}}$ paths towards the main path (see Figure \ref{fig:2d-f}).

For the sake of exposition and geometric intuition, we shall use a
simple notation based on the eight cardinal directions\footnote{We will
try to avoid confusion between the direction {\bf N} (North), and
the number $N$.  That is why we use boldface for the basic directions.}: 
{\bf N}, 
{\bf S}, {\bf E}, {\bf W},
NW, SE, SW, NE. 
Thus, the answer $(x-1,y+1)$ to the query $f(x,y)$
will be denoted NW.  To summarize the adversary's strategy, the answer
to a query $f(x,y)$ is either SE or NW, thus declaring that $(x,y)$ is
not on the path, unless both answers would contradict monotonicity, in
which case the adversary must choose one of the principal directions,
{\bf N}, {\bf S}, {\bf E}, {\bf W}.
A query of the latter type is termed a {\em decisive}
query.  Note that the answer to any non-decisive query $f(x,y)$
effectively ``removes from consideration'' a rectangular area of the
grid --- if $f(x,y)=NW$, the block $\{(x',y'): x'\geq x, y'\leq y\}$,
that is the whole block to the SE of $(x,y)$, is excluded for further
consideration in the sense that the main path can no longer intersect
it, and all points $(x',y')$ in this block must have
$f(x',y')=NW$.\footnote{Strictly speaking point on the block's boundary
  do not have this restriction, but let us assume that they do, as
  this simplification favors the algorithm.}
  At any time, the union of these forbidden rectangles consist 
  of an upper left region that contains all points that are above and/or to 
  the left of the query points $(x,y)$ that point SE (i.e. such that $f(x,y)=SE$) 
  and a lower right region that contains all points that are below 
  or to the right of query  points that point NW.
  The two forbidden regions are bounded by monotone staircase curves,
  and the main path must lie strictly between these two curves.

A query at point $q=(x,y)$ is decisive precisely when both points
  $(x-1,y+1)$ and $(x+1,y-1)$ to the NW and SE of $q$ belong to the forbidden
  area, the first one to (the boundary of) the upper left region and the second
  one to the lower right region. Thus, the main path must pass through the
  query point $q$ and now the adversary must decide whether the fixed point $x^*$ 
  is above or below $(x,y)$.

How is this decision, as well as the decisions off the path (the
choice between NW and SE) made?  At any query, the algorithm has
effectively determined that the part of the main path of current
interest (certain to include the fixed point) is one of the possible
monotonically increasing paths from some point $(\underline
x,\underline y)$ (the SW-most part of the domain), either the origin
or a past decisive query, to some point $(\bar x,\bar y)>(\underline
x,\underline y)$ (the NE-most point of the domain) that avoids all
blocks removed by past non-decisive queries.  We call this region {\em
  the current domain}.  During a decisive query $q$, the algorithm has to
choose: which of the two subdomains of the current domain, the one to
the SW or the one to the NE, will be the new domain?  The answer is
{\em whichever subdomain has the largest number of potential main
  paths.} Since there is at least one potential main path remaining,
at least one of the directions ${\bf S}$, ${\bf W}$ must be available at $q$
(i.e., the point below or to the left of $q=(x,y)$ is not forbidden - it is possible
 that both are available), and similarly at least one of the directions
 ${\bf N}$, ${\bf E}$ must be available at $q$.
  The adversary compares the number of feasible monotone paths in the lower and upper
  subdomain (i.e. the number of feasible monotone paths 
  between $(\underline x,\underline y)$ and $q$,
  to the number between $q$ and $(\bar x,\bar y)$), continues in the subdomain with
  the largest number of paths, and if both choices for direction are available
  in this subdomain, then it chooses again the direction with the larger number of paths.

During any non-decisive query, the same criterion is used: The
adversary will choose the answer among NW and SE that will result in a
new domain (the previous domain with one block removed) with the
largest number of paths that avoid all blocks, among the two possible
choices.  But there is an exception: If the domain is becoming very
narrow --- that is, if the NW or the SE forbidden region is very close
to the query point $q$ - then a different rule is used.  Specifically, if the NW-SE line through the query point $q$ hits the boundary
of the forbidden region on either side within distance $\leq w/2 =N^{\alpha}/2$,
where $\alpha<1$ (for concreteness, assume for
the rest of the proof that $\alpha = 1/2$ and we measure for simplicity the length of diagonal paths in the $L_{\infty}$ metric), then the 
adversary chooses the direction NW or SE from $q$ that is furthest from the
forbidden region (breaking ties arbitrarily). We call such queries {\em short
  queries}.

This completes the description of the adversary's strategy.  The {\em
  potential function} that will inform our lower bound is {\em the
  logarithm of the number of main paths is the current domain}.  That
is, for each time $t$, we define $L_t$ as the logarithm of the number
of monotonically increasing paths in the domain at time $t$ (that is
to say, just before the $t$-th query).  In the beginning, $L_1 \geq N$ ---
actually, it is $2N -{1\over 2}\log N +o(1)$, since the number of
paths is ${2N}\choose{N}$.  When the algorithm concludes, $L_t=0$
(since there is only one path left, the one containing the fixed point).  If the $t$-th query is a decisive
query, then $L_{t+1} \geq {L_t \over 2}-1$, since the number of main
paths before query $t$ was precisely the product of the number
of paths in the upper and lower subdomain, the adversary will choose 
to continue in the subdomain with the largest of the two (thus, with at least
the square root of the number of paths), and if there are two
available choices of direction in the subdomain, it chooses the direction
with the larger number of paths.  

If the $t$-th query $q$ is a non-decisive {\em and non-short} query, then 
all feasible paths, except for those that go through the query point $q$,
belong obviously to either the feasible domain that results if $f(q)=NW$
or the domain that results if $f(q)=SE$.
Since $q$ is not a short query, the number of feasible paths
that go through $q$ is a small fraction of the total number of feasible paths.
Since the adversary chooses the direction among NW, SE with the larger number of paths,
it follows that this number is approximately at least one half of the paths,  
hence certainly $L_{t+1} \geq L_t -2$.  

The following lemma describes what happens at short queries:

\begin{lemma}
If $t$ is a short query, then $L_{t+1} \geq L_t -  N^{\alpha}\log N$.
\end{lemma}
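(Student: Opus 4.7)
The plan is to build an injection from the set of \emph{lost} paths (those in the current domain before the short query that are excluded afterwards) into the Cartesian product of the set of surviving paths with a small set of \emph{tags}, and then show that the tags can be encoded in at most $N^\alpha \log N$ bits. Without loss of generality, suppose the close side at $q = (x_0, y_0)$ is the upper-left forbidden region $F$, so that $(x_0 - k_0, y_0 + k_0) \in F$ for some $k_0 \leq w/2 = N^\alpha/2$; the adversary's rule then forces $f(q) = \text{SE}$ and the newly-forbidden region is the NW block $B = \{(x,y) : x \leq x_0, y \geq y_0\}$.

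The first ingredient is a geometric consequence of the short-query condition. Upper-left closure of $F$ combined with the close point forces $g(x_0) \leq y_0 + w/2$, where $g(x) := \min\{y : (x,y) \in F\}$. Every monotone path in the current domain crosses the line $\{x = x_0\}$ at a unique row $c$ (stepping right from $(x_0, c)$ to $(x_0 + 1, c)$); the path is lost iff $c \geq y_0$, and feasibility forces $c < g(x_0)$. Hence the crossing row of a lost path lies in an interval of at most $w/2$ values.

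For each lost path $P$, let $(a, y_0)$ be its entry into $B$ (so $a \leq x_0$ and $P$ steps up from $(a, y_0 - 1)$) and $(x_0, c)$ its exit. I will define the modified path $P'$ by replacing the subpath of $P$ between $(a, y_0 - 1)$ and $(x_0 + 1, c)$ by the L-shaped detour that goes right along row $y_0 - 1$ to $(x_0 + 1, y_0 - 1)$, then up column $x_0 + 1$ to $(x_0 + 1, c)$. The tag is $\tau = (a, c, \sigma)$, where $\sigma$ records the exact R/U sequence that $P$ used inside $B$ from $(a, y_0)$ to $(x_0, c)$. The map $P \mapsto (P', \tau)$ is injective because $P$ is uniquely reconstructed by splicing $\sigma$ back in place of the detour.

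The hard part will be verifying that $P'$ is a valid monotone path in the current domain --- i.e., that every new point of $P'$ lies outside both $F$ and $F'$. Membership in $F$ is excluded using upper-left closure of $F$ together with $q \notin F$ (propagated along column $x_0 + 1$ via $c < g(x_0)$) and the entry characterization $(a, y_0) \notin F$ (propagated along row $y_0 - 1$). Membership in $F'$ relies on the ``far'' side of the short query: the SE ray from $q$ does not meet $F'$ within distance $w/2$, so in particular $(x_0 + 1, y_0 - 1) \notin F'$, and lower-right closure of $F'$ then propagates this to all new points. Finally, counting tags: $a$ has at most $N$ values, $c$ at most $w/2$, and $\sigma$ is an R/U word of length at most $2N$ with at most $w/2$ U-moves, for at most $\binom{2N}{w/2} \leq (2N)^{w/2}$ possibilities, giving at most $N \cdot (w/2) \cdot (2N)^{w/2} \leq 2^{N^\alpha \log N}$ tags in total. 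This yields $P_{\text{old}} \leq P_{\text{new}} \cdot (1 + 2^{N^\alpha \log N})$ and hence $L_t - L_{t+1} \leq N^\alpha \log N + O(1)$, as required. The symmetric case where the close side is SE is handled by exchanging NW and SE throughout.
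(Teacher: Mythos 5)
You have the right high-level idea --- map each lost path to a surviving path together with a tag of bounded entropy --- and much of the construction is sound (the L-shaped detour, the verification that the detour avoids $F$ and $F'$ via the non-decisiveness of $q$). But there is a genuine gap in the tag count, namely the claim that $g(x_0) \leq y_0 + w/2$. Upper-left closure of $F$ says: if $(x,y) \in F$ and $x' \leq x$, $y' \geq y$, then $(x',y') \in F$. This propagates membership to the upper-left, not to the lower-right. From $(x_0-k_0, y_0+k_0) \in F$ you can conclude $g(x) \leq y_0 + k_0$ for $x \leq x_0 - k_0$, but you get \emph{no upper bound} on $g(x_0)$: the staircase boundary of $F$ can jump arbitrarily high between columns $x_0 - k_0$ and $x_0$, and $g(x_0)$ can be as large as $N$. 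Consequently the exit row $c$ of a lost path is not confined to a window of size $w/2$, and $\sigma$ can contain $\Theta(N)$ up-moves, giving on the order of $\binom{2N}{N}$ tags --- exponentially too many.

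What the short-query condition \emph{does} control is the antidiagonal crossing: every lost path crosses the line $x+y = x_0+y_0$ exactly once, at $(x_0-j, y_0+j)$ with $0 \leq j < k_0 \leq w/2$ (for $j \geq k_0$ that point lies in $F$). To fix the count, split $\sigma$ at this crossing: the piece from $(a,y_0)$ to the crossing has exactly $j \leq w/2$ up-moves, and the piece from the crossing to $(x_0,c)$ has exactly $j \leq w/2$ right-moves, so each piece has at most $\binom{N}{w/2} \leq N^{w/2}$ possibilities, recovering the desired $\approx N^\alpha \log N$ bound. This splitting is, in substance, exactly what the paper's proof exploits: it maps each path through a close antidiagonal point $p'$ to a surviving path through the shifted point $p = p' + (d,-d)$, with multiplicity bounded by $\binom{N}{d}^2$, where $d$ is the length of the short side of the antidiagonal segment.
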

\begin{proof}
Consider a short query $q$ and the NW to SE line through it, which intersects the boundary of the upper left forbidden region at $a$ and the boundary of the lower right region at $b$. Suppose wlog that the adversary in this case chose
$f(q)=NW$, that is, $|qa| \geq |qb|$, where $|qa|, |qb|$ is the length of the
segments $qa, qb$ (in $L_{\infty}$ metric).
Since $q$ is a short query, $d = |qb| \leq w/2$.
Let $s$ be the minimum point of the current domain, and $u$ the maximum.
For a point $p$ of the segment $ab$, we let $n_p$ denote the number of
monotone feasible paths from $s$ to $u$ that go through $p$.
Let $Q$ be the number of paths that go through a point in the $qa$
segment, and $Q'$ the number of paths that go through a point in the $qb$ segment.

Consider a point $p'=(x_{p'},y_{p'}) \in qb$ 
and the point $p=(x_p,y_p) = (x_{p'}-d, y_{p'}+d)$ that is NW of $p'$ 
at distance $d$. 
The point $p$ is in $qa$ since $d=|qb| \leq |qa|$.
Map every $s-u$ path $\pi'$ through $p'$ to the path $\pi$ through $p$,
which agrees with $\pi'$ until it reaches $x-$coordinate $x_p$ for the first time, then $\pi$ moves up vertically to $p$, then horizontally until it meets again $\pi'$, and then follows $\pi'$ until the end (see Figure 2).

\begin{figure}[h]
\centering
\vspace*{-1.2cm}
\includegraphics[scale=0.5]{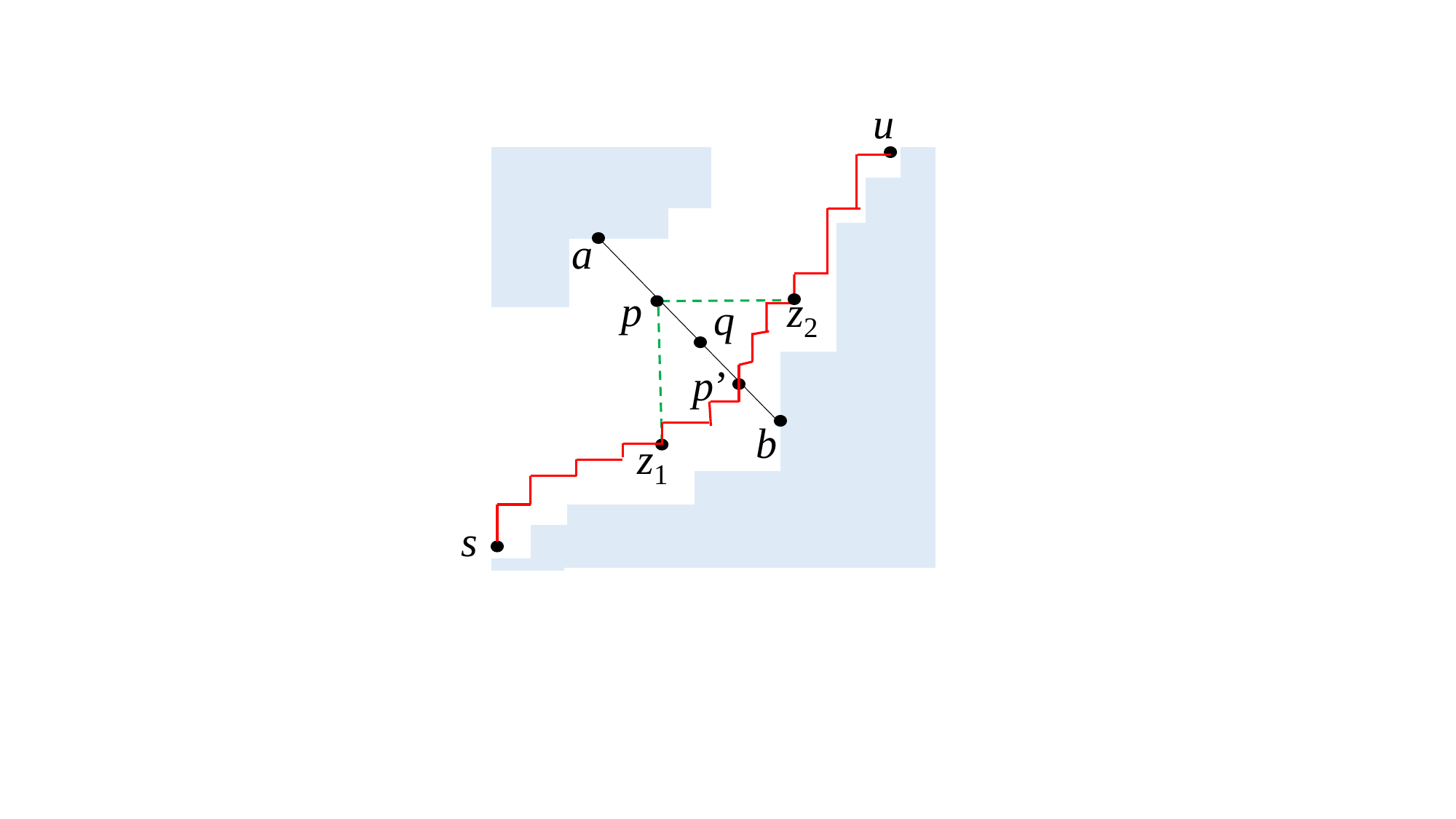}
\vspace*{-3.7cm}
\caption{}
\vspace*{-0.2cm}
\label{fig:short-query}
\end{figure}

Let $z_1$ be the first point of $\pi$ with $x-$coordinate $x_p$, and let
$z_2$ be the last point of $\pi$ with $y$-coordinate $y_p$.
How many paths $\pi'$ through $p'$ get mapped to the same path $\pi$ through $p$?
All these paths $\pi'$ differ only in their portion between $z_1$ and $z_2$.
The number of monotone paths from $z_1$ to $p'$ is at most $N \choose d$,
because such a path amounts to choosing $d$ {\bf E} moves out of at most $N$ steps
(and some of these paths may in fact not be feasible),
and similarly the number of monotone paths from $p'$ to $z_2$ is at most
$N \choose d$.
Therefore, $n_{p'} \leq {N \choose d}^2 \cdot n_p$, for every $p' \in qb$,
and consequently $Q' \leq {N \choose d}^2 \cdot Q$.
We have to account also for the $s-u$ paths through the query point $q$.
If $|qa| > |qb|$, then we can map them to the paths 
through the point at distance $d$
NW of $q$, but even if $|qa|=|qb|$, note similarly that the number of paths through
$q$ is at most $N^2$ times the number of paths through the point immediately NW of it.
In any case, since $d \leq w/2$, the total number of paths before the
$t$-th query is 
$2^{L_t} \leq 2 {N \choose {w/2}}^2 \cdot Q < N^w \cdot Q =N^w \cdot 2^{L_{t+1}}$.
The lemma follows.
\end{proof}

The rest of the lower bound argument proceeds as follows: We shall
show that there are at least $\Theta(\log N)$ decisive queries such
that we can ``charge'' to each of them $\Theta(\log N)$ other queries
--- naturally, a query should be charged to only one decisive query,
or at most a constant number of them.  
The theorem then follows immediately.  The first part, the
existence of $\Theta(\log N)$ decisive queries, is already obvious;
the $\Theta(\log N)$ queries that can be charged to each (without
much overcharging) will take a little more care to establish.
We show first that there is a set of $\Omega(\log N)$ decisive queries
that are $w$ far from each other in both coordinates.

\begin{lemma} \label{lem:effective}
If the total number of queries is no more than $log ^2 N$, then there is a set $S$ of $K=\Omega(\log N)$ decisive queries 
$\{q_1=(x_1,y_1),\ldots,q_K=(x_K,y_K)\}$ such that, for any $1\leq i\neq j\leq K$ we have that $|x_i-x_j|,|y_i-y_j|> w$.
\end{lemma}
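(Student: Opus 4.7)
The plan is in two steps. First I would establish $d = \Omega(\log N)$ decisive queries, and then extract $K = \Omega(\log N)$ that are well-separated.

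For the first step, I rely on the potential accounting already given. Each non-short non-decisive query reduces $L_t$ by at most $2$, each short query by at most $N^{1/2} \log N$, and each decisive query at most halves $L_t$. With at most $\log^2 N$ queries total, the aggregate non-decisive drop is bounded by $2 \log^2 N + \log^2 N \cdot N^{1/2} \log N = O(N^{1/2}\log^3 N) = o(N)$. Since $L_1 = 2N - O(\log N)$ must drop to $0$, the halvings must absorb the remaining $\Omega(N)$, which forces $d \ge \log_2 L_1 - O(\log\log N) \ge \log N - O(\log\log N)$.

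For the second step, I would traverse the decisive queries $q_1, \dots, q_d$ in path order and construct $S$ greedily: seed $S := \{q_1\}$, and add the next $q_j$ to $S$ whenever both $x_j - x_{\mathrm{last}} > w$ and $y_j - y_{\mathrm{last}} > w$, where $q_{\mathrm{last}}$ is the most recently added member of $S$. Since all $q_i$ lie on one monotone path, both coordinate sequences are non-decreasing, and so every pair in $S$ is automatically $>w$-separated in both coordinates once consecutive selections are.

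The main obstacle is showing $|S| \ge \Omega(\log N)$, i.e.\ that decisive queries cannot cluster so tightly as to spoil the greedy. My plan is to couple each halving of $L_t$ to the shrinkage of the current domain's semi-perimeter $R_t := (x_{h_t} - x_{a_t}) + (y_{h_t} - y_{a_t})$, via $L_t \le R_t$. For the first $\tfrac{1}{2}\log N - O(1)$ decisive queries we have $R_t \ge 2w$, so the next decisive query lies at least $\Omega(w)$ away in path distance. The delicate case is when this displacement lies almost entirely along one axis, leaving the domain narrow ($< w$) in the perpendicular direction: the adversary's short-query rule then fires on every subsequent query in the resulting corridor, and a careful budget analysis---tracking the potential drop from short queries (each worth $N^{1/2}\log N$) against the halving structure, and exploiting that short queries in turn consume from the $\log^2 N$ total query budget---bounds the length of any run of clustered decisive queries. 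This forces the greedy to advance by more than $w$ in both coordinates at least $\Omega(\log N)$ times. Making this simultaneous accounting of $L_t$, domain dimensions, corridor widths, and the short-query count precise is the technical heart of the argument and the step I expect to require the most care.
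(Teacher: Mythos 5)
Your proposal identifies the right two subgoals (many decisive queries, then a well-separated subset), and step one correctly uses the potential accounting. But there is a genuine gap, and it is precisely the part you flag as the ``technical heart'' you haven't carried out. The paper dispatches the clustering problem with a single decomposition you haven't found: it calls a decisive query $q_t$ \emph{ineffective} if $q_t$ lies within $w$ of a corner $(\underline{x},\underline{y})$ or $(\bar{x},\bar{y})$ of the current domain in either coordinate, and \emph{effective} otherwise. The crucial observation is that an ineffective decisive query drops $L_t$ by at most $w\log N$ (the subdomain on the ``near'' side has at most $\binom{N}{w}\le N^w$ paths), which is the same order as a short query. So ineffective decisive queries can be lumped with non-decisive queries in the potential budget, and the arithmetic you did then yields $\Omega(\log N)$ \emph{effective} decisive queries directly, not merely $\Omega(\log N)$ decisive queries of unknown quality.

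Once that is done, the separation you tried to establish via a greedy selection and a semi-perimeter coupling comes for free: an effective decisive query is, by definition, more than $w$ from the SW and NE corners of its domain in both coordinates; since the decisive queries form the corners of a nested sequence of domains, every previous decisive query either \emph{is} a current corner or lies beyond one, so an effective decisive query is automatically $>w$-far (in both coordinates) from all previous decisive queries, and in particular from all other effective ones. No greedy extraction, no corridor analysis, and no coupling to $R_t$ is needed. Your plan for step two is not wrong in spirit, but it is substantially harder than the problem requires and is left unfinished at exactly the point where the effective/ineffective distinction would have closed it.
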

\begin{proof}
Any decisive query $q_t=(x,y)$ takes place within a domain $D_t$ with
a SW-most point $(\underline{x},\underline{y})$ and a NE-most point
$(\bar{x},\bar{y})$.  We claim that, if $x$ is within $w$ of
$\underline{x}, \bar{x}$, or if $y$ is within $w$ of $\underline{y},
\bar{y}$, then this query decreases $L_t$ by at most $w\log N$. In
proof, if $x$ is within $w$ of $\bar{x}$ the number of paths between
$(x,y)$ and $(\bar{x},\bar{y})$ is at most ${N\choose w}\leq N^w$, and
a similar argument holds for the other directions.  Let us call such
decisive queries {\em ineffective}, and otherwise they are called {\em effective}.

In summary, we have a potential function that starts at the value $N$, and then is decreased in at most $\log^2 N$ steps 
either (1) by a factor of no more than two, minus additive 1
(decisive queries that are effective), or (2) by an additive term of at most 
$N^\alpha\log N$ (non-decisive, or ineffective decisive queries).  
It follows from arithmetic that there must be at least 
$\log\left({N^{1-\alpha}\over {\log^3 N}}\right)$ queries of type (1).

Hence there is a set $S$ of $K=\Omega(\log N)$ decisive queries that
are effective.  At the time each of these queries was issued, it
was farther than $w$ from the SW and NE corners of its domain, in both
the $x$ and the $y$ direction, and thus also farther than $w$ from any
other previous queries --- and this includes the previous decisive queries in
$S$.  Hence these queries are all farther than $w$ away from each
other, as claimed in the lemma.
\end{proof}

We show now how to assign $\Omega(\log n)$ non-decisive queries to each
`effective' query $q$ in the set $S$ of the previous lemma.
These are essentially the trace of the binary search that helped 
the algorithm corner the adversary into $q$.
We will refer in the following to the two boundary half-lines of the forbidden block
generated by a non-decisive query as its {\it walls}.
Consider a decisive effective query $q=(x,y)$ at time $t$.

\begin{lemma} \label{lem:walls}
For every decisive effective query point $q$ 
there are $\Omega(\log N)$ walls, generated by non-decisive queries,
 that intersect the NW-SE line through $q$ 
within a distance $w/2$ from $q$.
\end{lemma}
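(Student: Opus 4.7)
The plan is a binary-search-style lower bound argument localized to the anti-diagonal line $L$ through $q$, showing that the algorithm must have been forced through a binary search on $L$ requiring $\Omega(\log N)$ non-decisive queries, each placing a wall close to $q$. Let $I_t \subseteq L$ be the maximal connected segment of $L$ containing $q$ that lies outside the forbidden region (generated by non-decisive queries) at time $t$, and write $I_t = [q - a_t, q + b_t]$ by its $L_\infty$-extents to the NW and SE of $q$. Because $q$ is decisive, both anti-diagonal neighbors of $q$ lie in the forbidden region just before $q$ is queried, so $a_T, b_T \leq 1$. Because $q$ is effective (Lemma \ref{lem:effective}: $q$ is at $L_\infty$-distance $>w$ from every corner of its current domain, and other effective queries are $>w$ away in both coordinates so their walls stay $>w$ from $q$ on $L$), the portion of $L$ relevant to $q$'s binary search extends more than $w$ on each side of $q$, giving $a_0, b_0 > w$.

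A case analysis on the sign of $(x_{p_t} + y_{p_t}) - (x_q + y_q)$ and on the adversary's NW/SE answer shows that each non-decisive query at $p_t$ generates exactly one wall crossing $L$, at a point of $L_\infty$-distance $|y_{p_t} - y_q|$ or $|x_{p_t} - x_q|$ from $q$. If this crossing falls within $I_t$, it becomes the new endpoint of $I_t$ on that side, shrinking $a_t$ or $b_t$. The central technical step is to establish that each non-decisive query reduces the affected side-extent by at most a constant factor. For non-short queries, this combines the adversary's max-paths rule with a standard monotone-lattice-path count showing roughly uniform path density across the crossings of $L$ with $I_t$: picking the NW/SE answer with more feasible paths then preserves at least a constant fraction of the extent on $q$'s side of the new wall. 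For short queries, the explicit ``farthest from forbidden'' rule similarly bounds the shrinkage by forcing the wall away from the query's closest forbidden boundary.

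Given constant-factor shrinkage per cut, reducing $a_t$ or $b_t$ from $>w$ down to $\leq 1$ requires $\Omega(\log w) = \Omega(\log N)$ cuts on each side; after only a handful of initial cuts the extent on each side drops below $w/2$, whereupon every subsequent cut produces a wall within $w/2$ of $q$ on $L$, yielding $\Omega(\log N)$ such walls and proving the lemma. The main obstacle is the central technical step: rigorously bounding monotone-path counts in the irregular sub-domains carved out by earlier walls and showing that the adversary's max-paths choice between the two candidate wall positions approximately bisects the path density across $I_t$, so that neither side of the new wall can collapse by more than a constant factor per query.
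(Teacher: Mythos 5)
Your approach is genuinely different from the paper's, and it contains a real gap that you partly acknowledge yourself. You propose a ``forward'' binary-search invariant: track the free anti-diagonal extent $I_t=[q-a_t,q+b_t]$ through $q$ over time, observe that it starts at $a_0,b_0>w$ (because $q$ is effective) and ends at $a_T,b_T\leq 1$ (because $q$ is decisive), and conclude that you need $\Omega(\log w)$ cuts \emph{provided} each non-decisive query shrinks the affected side-extent by at most a constant factor. That last claim is the crux, and it does not follow from the adversary's rules. For non-decisive, non-short queries the adversary's choice is governed by the \emph{global} feasible-path count over the whole current domain; there is no reason that criterion respects per-line extents on the anti-diagonal through a particular future decisive query $q$. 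A query far from $q$ can, depending on which of NW/SE the global count favors, drop a wall that crosses $L$ essentially adjacent to $q$, collapsing $a_t$ or $b_t$ in a single step. Your appeal to ``roughly uniform path density across the crossings of $L$ with $I_t$'' would require a nontrivial concentration argument in the irregular sub-domains left by earlier walls, and this is precisely what you concede is the ``main obstacle''; as stated, the step does not go through.

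The paper's proof avoids controlling per-step shrinkage entirely. It works \emph{backward} from the moment $q$ becomes decisive: it inducts on $k$, maintaining a set $S_k$ of $k$ walls all crossing $L$ within an interval of length $\leq 2^k$ around $q$. The induction step looks at the earliest-created wall of $S_k$ on each side of $q$ and, assuming WLOG the left one (from query $q_l$) is later, shows geometrically that $q_l$ saw a forbidden wall (from the right-side query $q_r$) within distance $\leq 2^k \leq w/2$ along its own anti-diagonal --- so $q_l$ was a \emph{short} query. The short-query ``farthest from forbidden'' rule then forces the existence of yet another, earlier wall on the NW side within $2^k$, which crosses $L$ at most $2^k$ beyond $p_l$, yielding $S_{k+1}$. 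Thus the $\Omega(\log N)$ walls are extracted by chaining the short-query rule, not by bounding how fast the max-path rule can narrow an interval. If you want to salvage your framing you would need, at minimum, to prove the constant-factor shrinkage claim, and I do not see how the max-paths rule gives it; the paper's strategy of reducing to the short-query rule is much cleaner and is essentially unavoidable here.
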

\begin{proof}
Since the query point $q=(x,y)$ is decisive,
the points $p_1=(x-1,y+1)$, $p_2=(x+1,y-1)$ that are at distance 1 NW and SE from $q$
belong to the forbidden region, hence  there exist two walls
within a distance of 1 from $q$ on the NW-SE line, on either side of $q$, corresponding
to two queries $q_1=(x_1,y_1)$ and $q_2=(x_2,y_2)$, at times
$t_1, t_2$.
Since $q$ is effective, the queries $q_1, q_2$ are non-decisive.
We will use induction on $k =2, \ldots, \lfloor \log(w/2) \rfloor$, to show that
there is a set $S_k$ of  $k$ walls, generated by
non-decisive queries, that intersect the NW-SE line through $q$ on both sides,
within an interval that includes the point $q$ and has length $\delta_k \leq 2^k$
(in the $L_{\infty}$ metric).
This claim for $k=\lfloor \log(w/2) \rfloor$ implies immediately the lemma.
For the basis, $k=2$, we let $S_2$ contain the walls at $p_1$ and $p_2$.

For the induction step, consider the set $S_k$ of walls.
Let $t_l$ be the earliest time that generated a wall of $S_k$ that intersects 
the NW-SE line through $q$ left of $q$ (i.e. NW of $q$), let $p_l$ be the
intersection point and $q_l$ the query point that generated the wall.
Similarly, let $t_r$ be the earliest time that generated a wall of $S_k$ that intersects 
the NW-SE line through $q$ right of $q$ (i.e. SE of $q$), let $p_r$ be the
intersection point and $q_r$ the query point that generated the wall.

Suppose without loss of generality that $t_l > t_r$. 
Why did the adversary choose SE in response to query $q_l$ at time $t_l$? 
Since $t_l > t_r$, the walls of $q_r$ existed at time $t_l$.
The wall through $p_l$ is either vertical, in which case $q_l$ is below $p_l$,
or the wall is horizontal, in which case $q_l$ is to the right of $p_l$
(see Figure 3).
In either case, it is easy to see that the line from $q_l$ in the SE direction
hits a wall of the query point $q_r$ within distance at most the length $|p_l p_r |$ of the
segment $(p_l,p_r)$, thus at most $2^k$; 
Fig. 3 shows the geometry when the wall at $p_r$ is vertical (the case
of a horizontal wall at $p_r$ is symmetric).

\begin{figure}[h]
\centering
\vspace*{-1.1cm}
\includegraphics[scale=0.5]{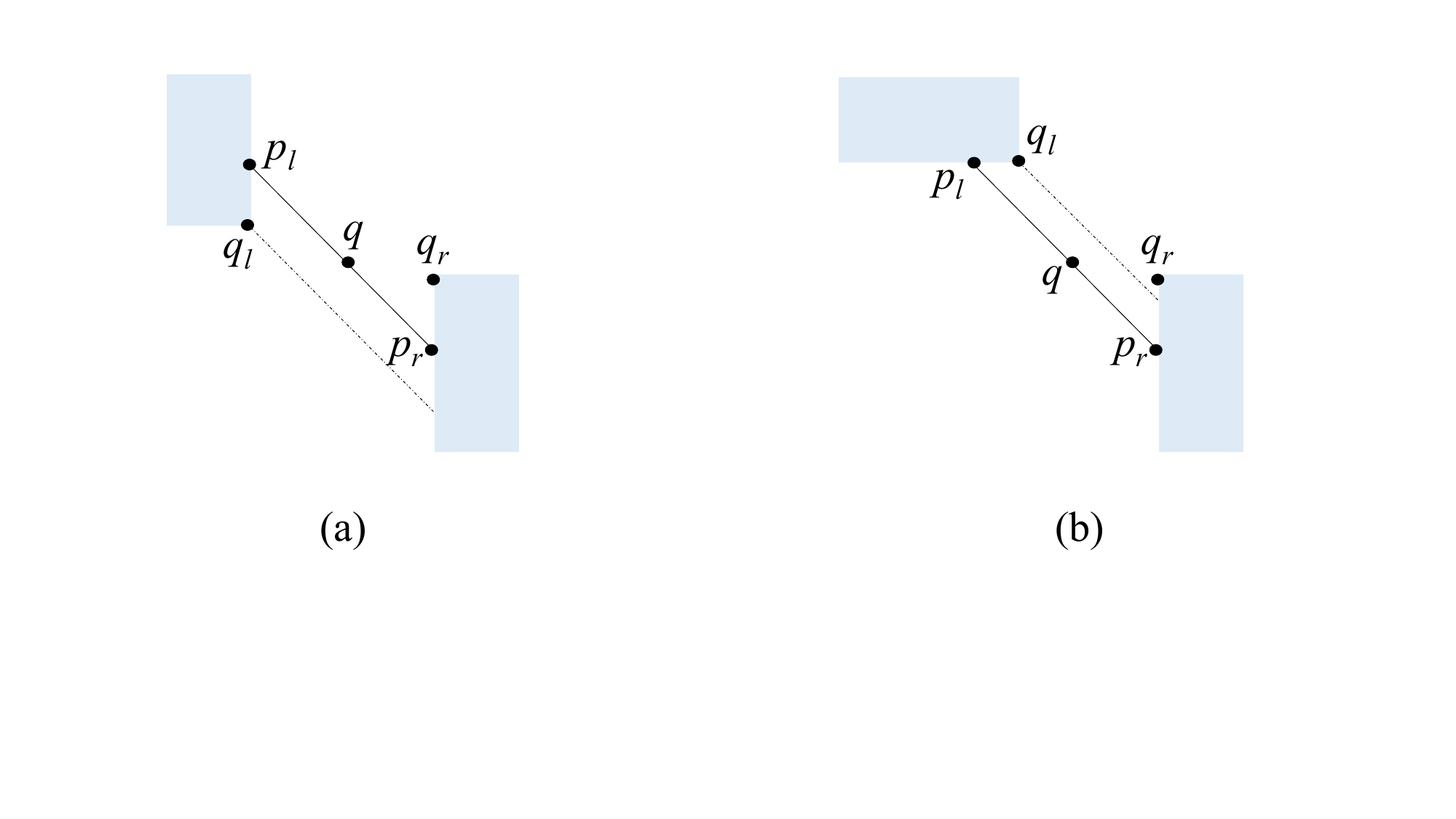}
\vspace*{-4cm}
\caption{}
\vspace*{-0.2cm}
\label{fig:wall}
\end{figure}

Since the line from $q_l$ in the SE direction hits a wall within $2^k  \leq w/2$, 
$q$ is a short query.
Since the adversary chooses SE at $q_l$, the line from $q_l$ in the NW direction
must hit also within distance at most $2^k$ another wall, generated
by a query point $q_{k+1}$ at an earlier time $t_{k+1} < t_l$.
Since $q_l$ is below or to the right of $p_l$, the NW-SE line through $q$
hits a wall of $q_{k+1}$ at a point $p_{k+1}$ that is at most $2^k$ beyond $p_l$.
Adding this wall to $S_k$ yields the set $S_{k+1}$ that satisfies the
induction hypothesis.
\end{proof}

We can now complete the proof of Theorem \ref{thm:det-2d}.
By Lemma \ref{lem:walls}, to every effective decisive query $q$ we can assign
$\Omega(\log N)$ non-decisive queries that generate walls within
$w/2$ of $q$, hence their $x-$ or $y-$coordinate is within $w/2$
of that of $q$. Since the $\Omega(\log N)$ effective queries of the set $S$ of Lemma \ref{lem:effective} are more than $w$ far from each other in both coordinates,
a non-decisive query can be close to at most one query of $S$
in $x$-coordinate and at most one in $y$-coordinate.
Therefore, there are $\Omega(\log^2 N)$ distinct non-decisive queries.
\end{proof}

\section{Supermodular Games}

\label{sec:super-modular}

\subsection{A brief intro to supermodular games}

A {\em supermodular game} is a game in which the set $S_i$ of strategies
of each player $i$ is a complete lattice, and the utility (payoff) functions $u_i$ satisfy certain conditions.
Let $k$ be the number of players and let $S= \Pi_{i=1}^k S_i$ be the set of strategy profiles.
As usual, we use $s_i$ to denote a strategy for player $i$ and
$s_{-i}$ to denote a tuple of strategies for the other players.
The conditions on the utility functions $u_i$ are the following:\\
C1. $u_i(s_i, s_{-i})$ is upper semicontinuous in $s_i$ for fixed $s_{-i}$, and it is continuous in $s_{-i}$ for each fixed $s_i$, and has a finite upper bound.\\
C2. $u_i(s_i, s_{-i})$ is supermodular in $s_i$ for fixed $s_{-i}$.\\
C3. $u_i(s_i, s_{-i})$ has increasing differences in $s_i$ and $s_{-i}$.

A function $f: L \rightarrow \real$ is {\em supermodular} if for all
$x,y \in L$, it holds $f(x)+f(y) \leq f(x \land y)+ f(x \lor y)$.
A function $f: L_1 \times L_2 \rightarrow \real$, where $L_1, L_2$ are lattices, has {\em increasing differences} in its two arguments,
if for all $x' \geq x$ in $L_1$ and all $y' \geq y$ in $L_2$, it holds that $f(x',y')-f(x,y') \geq f(x',y) -f(x,y)$.

The broader class of {\em games with strategic complementarities} (GSC)
relaxes somewhat the conditions C2 and C3 into C2', C3' which depend only
on ordinal information on the utility functions, i.e. how the utilities compare to each other rather than their precise numerical values. 
The supermodularity requirement of C2
is relaxed to quasi-supermodularity, where a function $f: L \rightarrow \real$ is {\em quasi-supermodular} if for all
$x,y \in L$, $f(x) \geq f(x \land y)$ implies $f(y) \leq f(x \lor y)$,
and if the first inequality is strict, then so is the second.
The increasing differences requirement of C3 is relaxed to the {\em single-crossing condition}, where a function $f: L_1 \times L_2 \rightarrow \real$, satisfies the
single crossing condition,
if for all $x' > x$ in $L_1$ and all $y' > y$ in $L_2$, it holds that $f(x',y) \geq f(x,y)$ implies  $f(x',y') \geq f(x,y')$, and if the first inequality is strict then so is the second.
All the structural and algorithmic properties below of supermodular games
hold also for games with strategic complementarities.
 
We will consider here games where each $S_i$ is a discrete (or continuous) finite box in $d_i$ dimensions of size $N$ in each coordinate. We let $d=\sum_{i=1}^k d_i$ be the total number
of coordinates.
In the discrete case, condition C1 is trivial.
Condition C2 is trivial if $d_i=1$ 
(all functions in one dimension are supermodular), 
but nontrivial for 2 or more dimensions.
C3 is nontrivial.

Supermodular games (and GSC) have pure Nash equilibria. 
Furthermore, the pure Nash equilibria form a complete lattice \cite{MR90},
thus there is a highest and a lowest equilibrium.
Another important property is that the best response correspondence 
$\beta_i(s_{-i})$ for each player $i$ has the property that
(1) both $\sup \beta_i(s_{-i})$ and $\inf \beta_i(s_{-i})$ are in
$\beta_i(s_{-i})$, and (2) both functions $\sup \beta_i (\cdot)$
and $\inf \beta_i(\cdot)$ are monotone functions \cite{Topkis98}.
The function $\bar{\beta}(s) = ( \sup \beta_1(s_{-1}), \ldots , \sup \beta_k(s_{-k}) )$ of the supremum best responses is a monotone function from $S$ to itself, and its greatest fixed point is the highest Nash equilibrium of the game.
The function $\underline{\beta}(s) = ( \inf \beta_1(s_{-1}), \ldots , \inf \beta_k(s_{-k}) )$ of the infimum best responses is also a monotone function, and its least fixed point is the lowest Nash equilibrium of the game.

\begin{example}{(A simplified Diamond search model \cite{MR90}.)} \ 
{\rm There are $k$ players (businesses).  Each player $i \in [k]$ can exert some amount of ``effort'', $s_i \in [0,m_i]$,  where $m_i >0$, to find
a business partner.  So, the strategy space $S_i$ of player $i$ is the closed bounded interval $[0,m_i]$. 
Any player $i$ incurs a cost $C_i(s_i)$ for exerting effort $s_i$, where we assume $C_i(\cdot)$ is some arbitrary continuous function
(we do not necessarily assume  that $C_i(s_i)$ is increasing in $s_i$; this is not needed).  
The payoff to player $i$ depends also on how much effort others are putting into finding a business partner.
Specifically, for each player $i$, we assume that for some $\alpha_i >0$ the utility function $u_i(s_1,\ldots,s_k)$ for player $i$ is given by:
$$u_i(s) :=  \alpha_i \cdot s_i \cdot (\sum_{j \neq i} s_j)  -  C_i(s_i)$$
Let us check that this is a supermodular game.  Clearly the strategy space $S_i = [0,m_i]$ of each player (a closed interval) is a complete lattice. \\   
C1. condition C1 certainly holds, since in fact $u_i(s_i,s_{-i})$ is continuous
in both $s_i$ and $s_{-i}$, and has a finite upper bound (because the strategy spaces of all players are bounded intervals).\\
C2. condition C2 holds vacuously, because for fixed $s'_{-i}$, the function $f(s_i) := u_i(s_i, s'_{-i})$
is a function in a single real-valued parameter, $s_i$, and 
any such function is supermodular, because  for all $s_i, s'_i \in S_i$,  $f(s_i) + f(s'_i) = f(\min \{s_i, s'_i\}) + f(\max\{s_i,s'_i\}) = 
f(s_i \land s'_i) + f(s_i \lor s'_i)$.\\
C3. To see that condition C3 holds,  i.e., that the payoff functions $u_i(s_i,s_{-i})$ have increasing
differences in $s_i$ and $s_{-i}$, 
suppose that $s'_i \geq s_i$  and $s'_{-i} \geq s_{-i}$  (coordinate-wise inequality).
Then note that we have:
\begin{eqnarray*}
u_i(s'_i,s'_{-i}) - u_i(s_i,s'_{-i}) & = &   \alpha_i s'_i (\sum_{j \neq i} s'_j) - C_i(s'_i)   -  ( \alpha_i s_i (\sum_{j \neq i} s'_j) - C_i(s_i))\\
                                     & = & \alpha_i (s'_i - s_i) (\sum_{j \neq i} s'_j) - C_i(s'_i) + C_i(s_i) \\
                                     & \geq & \alpha_i (s'_i - s_i) (\sum_{j \neq i} s_j) - C_i(s'_i) + C_i(s_i)\\
                                     & = &  u_i(s'_i,s_{-i}) - u_i(s_i,s_{-i}) 
\end{eqnarray*}}
\qed
\end{example}

\subsection{Complexity of equilibrium computation in supermodular games}

Given a supermodular game, the relevant problems include: (a) find a Nash equilibrium (anyone)\footnote{Whenever we speak
of finding a Nash Equilibrium (NE) for a supermodular game, we mean a {\em pure} NE, as we know that
these exists.}, and (b) find the highest or the lowest equilibrium. In the case of continuous domains, we again have to relax to an approximate solution.
We assume that we have access to a best response function, e.g. $\bar{\beta}(\cdot)$ and/or $\underline{\beta}(\cdot)$, as an oracle or as a polynomial-time function.
The monotonicity of these functions implies then easily the following:

\begin{proposition}
\label{prop:supmod-to-tarski-reduct}
1. The problem of computing a pure Nash equilibrium of a $k$-player supermodular game over a discrete finite strategy space $\Pi_{i=1}^k [N]^{d_i}$ reduces to the problem of computing a fixed point of a monotone function over $[N]^d$ where $d=\sum_{i=1}^k d_i$. 
Computing the highest (or lowest) Nash equilibrium reduces to computing the greatest (or lowest) fixed point of a monotone function.

2. For games with continuous box strategy spaces, $\Pi_{i=1}^k [1,N]^{d_i}$, and Lipschitz continuous utility functions with Lipschitz constant $K$, 
the problem of computing an
$\epsilon$-approximate pure Nash equilibrium reduces to 
exact fixed point computation for a monotone function with a discrete finite domain $[NK/\epsilon]^d$.
\end{proposition}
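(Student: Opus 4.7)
The plan is to use the monotone supremum and infimum best-response selections $\bar{\beta}$ and $\underline{\beta}$ already highlighted in the preceding discussion as the explicit vehicles of the reduction; the proof is essentially a matter of verifying that these give honest Tarski instances and of discretizing correctly in the continuous case.

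For part (1), I would define $F : [N]^d \to [N]^d$ (with $d = \sum_i d_i$) by simply concatenating coordinates, $F(s) := \bar{\beta}(s) = (\sup \beta_1(s_{-1}), \ldots, \sup \beta_k(s_{-k}))$. Two things need to be checked. First, $F$ is monotone on the product grid $\prod_i [N]^{d_i}$, viewed as $[N]^d$ under coordinate-wise order: this is immediate from the fact (quoted from Topkis) that each map $s_{-i} \mapsto \sup \beta_i(s_{-i})$ is monotone, and coordinate-wise monotonicity is preserved under taking products. Second, every fixed point of $F$ is a pure Nash equilibrium: if $s = F(s)$, then $s_i = \sup \beta_i(s_{-i})$, and by property (1) of the best-response correspondence (the supremum lies in $\beta_i(s_{-i})$) this means $s_i$ is a best response to $s_{-i}$ for each player $i$. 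Hence computing any Tarski fixed point of $F$ produces a pure NE. The highest-equilibrium statement then uses the very fact already quoted in the paper: the greatest fixed point of $\bar{\beta}$ is the highest NE. For the lowest-NE statement I would run the symmetric argument with the monotone function $\underline{\beta}$ in place of $\bar{\beta}$.

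For part (2), the continuous-box case with $K$-Lipschitz utilities, the idea is to combine the same $\bar{\beta}$ with the continuous-to-discrete reduction of Proposition \ref{con-disc}. The monotone selection $\bar{\beta}: \prod_i [1,N]^{d_i} \to \prod_i [1,N]^{d_i}$ need not be continuous, but it is monotone, which is all that Proposition \ref{con-disc} requires; so computing an $\epsilon'$-approximate fixed point of $\bar{\beta}$ on the continuous box reduces to computing an exact fixed point of a monotone function on the discrete grid $[N/\epsilon']^d$. The remaining quantitative step is to relate $\epsilon'$-approximate fixed points of $\bar{\beta}$ to $\epsilon$-approximate NE: if $\|s_i - \sup \beta_i(s_{-i})\|_\infty \leq \epsilon' = \epsilon/K$ for every $i$, then $K$-Lipschitzness of $u_i$ in $s_i$ gives $u_i(s_i,s_{-i}) \geq u_i(\sup \beta_i(s_{-i}), s_{-i}) - \epsilon \geq u_i(s'_i, s_{-i}) - \epsilon$ for any $s'_i$, so $s$ is an $\epsilon$-approximate NE. Setting $\epsilon' = \epsilon/K$ then lands us on the grid $[NK/\epsilon]^d$, as claimed.

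The only potentially delicate point is the last one: I need to make sure that Proposition \ref{con-disc} is genuinely being applied to a monotone map, not tacitly assuming continuity. A glance at its proof confirms this, since the construction there only invokes monotonicity of $f$, never continuity. Everything else is straightforward bookkeeping, so no single step looks like a real obstacle.
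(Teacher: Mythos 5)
Your proposal is correct and follows essentially the same path as the paper's proof: both use $\bar{\beta}$ (resp.\ $\underline{\beta}$) directly as the monotone function, both derive that a fixed point is a pure Nash equilibrium from the fact that $\sup\beta_i(s_{-i})\in\beta_i(s_{-i})$, and both handle part (2) by combining a Lipschitz argument (showing an $\epsilon/K$-approximate fixed point of $\bar{\beta}$ is an $\epsilon$-approximate NE) with Proposition \ref{con-disc}. Your explicit check that Proposition \ref{con-disc} relies only on monotonicity and not on continuity of $\bar{\beta}$ is a welcome bit of care, but it does not change the argument's structure.
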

\begin{proof}
1. Follows from the monotonicity of $\bar{\beta}(\cdot)$ and $\underline{\beta}(\cdot)$. If $s$ is fixed point of $\bar{\beta}(\cdot)$, then $s_i =\sup \beta_i(s_{-i})$ is a best response to $s_{-i}$
 for all $i$ (since $\sup \beta_i(s_{-i}) \in \beta_i(s_{-i})$),
therefore $s$ is a Nash equilibrium of the game.
The GFP of $\bar{\beta}(\cdot)$ is the highest Nash equilibrium.
Similarly, every fixed point of $\underline{\beta}(\cdot)$ is
an equilibrium of the game, and the LFP of $\underline{\beta}(\cdot)$
is the lowest equilibrium.

2. Suppose that the utility functions are Lipschitz continuous with
Lipschitz constant $K$. To compute an $\epsilon$-approximate Nash equilibrium of the game, it suffices to find a $\epsilon/K$-approximate
fixed point of the function $\bar{\beta}(\cdot)$.
For, if $s$ is such an approximate fixed point and
$s'=\bar{\beta}(s)$, then
$|s' - s| \leq \epsilon/K$ in every coordinate.
Hence $|u_i(s_i,s_{-i})-u_i(s'_i,s_{-i})| \leq \epsilon$,
and $s'_i$ is a best response to $s_{-i}$, hence $s$ is
an $\epsilon$-approximate equilibrium.
Computing an $\epsilon/K$-approximate
fixed point of the function $\bar{\beta}(\cdot)$ on the continuous domain, reduces by Proposition \ref{con-disc} 
to the exact fixed point problem for
the discrete domain $[NK/\epsilon]^d$.
\end{proof}

Not every monotone function can be the (sup or inf) best response function of a game.
In particular, a best response function has the property
that the output values for the components corresponding to a player depend only
on the input values for the other components corresponding to the other players.
Thus, for example, for two one-dimensional players, if the function $f(x,y)$ 
is the best response function of a game, it must satisfy 
$f_1(x,y)=f_1(x',y)$ for all $x,x',y$, and $f_2(x,y)=f_2(x,y')$ for all $x,y,y'$.
This property helps somewhat in improving the time needed to find a fixed point,
and thus an equilibrium of the game, as noted below. 
For example, in the case of two one-dimensional players, an equilibrium
can be computed in $O(\log N)$ time, instead of the $\Omega (\log^2 N)$ time 
needed to find a fixed point of a general monotone function in two dimensions.

\begin{theorem}\label{sup:upper}
Given a supermodular game with two players with discrete strategy spaces
$[N]^{d_i}$, $i=1,2$ with access to the sup (or inf) best response function
$\bar{\beta}(\cdot)$ (or $\underline{\beta}(\cdot)$),
we can compute an equilibrium in time $O( (\log N)^{\min(d_1,d_2)} )$.
More generally, for $k$ players with dimensions $d_1, \ldots , d_k$,
an equilibrium can be computed in time $O( (\log N)^{d'} )$, 
where $d' = \sum_i d_i - \max_i d_i$.
\end{theorem}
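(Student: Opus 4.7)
The plan is to exploit a structural property of the best-response function $\bar{\beta}$ (or $\underline{\beta}$) that does not hold for a general monotone function: for each player $i$, the $i$-th block of coordinates $\bar{\beta}_i(s)$ depends only on $s_{-i}$, not on $s_i$ itself. So even though the overall Tarski problem lives in $d = \sum_i d_i$ dimensions, one player's coordinates are in effect ``free variables'' that are determined by the strategies of the others.

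Concretely, let $k^\star \in \arg\max_i d_i$, so that $d_{k^\star} = \max_i d_i$ and the remaining players occupy $d' = d - d_{k^\star}$ coordinates. Denote a profile by $(s_{-k^\star}, s_{k^\star}) \in [N]^{d'} \times [N]^{d_{k^\star}}$. The first step is to define a function $g : [N]^{d'} \to [N]^{d'}$ by
\[
g_i(s_{-k^\star}) \;=\; \bar{\beta}_i\bigl(s_1,\ldots,s_{i-1},s_{i+1},\ldots,s_{k^\star-1},\,\bar{\beta}_{k^\star}(s_{-k^\star})\bigr)
\qquad (i \neq k^\star).
\]
In other words, on any input we first compute player $k^\star$'s best response to $s_{-k^\star}$ by a single oracle call, substitute that back in, and then evaluate the best responses of the remaining players.

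The next step is to verify two things. First, $g$ is monotone: each $\bar{\beta}_i$ is monotone, hence $\bar{\beta}_{k^\star}(s_{-k^\star})$ is monotone in $s_{-k^\star}$, and composing monotone maps coordinate-wise preserves monotonicity. Second, every fixed point $s^*_{-k^\star}$ of $g$ yields an equilibrium of the game: set $s^*_{k^\star} := \bar{\beta}_{k^\star}(s^*_{-k^\star})$; then by construction $\bar{\beta}_{k^\star}(s^*_{-k^\star}) = s^*_{k^\star}$, and for each $i \neq k^\star$, $\bar{\beta}_i(s^*_{-i}) = g_i(s^*_{-k^\star}) = s^*_i$, so $s^*$ is a fixed point of $\bar{\beta}$, which by Proposition \ref{prop:supmod-to-tarski-reduct} is a Nash equilibrium.

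Finally, invoke the $O((\log N)^{d'})$ divide-and-conquer algorithm for Tarski fixed points (described in Section 2, due to \cite{DangQiYe}) on the $d'$-dimensional monotone map $g$. Each query to $g$ requires $k$ queries to the given best-response oracle (one for player $k^\star$, one for each other player), which is absorbed into the $O(\cdot)$. The two-player case is the special instance $d' = \min(d_1,d_2)$, and when $d_1 = d_2 = 1$ this gives the claimed $O(\log N)$ bound. The only substantive point to be careful about is the monotonicity check for the composed map $g$; the rest is bookkeeping.
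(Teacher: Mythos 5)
Your proof is correct and uses the same key idea as the paper: exploit the fact that player $k^\star$'s best-response block $\bar{\beta}_{k^\star}$ does not depend on $s_{k^\star}$, so those $d_{k^\star}$ coordinates need not be searched over. The paper embeds this observation inside the divide-and-conquer recursion (when the recursion reaches the innermost $d_{k^\star}$ coordinates, the induced function is constant, so its fixed point is read off with one oracle call), while you lift it out front by composing $\bar{\beta}_{k^\star}$ into the remaining components to get a $d'$-dimensional monotone map $g$ and then invoke the standard $O((\log N)^{d'})$ algorithm on $g$ --- a cleaner but essentially equivalent packaging of the same argument.
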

\begin{proof}
Suppose that we have access to the sup best response $\bar{\beta}(\cdot)$.
Assume without loss of generality that the first player has the maximum
dimension, $d_1 = \max_i d_i$. 
We apply the divide-and-conquer algorithm, but take advantage of the 
property of the monotone function $\bar{\beta}$ that the first $d_1$
components of $\bar{\beta}(x)$ do not depend on the first $d_1$ coordinates of $x$.
As a consequence, for any fixed assignment to the other coordinates,
i.e. choice of a strategy profile $s_{-1}$ for all the players except the first player,
the induced function on the first $d_1$ coordinates maps every point
to the best response $\bar{\beta}_1 (s_{-1})$ of player 1.
Thus the fixed point of the induced function is simply $\bar{\beta}_1 (s_{-1})$,
it can be computed with one call to $\bar{\beta}$, 
and there is no need to recurse on the first $d_1$ coordinates.
It follows that the algorithm takes time at most $O( (\log N)^{d'} )$, 
where $d' = \sum_i d_i - \max_i d_i$.
\end{proof}

Conversely, we can reduce the fixed point computation problem for an arbitrary
monotone function to the equilibrium computation problem for a supermodular game
with two players.

\begin{theorem}\label{mon-to-sup}
1. Given a monotone function $f$ on $[N]^d$ (resp. $[1,N]^d$) 
we can construct a supermodular game
$G$ with two players, each with strategy space $[N]^d$ (resp. $[1,N]^d$), so that the equilibria of $G$ correspond to the fixed points of $f$.

2. More generally, the fixed point problem for a monotone function $f$ in $d$ dimensions can
be reduced to the equilibrium problem for a supermodular game with any number $k \geq 2$ of players with any dimensions $d_1, \ldots, d_k$, provided that $\sum_i d_i \geq 2d$
and $\sum_i d_i - \max_i d_i \geq d$.
\end{theorem}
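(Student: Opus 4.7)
The plan is to prove part 1 first with explicit quadratic utilities, then to generalize to part 2 by distributing the two ``virtual'' players' roles across the $k$ actual players via a combinatorial assignment, using a ``modified $y$'' trick to handle players that own coordinates of both roles.

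For part 1, let player 1 choose $x \in [N]^d$ and player 2 choose $y \in [N]^d$, and set
\[
u_1(x,y) := x \cdot f(y) - \tfrac{1}{2}\|x\|_2^2, \qquad u_2(x,y) := x \cdot y - \tfrac{1}{2}\|y\|_2^2.
\]
Each $u_i$ is separable and strictly concave in the coordinates of its own $\sigma_i$, so condition C2 is immediate and the unique best responses are $x = f(y)$ and $y = x$. Condition C3 reduces to the bilinear inequalities $(x'-x)\cdot(f(y')-f(y)) \ge 0$ and $(y'-y)\cdot(x'-x) \ge 0$, both following from monotonicity of $f$. The Nash equilibria are therefore exactly the pairs $(x^*, x^*)$ with $x^* \in \Fix(f)$, and the continuous case $[1,N]^d$ is handled identically.

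For part 2, the crucial step is a combinatorial lemma: under $D := \sum_i d_i \ge 2d$ and $D - \max_i d_i \ge d$, there exist maps $\phi^X, \phi^Y : [d] \to [k]$ with $\phi^X(j) \ne \phi^Y(j)$ for every $j$ and with $c_i := |(\phi^X)^{-1}(i)| + |(\phi^Y)^{-1}(i)| \le d_i$ for each player $i$. Equivalently, this asks for a loopless multigraph on vertex set $[k]$ with exactly $d$ edges and degree sequence $(c_i)$, each edge (oriented arbitrarily) supplying one pair $(\phi^X(j), \phi^Y(j))$. By the multigraph version of Hakimi's theorem, a non-negative sequence $(c_i)$ with $\sum_i c_i = 2d$ is realizable iff $\max_i c_i \le d$, so it suffices to pick $c_i \in [0, \min(d_i, d)]$ summing to $2d$. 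The required bound $\sum_i \min(d_i, d) \ge 2d$ holds by case analysis on the size of $\{i : d_i \ge d\}$: if it is empty, the bound reduces to $D \ge 2d$; if it has size at least two, two such players each contribute $d$; and if it is a singleton, that player is necessarily the one achieving $d_{\max}$ and contributes $d$, while the others contribute $D - d_{\max} \ge d$ by the second hypothesis.

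Given the assignment, let $S_i^X := (\phi^X)^{-1}(i)$, $S_i^Y := (\phi^Y)^{-1}(i)$ (disjoint because $\phi^X(j) \ne \phi^Y(j)$), designate the remaining $d_i - |S_i^X| - |S_i^Y|$ coordinates of player $i$ as ``dummies'', and define
\[
u_i \;:=\; \sum_{j \in S_i^X}\!\Bigl[x_j\, f_j(y^{(i)}) - \tfrac{1}{2} x_j^2\Bigr] \;+\; \sum_{k \in S_i^Y}\!\Bigl[y_k\, x_k - \tfrac{1}{2} y_k^2\Bigr] \;-\; \tfrac{1}{2}\!\sum_{\ell}\!(z_\ell - z_\ell^*)^2,
\]
where $y^{(i)}$ is $y$ with every entry $y_k$, $k \in S_i^Y$, replaced by the corresponding $x_k$. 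Since $\phi^X(k) \ne i$ for such $k$, these replacement $x_k$'s are controlled by \emph{other} players, so $y^{(i)}$ (and hence $f_j(y^{(i)})$) depends only on $\sigma_{-i}$ and is coordinatewise monotone in it. Therefore $u_i$ is separable in the coordinates of $\sigma_i$ (trivially supermodular), and each bilinear cross-term $x_j\, f_j(y^{(i)})$ or $y_k\, x_k$ has increasing differences in $(\sigma_i, \sigma_{-i})$ by the same argument as in part 1. At a Nash equilibrium, each $y$-slot's best response yields $y_k = x_k$, so $y = x$ and thus $y^{(i)} = x$ for every $i$, after which each $x$-slot's best response yields $x_j = f_j(x)$, giving $x \in \Fix(f)$; conversely any $x^* \in \Fix(f)$ yields an equilibrium by setting $x = y = x^*$ and pinning dummies at their targets. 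I expect the combinatorial assignment lemma to be the main obstacle, as both hypotheses on the $d_i$'s are used exactly there; the supermodular-game verification and the equilibrium correspondence are then routine extensions of part 1.
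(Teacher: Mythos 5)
Your part 1 is essentially the paper's construction with the two players' roles swapped and the utilities expanded bilinearly rather than written as negated squared distances; the two forms differ only by additive terms depending solely on $s_{-i}$, so they yield the same best responses, the same supermodularity and increasing-differences checks, and the same equilibrium set. For part 2, your route is genuinely different from the paper's, though animated by the same idea: both proofs construct a best-response map in which, for each of the $d$ dimensions, one ``$f$-applying'' coordinate is paired with one ``copying'' coordinate held by a different player, so that at equilibrium the copy forces the two to coincide and the $f$-applying coordinate then produces the fixed-point equation $x_j = f_j(x)$. The combinatorics of placing these $2d$ roles differ substantially. The paper orders players by increasing dimension, labels all $\sum_i d_i$ coordinates cyclically with labels $1,\dots,d$, and for each $j\le d$ hand-picks an explicit subvector of $d$ coordinates with distinct labels that avoids $j$'s owner (with a separate rule when $d_r>d$, which uses $r<k$); the coordinates $j>d$ then chain back via copies to the first block. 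You instead abstract the placement into finding $\phi^X,\phi^Y:[d]\to[k]$ with $\phi^X(j)\ne\phi^Y(j)$ and per-player load $c_i=|(\phi^X)^{-1}(i)|+|(\phi^Y)^{-1}(i)|\le d_i$, reduce this to loopless-multigraph degree realizability (Hakimi's criterion: $\sum_i c_i$ even and $\max_i c_i\le \tfrac12\sum_i c_i$), and verify feasibility of the capacity constraint $\sum_i\min(d_i,d)\ge 2d$ by a short case analysis on $|\{i: d_i\ge d\}|$ using both hypotheses exactly once each. Your ``modified $y^{(i)}$'' substitution, replacing $i$'s own $Y$-slots by the mirror $X$-slots of other players, is the precise analogue of the paper's ``subvector avoiding player $r$'' and correctly makes $f_j(y^{(i)})$ a monotone function of $\sigma_{-i}$ only. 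Net comparison: your version is more modular, and the feasibility argument is cleaner and easier to audit, at the price of invoking (or reproving) the multigraph degree-realizability criterion; the paper's construction is self-contained but its index bookkeeping is harder to parse. Both are correct.
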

\begin{proof}
1. We will define the utility functions $u_i$ so that the best responses $\beta_i$ of both players
are functions (i.e. are unique). For player 1, the best response will be 
$\beta_1(y)=y$, for all $y \in [N]^d$, and for player 2, the best response will be 
$\beta_2(x)=f(x)$, for all $x \in [N]^d$.
If $x$ is a fixed point of $f$, then $(x,x)$ is an equilibrium of the game,
since $\beta(x,x) =(x,f(x))=(x,x)$.
Conversely, if $(x,y)$ is an equilibrium of the game, then
$\beta(x,y)=(x,y)$, therefore $x=y$ and $y=f(x)$, hence $x=f(x)$.
Thus, the set of equilibria of $G$ is $\{(x,x) | x \in Fix(f) \}$.

The utility function for player 1 is set to 
$u_1(x,y) = -(x-y)^2 = -\sum_{j=1}^d (x_j-y_j)^2$.
The utility function for player 2 is 
$u_2(x,y) = -(f(x)-y)^2 = -\sum_{j=1}^d (f_j(x)-y_j)^2$.
Obviously, the best response functions are as stated above,
$\beta_1(y)=y$ and $\beta_2(x)=f(x)$.

The utility functions $u_1, u_2$ satisfy condition C2 with equality.
For example, to check $u_2$ ($u_1$ is similar), 
fix a $x$ and consider two values $y, y'$.
For every $j=1,\ldots, d$, we have
$-(f_j(x)-y_j)^2 - (f_j(x)-y'_j)^2 = -(f_j(x)-\max (y_j,y'_j))^2 - (f_j(x)- \min (y_j,y'_j))^2$.
Summing over all $j$ yields:
$-(f(x)-y)^2 - (f(x)-y')^2 = -(f(x)- \max(y,y'))^2 - (f_j(x)- \min(y,y'))^2$.

To verify condition C3 for $u_2$, consider any $x' \geq x$ and $y' \geq y$. 
We have $u_2(x',y')-u_2(x,y') - (u_2(x',y) -u_2(x,y))$
$= -\sum_{j=1}^d (f_j(x')-y'_j)^2 +\sum_{j=1}^d (f_j(x)-y'_j)^2
+\sum_{j=1}^d (f_j(x')-y_j)^2 -\sum_{j=1}^d (f_j(x)-y_j)^2$
$= \sum_{j=1}^d 2(y'_j-y_j)(f_j(x')-f_j(x)) \geq 0$,
where the last inequality holds because $y' \geq y$
and $f(x') \geq f(x)$ since $x' \geq x$ and $f$ is monotone.
Similarly, condition C3 can be verified for $u_1$.

2. Order the players in increasing order of their dimension,
let $T$ be the ordering of all the $\sum_{i=1}^k d_i$ coordinates consisting
first of the set $Co(1)$ of coordinates of player 1 (in any order), then the
set $Co(2)$ of coordinates of player 2, and so forth.
Number the coordinates in the order $T$ from 1 to $\sum_{i=1}^k d_i$,
and label them cyclically with the labels $1, \ldots, d$.

We define the (unique) best response function $\beta$ as follows.
For every coordinate $j \leq d$ (in the ordering $T$), 
we set $\beta_j (x)=f_j(x')$, where $x'$ is a subvector of $x$ with $d$ coordinates
that have distinct labels $1,\ldots,d$ and which belong to different players than coordinate $j$.
The subvector $x'$ is defined as follows. Suppose that coordinate $j$ belongs to player $r$
($j \in Co(r)$),
and let $t=\sum_{i=1}^{r-1} d_i$. If $d_r \leq d$, then $x'$ is the subvector of $x$
that consists of the first $t$ coordinates (in the order $T$) and the coordinates $t+1+d, \ldots, 2d$;
note that all these coordinates do not belong to player $r$.
If $d_r >d$, then $r<k$ (since $\sum_i d_i - \max_i d_i \geq d$). In this case, let
$x'$ be the subvector of $x$ consisting of the last $d$ coordinates (in $T$); all of these
belong to player $k \neq r$.
For coordinates $j >d$, we set $\beta_j (x)= x_{j'}$, where
$j' \in [d]$ is equal to $j \mod d$, unless $j'$ belongs to the same player $r$ as $j$,
in which case $d_r >d$, hence $r \neq k$;  in this case we set $\beta_j (x)= x_{j"}$
for some (any) coordinate $j"$ of the last player $k$ that is labeled $j'$.

We define the utility functions of the players so that they yield the
above best response function $\beta$.
Namely, we define the utility function of player $i$ to be 
$u_i(x) =-\sum_{j \in Co(i)} (x_j - \beta_j (x))^2$.
It can be verified as in part 1 that the utility functions 
satisfy conditions C2 and C3.
It can be easily seen also that at any equilibrium of the game,
all coordinates with the same label must have the same value,
and the corresponding $d$-vector $x$ is a fixed point of $f$.
Conversely, for any fixed point $x$ of $f$,
the corresponding strategy profile of the game is an equilibrium.
\end{proof}

Since the 2-dimensional monotone fixed point problem requires $\Omega (\log^2 N)$ queries by
Theorem \ref{thm:main-2d}, it follows that the equilibrium problem for two 2-dimensional players
also requires $\Omega (\log^2 N)$ queries, which is tight
because it can be also solved in $O (\log^2 N)$ time by
Theorem \ref{sup:upper}.
Similarly, for higher dimensions $d$, if the monotone fixed point problem
requires $\Omega (\log^d N)$ queries then the equilibrium problem for two $d$-dimensional players is also $\Theta (\log^d N)$. 

The same reduction from monotone functions to supermodular games of Theorem \ref{mon-to-sup},
combined with Proposition \ref{lfp:hard} implies the hardness
of computing the highest and lowest equilibrium.

\begin{corollary}
\label{cor:l-g-equil-np-hard}
It is \NP-hard to compute the highest and lowest equilibrium of a supermodular game
with two 1-dimensional players with explicitly given polynomial-time best response 
(and utility) functions.
\end{corollary}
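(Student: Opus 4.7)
The plan is to chain together the two tools already in hand: Proposition \ref{lfp:hard}, which gives NP-hardness of computing the LFP/GFP of an explicitly presented, polynomial-time computable monotone function $f$ in one dimension, and Theorem \ref{mon-to-sup}(1), which turns such an $f$ into a two-player supermodular game whose equilibria correspond exactly to $\Fix(f)$.

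First, I would take an arbitrary SAT instance $\phi$ and invoke the construction in the proof of Proposition \ref{lfp:hard} to obtain a one-dimensional monotone function $f:\{0,1,\ldots,2^n\}\to\{0,1,\ldots,2^n\}$, given by a polynomial-time algorithm, whose LFP is $2^n$ iff $\phi$ is unsatisfiable, and symmetrically a variant whose GFP encodes (un)satisfiability. Next, I would feed $f$ into the $d=1$ case of the reduction in Theorem \ref{mon-to-sup}(1): this yields a supermodular game $G$ with two one-dimensional players, each with strategy space $\{0,1,\ldots,2^n\}$, and with utility functions $u_1(x,y) = -(x-y)^2$ and $u_2(x,y) = -(f(x)-y)^2$. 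Both utilities, as well as the corresponding best-response functions $\beta_1(y)=y$ and $\beta_2(x)=f(x)$, are polynomial-time computable since $f$ is.

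By Theorem \ref{mon-to-sup}(1), the set of pure Nash equilibria of $G$ is exactly $\{(x,x) : x \in \Fix(f)\}$. Under the coordinate-wise order on the strategy profile space, the mapping $x \mapsto (x,x)$ is an order isomorphism between $\Fix(f)$ and the equilibrium lattice of $G$. Hence the lowest equilibrium of $G$ is $(x_{\min},x_{\min})$ where $x_{\min}$ is the LFP of $f$, and the highest equilibrium of $G$ is $(x_{\max},x_{\max})$ where $x_{\max}$ is the GFP of $f$. Thus any algorithm that computes either extreme equilibrium of $G$ immediately yields the LFP or GFP of $f$, which decides SAT by Proposition \ref{lfp:hard}.

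There is no real obstacle here; the only things worth double-checking are bookkeeping: that the reduction preserves the order between fixed points and equilibria (immediate from the diagonal embedding), and that the game produced is genuinely supermodular with polynomial-time utilities on a one-dimensional strategy space (already verified inside the proof of Theorem \ref{mon-to-sup}, where conditions C1--C3 are checked for the utilities $-(x-y)^2$ and $-(f(x)-y)^2$). Combining these observations yields the stated NP-hardness for both the highest and the lowest equilibrium. \qed
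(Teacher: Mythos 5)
Your proposal is correct and follows essentially the same approach as the paper: the paper itself derives the corollary by combining the reduction of Theorem \ref{mon-to-sup} with Proposition \ref{lfp:hard}, observing that equilibria of the constructed game are exactly the diagonal points $(x,x)$ with $x \in \Fix(f)$, so extreme equilibria correspond to the LFP/GFP of $f$. Your added remark that the diagonal embedding is an order isomorphism makes the correspondence between extreme fixed points and extreme equilibria explicit, which is a minor but welcome bit of rigor.
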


\section{Condon's and Shapley's stochastic games  
reduce to \Tarski{}}

\label{sec:condon-shapley}

In this section we show that computing 
the {\em exact} (rational) value of Condon's
simple stochastic games (\cite{Condon92}), as well as computing
the (irrational) value
of Shapley's more general (stopping/discounted) stochastic games \cite{Shapley53}
to within a given desired error $\epsilon > 0$ (given in binary),
are both polynomial time reducible to
$\Tarski{}$.

\subsection{Condon's simple stochastic games reduce to \Tarski{}}

Recall
that a {\em simple stochastic game}\footnote{The definition we give here 
for SSGs is slightly
  more general than Condon's original definition in \cite{Condon92}.  
Specifically, Condon allows edge probabilities of $1/2$ only, and 
also assumed that the game is a ``stopping game'', meaning it halts with probability 1, regardless
of the strategies of the two players.
It is well known that our more general definition does not alter  the
  difficulty of computing the game value and optimal
strategies: solving general SSGs can be reduced in P-time to solving SSGs 
in Condon's more restricted form.  (This follows, e.g., by an easy adaptation of the proof of Lemma 8 in \cite{Condon92}.)}
(SSG) is a 2-player zero-sum game,
played on the vertices of an edge-labeled directed graph,
specified by
$G = (V,V_0,V_1,V_2,\delta)$,
whose vertices $V = \{v_1,\ldots,v_n\}$ include two
special sink vertices,  a  ${\mathbf{0}}$-sink, $v_{n-1}$, and a 
$\mathbf{1}$-sink, $v_{n}$, and
where
the rest of the vertices $V \setminus \{ v_{n-1}, v_{n}\} = 
\{v_1,\ldots,v_{n-2}\}$ 
are partitioned into three disjoint sets $V_0$ (random), $V_1$ (max), and $V_2$ (min).
The labeled directed edge relation is 
$\delta \subseteq (V \setminus \{v_{n-1} , v_n \}) 
\times ((0,1] \cup \bot) \times
V$.
For each ``random'' node $u \in V_0$, 
every outgoing edge $(u,p_{u,v},v) \in \delta$ is labeled by a positive 
probability
$p_{u,v} \in (0,1]$,  such that these probabilities sum to $1$, i.e., $\sum_{
\{v \in V \mid (u,p_{u,v},v) \in \delta\}} p_{u,v} = 1$.
We assume, for computational purposes, that the probabilities $p_{u,v}$
are rational numbers (given as part of the input, with numerator and denominator given in binary).
The outgoing edges from ``max'' ($V_1$) and ``min'' ($V_2$) nodes 
have an empty label, ``$\bot$''.  
We assume each vertex $u \in V \setminus \{ v_{n-1} , v_{n} \}$
has at least one outgoing edge.
Thus in particular, for any node $u \in V_1 \cup V_2$ 
there exists an outgoing edge $(u,\bot,v) \in \delta$ for some $v \in V$. 
Finally, there is a designated start vertex $s \in V$.

A play of the game transpires as follows:  a token is initially
placed on $s$, the start node. Thereafter, during
each ``turn'',  when the token is currently on a node $u \in V$,
unless $u$ is already a sink node (in which case the game halts), the token is moved across
an outgoing edge of $u$ to the next node by whoever ``controls'' $u$. 
For a random node $u \in V_0$, which 
is controlled by ``nature'', the outgoing
edge is chosen randomly according to the probabilities $(p_{u,v})_{v \in V}$.
For $u \in V_1$, the outgoing edge is chosen  by
player 1, the $\max$ player, who aims to maximize the
probability that the token will eventually reach the $\mathbf{1}$-sink.
For $u \in V_2$, the outgoing edge is chosen by player 2,
the $\min$ player, who aims to minimize the probability
that the token will eventually reach the $\mathbf{1}$-sink.
The game halts if the token ever reaches either of the two sink nodes.

For every possible start node $s = v_i \in V$, this zero-sum game has a well 
defined {\em value},
$q^*_i \in [0,1]$.  This is, by definition,
a probability such that player 1, the max player  (and, respectively, 
player 2, the 
min player)
has a strategy to ``force'' 
reaching the $1$-sink  with probability at least (respectively, at most)
$q^*_i$, 
irrespective of what the other player's strategy is.
In other words, these games are {\em determined}. 
Moreover 
$q^*_i$ is a
rational value whose encoding size, with numerator and denominator in binary, 
is polynomial in the bit encoding size of the SSG
(\cite{Condon92}).  Furthermore, both players have  deterministic,  
memoryless (a.k.a., pure, positional) optimal strategies in the game
(which do not depend on the specific start node $s$), in which for each vertex $u \in V_1$ (or $u \in V_2$)
the max player (respectively the min player) chooses the same specific outgoing edge
every time the token visits vertex $u$, 
regardless of the prior history of play prior to that visit to $u$.

Given an SSG, the goal is to compute the value of the game (starting at each vertex).
Condon (\cite{Condon92})
already showed that the problem of deciding whether the value
is $> 1/2$ is in \NP{} $\cap$ \coNP, and it is a long-standing open problem
whether this is in P-time.
Moreover, the search problem of computing the value for an SSG is known to be in both $\PLS{}$ and $\PPAD{}$
(see, e.g., \cite{Yan90} and \cite{EY2010}).

\begin{proposition}
The following total search problem is polynomial-time reducible to $\Tarski{}$: 
Given an instance $G$ of Condon's simple stochastic game, and given
a start vertex $s = v_i \in V$, 
compute the exact (rational) value $q^*_i$ of the game.
\end{proposition}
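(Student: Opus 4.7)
The plan is to express the value vector $q^*$ as the unique fixed point of a monotone Bellman operator, discretize it to a monotone map on a polynomially large integer grid, invoke the \Tarski{} oracle, and post-process the returned fixed point to recover $q^*$ exactly. First I would reduce, via Condon's standard P-time reduction, to an equivalent stopping SSG in which every play reaches a sink with probability 1 under all deterministic memoryless strategy pairs. Then I would define the Bellman operator $F: [0,1]^n \to [0,1]^n$ componentwise by $F_i(q) = 0$ (resp.\ $1$) if $v_i$ is the ${\mathbf 0}$-sink (resp.\ ${\mathbf 1}$-sink), $F_i(q) = \sum_{v} p_{v_i,v} q_v$ if $v_i \in V_0$, $F_i(q) = \max_{(v_i,v)\in\delta} q_v$ if $v_i \in V_1$, and $F_i(q) = \min_{(v_i,v)\in\delta} q_v$ if $v_i \in V_2$. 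Because $F$ is built from convex combinations, max, and min of its coordinates, it is monotone; the stopping property makes $F$ ``polynomially contracting'' in the sense of \cite{EY2010}, so $q^*$ is its unique fixed point in $[0,1]^n$ and has rational coordinates of polynomial bit complexity (a classical bound of Condon).

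Next I would choose an integer $N$ with $\log N$ polynomial in the input size, large enough that (i) $q^* \in (1/N)\mathbb{Z}^n$, and more importantly (ii) any $(1/N)$-approximate fixed point of $F$ (in $L_\infty$) lies strictly closer to $q^*$ than to any value vector arising from a non-optimal memoryless strategy pair. Such $N$ exists because, by the bit-complexity of values under fixed strategies, the gap between the optimal and any suboptimal action at every controlled node is at least $2^{-\text{poly}(|G|)}$. I then define $f: [N+1]^n \to [N+1]^n$ by
\[
f(x) \ := \ \bigl\lfloor N \cdot F(x/N) \bigr\rfloor,
\]
where the floor is componentwise. Monotonicity of $F$ and of the floor makes $f$ monotone, and $f$ can clearly be evaluated by a polynomial-size boolean circuit (convex combinations of rationals with polynomial bit size, plus max/min/floor). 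I would then call the \Tarski{} oracle on $C_f$ and obtain some fixed point $x^* \in [N+1]^n$; by construction $q := x^*/N$ satisfies $F(q) - q \in [0, 1/N)^n$, so $q$ is a $(1/N)$-approximate fixed point of $F$.

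The extraction step is the crux. From $q$, at each node $v_i \in V_1$ I would declare the optimal action to be the outgoing edge $v$ maximizing $q_v$ (ties resolved by any fixed rule), and analogously at each $v_i \in V_2$. By the choice of $N$ and the separation bound from the previous paragraph, these are exactly the optimal actions of a deterministic memoryless optimal strategy pair. Fixing these strategies turns the SSG into an absorbing Markov chain, whose exact rational value vector can be computed in polynomial time by solving the resulting linear system over $\mathbb{Q}$; reading off the coordinate $q^*_i$ gives the desired exact value.

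The main obstacle I expect is making the separation argument in Step 2 quantitative enough to justify a polynomial $\log N$ — i.e., proving that at every controlled node the value gap between optimal and suboptimal outgoing edges is at least an inverse of a fixed polynomial in $2^{|G|}$, so that a $1/N$-approximation with polynomially many bits suffices. This follows from the standard Condon-style analysis of the bit complexity of the unique linear-system solution associated with any fixed strategy pair, combined with the observation that ``nearly optimal under $q$'' implies ``globally optimal'' once $1/N$ is smaller than the minimum such gap. Putting the pieces together, the whole procedure runs in polynomial time and makes a single oracle call to \Tarski{}, establishing the reduction.
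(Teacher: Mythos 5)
The overall shape of your reduction matches the paper's --- set up the Bellman operator, discretize it to a monotone map on an integer grid with polynomially many bits per coordinate, invoke the \Tarski{} oracle, and round the approximate fixed point to the exact rational value vector --- but one load-bearing claim is incorrect as stated, and it is exactly the point at which the paper does something different. You assert that the stopping property makes $F$ itself ``polynomially contracting'' in the sense of \cite{EY2010}. It does not. At a max node $F_i(q)=\max_j q_j$ has $L_\infty$-Lipschitz constant exactly $1$, and the same is true at min nodes and at random nodes whose outgoing probabilities sum to $1$ with no direct sink edge. The one-step Bellman operator is nonexpansive, not a contraction, so a small Bellman residual $\|F(q)-q\|_\infty<1/N$ does \emph{not} by itself bound $\|q-q^*\|_\infty$. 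Your condition (ii) --- that any $(1/N)$-approximate fixed point of $F$ is close to $q^*$ --- is precisely what a contraction-type argument would supply, and the strategy-value separation bound you invoke does not: that bound quantifies the gap between the exact value vectors of distinct memoryless strategy pairs, not the distance from an approximate fixed point of $F$ to $q^*$. As a sanity check, for a symmetric random-walk chain between the two sinks, the perturbation $q_i = q^*_i + \delta\sin(\pi i/n)$ vanishes at the sinks, has $\|q-q^*\|_\infty=\delta$, but has Bellman residual only $O(\delta/n^2)$; with exponentially biased probabilities the residual can be exponentially smaller than the $L_\infty$ error, so the needed relation between residual size and error is a genuine quantitative lemma, not a formality.

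The paper sidesteps this by working not with $F$ but with the $\beta$-discounted operator $F^{\beta}(x):=(1-\beta)F(x)$ for an explicit exponentially small $\beta$. This $F^{\beta}$ genuinely has Lipschitz constant $1-\beta$, hence is polynomially contracting, and Proposition 2.2, part (3), of \cite{EY2010} then gives the residual-to-error bound: if $\|F^{\beta}(q')-q'\|_\infty < (\epsilon/2)\beta$ then $\|q'-q^{\beta}\|_\infty < \epsilon/2$. A quantitative lemma of Condon (Lemma~8 of \cite{Condon92}) bounds $\|q^*-q^{\beta}\|_\infty$, and together these justify rounding the discretized approximate fixed point to $q^*$. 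To repair your argument you should either switch to $F^{\beta}$ as the operator you discretize, or prove directly that a sufficiently small Bellman residual for the undiscounted $F$ on a stopping game forces proximity to $q^*$ with at most a $2^{\mathrm{poly}(|G|)}$ blow-up factor --- the latter is plausible but is not implied by the strategy-value separation you cite, and you also correctly flag that you have not proved it. The remaining steps of your plan (monotone floor discretization, calling \Tarski{}, extracting an optimal deterministic strategy pair and solving the resulting linear system over $\mathbb{Q}$) are sound once this gap is filled.
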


\begin{proof}
Let $x = (x_1,\ldots,x_n)$ be an $n$-vector of variables.
The $n$-vector $q^*$ of values, $q^*_i$, of the
SSG starting at each vertex $v_i$, 
is given by the {\em least fixed point} (LFP) solution 
of the following monotone min-max-linear system 
of $n$ equations in $n$ unknowns:

\[   x_i =  \left\{ \begin{array}{ll}
                        
\sum_{ \{ v_j \in V \mid (v_i,p_{v_i,v_j},v_j) \in \delta\} } p_{v_i,v_j} x_j &  
\mbox{if $v_i \in V_0$}\\
\max \{ x_j \mid  (v_i, \bot, v_j) \in \delta\}   &  \mbox{if $v_i \in V_1$}\\
\min \{ x_j \mid (v_i,\bot, v_j) \in \delta \}  &  \mbox{if $v_i \in V_2$}\\
0  &   \mbox{if $v_i = v_{n-1}$ is the $\mathbf{0}$-sink}\\
1  &    \mbox{if $v_i = v_{n}$ is the $\mathbf{1}$-sink}
\end{array}
\right. \]

We denote this system of equations by $x=F(x)$. 
Note that $F(x)$ defines a monotone map $F:[0,1]^n \rightarrow [0,1]^n$
mapping the complete lattice 
$[0,1]^n$ (under coordinate-wise order) to itself.  Thus by 
Tarski's theorem it has a least (as well as greatest) fixed point.  
It is well known that the least fixed point (LFP) is $q^*$.\footnote{This
is not stated explicitly in \cite{Condon92}, who assumes for simplicity that the 
SSGs are stopping games, and thus whose equations have a unique fixed point;  but it follows easily from
well known facts.  See, e.g., \cite{EY-rssg-15}
for a generalization of this fact to a much richer class of (infinite-state) stochastic games.}

Consider now the ``$\beta$-discounted'' 
(or ``$\beta$-stopping'') version of these equations, 
$x=(1-\beta)F(x)$.
where each equation now has the form 
$x_i = (1-\beta) F_i(x)$,
for a given discount value $\beta \in (0,1)$.   We can also view these equations
as corresponding to a modified $\beta$-stopping version, $G^{\beta}$, of the original SSG, $G$,
where at each vertex there is a direct probability $\beta$ of immediately transitioning
to the $\mathbf{0}$-sink; and with the remaining probability, $(1-\beta)$, 
there remain exactly the same possibilities as before in $G$.)

Letting $F^\beta(x) := (1-\beta)F(x)$, note that $F^{\beta}:[0,1]^n \rightarrow [0,1]^n$
defines both a monotone map {\em and} a {\em contraction} map with respect
to the $l_\infty$ norm.
Specifically, for $x, y \in [0,1]^n$,    $\| F^{\beta}(x) - F^{\beta}(y) \|_\infty 
\leq (1-\beta) \| x - y \|_{\infty}$.
Hence, by Banach's fixed point theorem, $x=F^{\beta}(x)$ has a {\em unique} fixed point
solution, $q^\beta \in [0,1]^n$  (which is also both the least and greatest fixed point of
$x=F^{\beta}(x)$ in $[0,1]^n$).   The vector $q^\beta$ corresponds to the game values
of the $\beta$-stopping game $G^{\beta}$, starting at each vertex.

Let $|G|$  denote the bit encoding size of the given SSG, $G$.
There is a fixed polynomial, $h()$ such that for any SSG, $G$,
the denominator of the rational values 
$q^*_i$
is at most $2^{h(|G|)}$.
If we apply this to the already $\beta$-discounted SSG, $G^\beta$, then this says
that the denominators of the values $q^\beta_i$ are at most
$2^{h(|G| + \log(1/\beta))}$.

Moreover, for any SSG $G$, 
there is also a fixed polynomial, $r(x)$, such that given
a rational vector $q' \in [0,1]^n$,  such that $\| q^* - q' \|_{\infty} < 2^{-r(|G|)}$,
the closest rational number to $q'_i$ with denominator at most $2^{h(|G|)}$ is
$q^*_i$.

It is also known (see, e.g., Lemma 8 in \cite{Condon92})\footnote{Again, although
Condon's lemma is phrased assuming $G$ is a stopping game where edge probabilities
are always $1/2$, essentially the same proof with minor modification can be used to establish the analogous results in the
more general setting of non-stopping SSGs with arbitrary rational edge probabilities.}
that there is some fixed polynomial $t(\cdot)$, such that if $\beta  = \epsilon 2^{-t(|G|)}$,
for any $\epsilon \in (0,1)$,
then $\| q^* - q^\beta \|_{\infty} < \epsilon/2$.

Thus if we let $\epsilon = 2^{-r(|G|)}$,
and $\beta = \epsilon 2^{-t(|G|)}$, 
then not only do we have $\| q^* - q^\beta \|_\infty < 2^{-r(|G|)}$,
but we also have that, for all $i \in [n]$, the closest rational number to $q^\beta_i$
with denominator at most $2^{h(|G|)}$ is  $q^*_i$.

Next we note that for $\beta = 2^{-w(|G|)}$,  where $w(x) := r(x) + t(x)$ is a polynomial, the map $F^\beta: [0,1]^n \rightarrow [0,1]^n$
defines a  {\em polynomially contracting}  
function, as defined in
\cite{EY2010}, because for all $x, y \in [0,1]^n$,    $\| F^{\beta}(x) - F^{\beta}(y) \|_\infty 
< (1-\beta) \| x - y \|_{\infty}$.
In other words, the Lipschitz constant for the contraction map has the form $(1-\frac{1}{2^{poly(|I|)}})$,
where $|I|$ denotes the bit encoding size of the input $I$ that describes the map.
Hence, it follows from Proposition 2.2, part (3.) of \cite{EY2010}  that  in order
to compute some $q' \in [0,1]^n$  such that $\| q^\beta - q' \|_{\infty} < \epsilon/2$, 
for some desired $\epsilon \in (0,1)$, it suffices
to compute some $q' \in [0,1]^n$ such that $\| F^{\beta}(q') - q' \|_{\infty} < (\epsilon/2) \beta$.

Combining the above facts together it follows that, given an SSG, $G$, computing its vector of values $q^*$ is P-time 
reducible to computing a vector $q' \in [0,1]^n$  such that  $\|F^{\beta}(q') - q'| < \frac{1}{2^{z(|G|)}}$,
where $\beta = 2^{-w(|G|)}$, and where  $z(x) := w(x) +  r(x) = t(x) + 2 \cdot r(x)$, is a fixed polynomial. 

We next show that, given $G$, the problem of computing such a vector $q' \in [0,1]^n$ is  reducible to \Tarski{}.
Note, firstly, that for $\beta = 2^{-w(|G|)}$,
$F^{\beta}(x)$ 
is polynomial-time computable,
given the rational vector $x \in [0,1]^n$, and given the underlying SSG $G$.

We now define a discrete monotone function $H: [M]^n \rightarrow [M]^n$,
such that $H()$ is  a discretization of the monotone contraction map $F^{\beta}()$, 
where $\beta = 2^{-w(|G|)}$, such that 
any fixed point of $H$ directly yields (via rescaling) a vector
$q' \in [0,1]^n$ such that $\| F^{\beta}(q') - q' | < 2^{-z(|G|)}$.

The map $H()$ is defined as follows.
We let $M = 2^{2 \cdot z(|G|)}$.
For $v \in [M]^n$, we let $H(v)_i = \lfloor M \cdot F^{\beta}(v/M)_i \rfloor$, for all $i \in [n]$.
Clearly, $H:[M]^n \rightarrow [M]^n$ defines a monotone map
which is polynomial-time computable, given the input vector $v \in [M]^n$ and given the SSG, $G$.
Moreover,  if we find some fixed point $v^* \in [M]^n$
such that $v^* = H(v^*)$,  then $\| F^{\beta}(v^*/M) -  v^*/M \|_\infty < 2^{-z(|G|)}$.
Hence, a fixed point $v^*$ of $H$ immediately yields a vector $q' \in [0,1]^n$
such that $\|F^{\beta}(q') - q' \|_\infty < 2^{-z(|G|)}$, given  which we know we 
can compute $q^*$ in P-time.
We have therefore shown that the problem of computing the vector $q^* \in [0,1]^n$ of 
values for a given SSG, $G$, is P-time reducible to $\Tarski{}$.
\end{proof}

\subsection{Shapley's stochastic games reduce to \Tarski{}}

We now consider the original stochastic games
introduced by Shapley in \cite{Shapley53},
which are more general than Condon's games, and show
that approximating the value of such a game
(which is in general irrational, even when the input data associated 
with the game consists of rational numbers), to within
any given desired accuracy, $\epsilon >0$  (given in binary
as part of the input),
is polynomial time reducible to \Tarski{}.

Shapley's games are a class of 
two-player zero-sum ``stopping'', or equivalently
``discounted'', stochastic games. 
An instance of Shapley's stochastic game
is given by $G=(V,A,P,s)$,  where $V = \{v_1,\ldots,v_n\}$ is a set of 
$n$ vertices (or ``states''); $A = (A^1,A^2,\ldots,A^n)$
is a $n$-tuple of matrices, where,
for each vertex, $v_i \in V$,  $A^i$
is an associated $m_i \times n_i$ {\em reward matrix}, 
where $m_i$ and $n_i$ are positive integers 
denoting, respectively, the number of distinct ``actions'' available
to player 1 (the maximizer) and player 2 (the minimizer) at vertex $v_i$,
and where for each pair of such actions, $j \in [m_i]$ and $k \in [n_i]$, 
$A^i_{j,k} \in \rat$ is a reward for player 1
(which we assume, for computational purposes, is a rational number given as input by 
giving its numerator and denominator in binary).
Furthermore, for each vertex $v_i \in V$, and each
pair of actions $j \in [m_i]$ and $k \in [n_i]$,  $P^i_{j,k} \in [0,1]^{n}$ is
a vector of probabilities on the vertices $V$, such that
$0 \leq P^i_{j,k}(r)$, and $\sum_{r=1}^n P^i_{j,k}(r) < 1$,
i.e., the probabilities sum to {\em strictly less than} $1$.  
Again, we assume each such probability $P^i_{j,k}(r) \in \rat$ is a rational number
given as input in binary.   
Finally, the game specifies a designated start vertex $s \in V$.

A play of Shapley's game transpires as follows:  a token is initially
placed on $s$, the start node. Thereafter, during
each ``round'' of play,  if the token is currently on some node $v_i \in V$,
both players simultaneously and independently choose respective actions $j \in [m_i]$
and $k \in [n_i]$, and player 1 then receives the corresponding reward
$A^i_{j,k}$ from player 2; thereafter, for each $r \in [n]$ 
with probability $P^i_{j,k}(r)$ the token
is moved from node $v_i$ to node $v_r$, and
with the remaining positive probability $q^i_{j,k} = 1- \sum_{r=1}^n  P^i_{j,k}(r) > 0$, 
the game ``halts''.    Let $q = \min \{ q^i_{j,k} \mid i,j,k \} > 0$
be the minimum such halting probability at any state, and
under any pair of actions.    Since $q$ is positive,  i.e.,
since there is positive probability $\geq q > 0$ of halting
after each round,  a play of the game eventually halts with probability 1. 
The goal of player 1 (player 2) is to  maximize (minimize, respectively)
the expected total reward that player 1 receives from player 2 during
the entire play.   A {\em strategy} for each player specifies,
based in principle on the entire history of play thusfar,
a probability distribution on the actions available 
at the current token location.    Given strategies $\sigma_1$
and $\sigma_2$ for player 1 and 2, respectively, let $r_i(\sigma_1,\sigma_2)$
denote the expected total payoff to player 1, starting at node $s = v_i \in V$.
Shapley \cite{Shapley53} showed that these games have a {\em value}, meaning
that $\sup_{\sigma_1} \inf_{\sigma_2}  r_i(\sigma_1,\sigma_2)  = \inf_{\sigma_2} \sup_{\sigma_1} r_i(\sigma_1,\sigma_2)$.
In fact, Shapley showed that both players have optimal stationary 
(but randomized)
strategies in such games, i.e., optimal strategies that only depend on the
current node where the token is located, not the prior history of play, but where players
can randomize on their choice of actions at each node.

Let $r^*_i = \sup_{\sigma_1} \inf_{\sigma_2}  r_i(\sigma_1,\sigma_2)$ 
denote the game 
value starting at vertex $s = v_i \in V$.\footnote{
Note that we could also define $r^*_i$ as
$r^*_i = \max_{\sigma_1} \min_{\sigma_2} r_i(\sigma_1,\sigma_2)$, due 
to the existence of optimal strategies.}

\begin{proposition}
\label{prop:shapley-stoc-game-reduces-to-tarski}
The following total search problem is polynomial-time reducible to \Tarski{}:
Given an instance $G$ of Shapley's stochastic game, 
and given $\epsilon > 0$  (in binary), 
compute a vector $r' \in \rat^n$ such that $\| r^* - r' \|_{\infty} < \epsilon$.
\end{proposition}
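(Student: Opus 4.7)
The plan is to mimic the Condon reduction of the previous subsection, replacing the min--max--linear Bellman operator with Shapley's matrix-game Bellman operator and then carrying out the same discretization step.

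First, I would set up the standard Shapley fixed-point equations $x = T(x)$, where for each vertex $v_i$
\[
T(x)_i \;=\; \Val\!\Big( A^i + P^i \cdot x \Big),
\]
and $\Val(\cdot)$ denotes the value of the $m_i \times n_i$ zero-sum matrix game whose $(j,k)$ entry is $A^i_{j,k} + \sum_r P^i_{j,k}(r)\,x_r$. Shapley's classical result states that $r^*$ is the unique fixed point of $T$. Since $\sum_r P^i_{j,k}(r) \le 1-q$ at every $(i,j,k)$, the operator $T$ is a monotone $(1-q)$-contraction in the $\ell_\infty$ norm (monotonicity of $T$ follows from the fact that the value of a matrix game is monotone in its entries, and each entry is a monotone affine function of $x$). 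Because the rewards and probabilities are rational with binary-encoded size bounded by $|G|$, we get $q \ge 2^{-\mathrm{poly}(|G|)}$, and there is an a priori bound $\|r^*\|_\infty \le B$ with $B = 2^{\mathrm{poly}(|G|)}$ (the expected length of play is at most $1/q$ and per-round payoffs are bounded by the largest $|A^i_{j,k}|$).

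Next I would invoke the ``polynomially contracting map'' machinery of \cite{EY2010}, exactly as in the proof for Condon's SSGs: in order to compute $r' \in [-B,B]^n$ with $\|r^* - r'\|_\infty < \epsilon$, it suffices to compute $r' \in [-B,B]^n$ satisfying $\|T(r') - r'\|_\infty < \epsilon\cdot q /2$. Two facts make this step identical in spirit to the Condon case: (i) given a rational $x$, the value $T(x)_i$ and an optimal mixed strategy can be computed in polynomial time by solving an LP of size polynomial in $|G| + \mathrm{size}(x)$; (ii) the contraction constant $(1-q)$ is of the form $1 - 2^{-\mathrm{poly}(|G|)}$, which is exactly the polynomial-contraction regime covered by Proposition~2.2(3) of \cite{EY2010}.

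I would then discretize. Affinely rescale $[-B,B]^n$ to $[0,1]^n$, choose $M = 2^{\mathrm{poly}(|G|, \log(1/\epsilon))}$ fine enough so that a grid step corresponds to an additive perturbation smaller than $\epsilon q/4$ when pulled back, and define $H \colon [M]^n \to [M]^n$ by
\[
H(v)_i \;=\; \bigl\lfloor M \cdot \tilde{T}(v/M)_i \bigr\rfloor,
\]
where $\tilde{T}$ is the rescaled Shapley operator (with output thresholded into $[0,1]$ to keep $H$'s range inside $[M]^n$). Because $\tilde T$ is monotone and $\lfloor\cdot\rfloor$ preserves monotonicity, $H$ is a monotone map on $[M]^n$, and because matrix-game values are LP-computable, $H$ is evaluable by a polynomial-size Boolean circuit. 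Any fixed point $v^*$ of $H$ supplies, via $r' := B(2v^*/M - \mathbf{1})$, a vector with $\|T(r') - r'\|_\infty < \epsilon q/2$, which by step two yields $\|r^* - r'\|_\infty < \epsilon$. This is the desired polynomial-time reduction to $\Tarski{}$.

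The main obstacle, and the one piece that genuinely differs from the Condon proof, is handling the matrix-game oracle inside $T$: I need to verify that $\Val(A^i + P^i x)$ is (a) monotone in $x$, (b) polynomially Lipschitz in $x$ with a Lipschitz constant I can track through the discretization, and (c) computable by a polynomial-size Boolean circuit acting on the rational input $v \in [M]^n$. Monotonicity and the $\ell_\infty$ Lipschitz bound $(1-q)$ are standard, but producing a Boolean circuit for $\Val$ requires invoking strongly polynomial LP-solving, or at the very least observing that the LP for $\Val$ has a rational optimum with polynomial bit-size (by Khachiyan/Karmarkar bounds), so the floor operation in the definition of $H$ is well defined and circuit-computable. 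Once this is granted, the rest of the argument proceeds exactly as in the Condon case.
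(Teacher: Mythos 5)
Your proposal follows essentially the same route as the paper's proof: set up Shapley's fixed-point system $x = \Val(B^i(x))$, observe that it is both monotone and a $(1-q)$-contraction with $q$ polynomially bounded away from zero, invoke the polynomially-contracting-map machinery of the Etessami--Yannakakis paper (Proposition 2.2(3)) to reduce $\epsilon$-approximation of $r^*$ to finding an approximate fixed point of the map, and then discretize via a floored rescaling onto a fine integer grid to obtain a monotone circuit-computable map and hence an instance of $\Tarski$. Your explicit remarks about LP-computability of $\Val$ and about thresholding after rescaling are small refinements of details the paper treats more tersely, but the substance of the argument is the same.
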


\begin{proof}
For a matrix $B \in \real^{m_i \times n_i}$, let $\Val(B)$ denote
the von Neumann minimax value of the corresponding 2-player zero-sum (one-shot) matrix game defined by $B$.
Shapley showed that for an instance $G$ of his stochastic game, 
the $n$-vector $r^* \in \real^n$  of values starting at each vertex is 
the {\em unique} solution to the following system of $n$ equations in $n$ unknowns 
$x = (x_1,\ldots,x_n)$.
For each vertex $v_i \in V$, define the $m_i \times n_i$ matrix $B^i(x)$
whose $(j,k)$-entry  is 
$B^i(x)_{j,k} =  A^i_{j,k} + \sum_{r=1}^n P^{i}_{j,k}(r) x_r$.
The equations are given by:

\[   x_i =   \Val(B^i(x))   \quad   \mbox{for all $i \in [n]$}  \]

Denote this system of equations by $x=F(x)$.
If we let $M= max_{i,j,k} |A^i_{j,k}|$ denote the maximum absolute value reward, and let $W = \left\lceil \frac{M}{q} \right\rceil$,
then it is easy to check that  
$$F: \left[ -W,W \right]^n \rightarrow  \left[ -W,W \right]^n$$
defines a map from $\left[-W,W\right]^n$ to itself,
and moreover, as Shapley observed,  $F(x)$ is a {\em contraction} 
map with respect to the $l_\infty$ norm.  Specifically,
for any $x, y \in \left[-W, W \right]^n$,
$\| F(x) - F(y) \|_\infty \leq (1-q) \| x - y \|_{\infty}$.
In other words, the Lipschitz constant of the contraction map is
$(1-q)$.   (Hence, $F(x)$ is a {\em polynomially contracting}  
function, as defined in
\cite{EY2010}.)

Hence, by Banach's fixed point theorem,  $F(x)$ has a unique
fixed point in $\left[-W,W \right]^n$.    
Shapley showed that the unique fixed point is indeed 
the vector of game values $r^*$,
i.e., that $r^* = F(r^*)$ and $r^* \in \left[-W,W \right]^n$.
Furthermore, $F(x)$ is also clearly a {\em monotone} function,
even when the rewards $A^i_{j,k}$ can take on negative values.\footnote{
The monotonicity
of these maps was not explicitly noted by Shapley in \cite{Shapley53}, since his
proofs did not require the fact that the maps are monotonic, only that
they are contraction maps.}
This is because the rewards only play an additive role in the entries
$B^i(x)_{j,k} =  A^i_{j,k} + \sum_{r=1}^n P^{i}_{j,k}(r) x_r$
of the matrices $B^i(x)$,  and the coefficient $P^{i}_{i,k}(r)$ of each
variable $x_r$ is non-negative (it is a probability),  
and because the minimax value operator $\Val(\cdot)$
is a monotone operator.  In other words, for any two matrices $B, B' \in \real^{m_i \times n_i}$, 
if $B \leq B'$  (entry-wise inequality), then clearly $\Val(B) \leq \Val(B')$.

Thus $r^*$ is both the unique fixed point solutions of the polynomially contracting
map $F(x)$, as well as the (least and greatest) fixed point solution
of the monotone (Tarski) map $F(x)$.
Furthermore, $F(x)$ is a polynomial-time computable map:
given the input game $G$, and given a rational vector $b \in \rat^n$
(with entries encoded in binary),
we can compute $F(b)$ in time polynomial in the total bit encoding size
of $G$ and $b$.

Thus, just as in the case of the functions
$F^\beta(x)$ that arose for showing that computing the value of Condon's stochastic games
reduces to $\Tarski{}$, it follows 
from \cite{EY2010} (Proposition 2.2, part (3.)),
that in order to compute a vector 
$r' \in \left[-W,W \right]^n$,  
such that $\| r^* - r'\|_{\infty} < \epsilon$, it suffices to compute $r'$
such that 
$\|F(r') - r' \|_{\infty} < \epsilon \cdot q$.   

Hence, again as in the proof for Condon's game,
this allows us to ``discretize'' the monotone function $F()$,
to turn the problem of computing such a vector $r'$ into an
instance of $\Tarski{}$.
Specifically, for a positive integer $K$, let $\langle K \rangle = \{-K, -K+1, \ldots, 1, 0, 1, \ldots, K-1, K \}$.  Let $C = \left\lceil \frac{1}{\epsilon \cdot q} \right\rceil$.
We construct a discrete monotone map,  $H' : \langle M' \rangle^n \rightarrow \langle M'\rangle^n$,
where  $M' = W \cdot C$.
For $v \in \langle M' \rangle^n$,
we let $H'(v)_i := \lfloor C \cdot F( \frac{v}{C})_i \rfloor$, for all $i \in [n]$.
$H'$ defines a monotone map from $\langle M' \rangle^n$ to itself, which is polynomial-time computable,
given the input vector $v \in \langle M' \rangle^n$, given the
instance $G$ of Shapley's stochastic game, and given the 
desired error $\epsilon > 0$ (in binary).
Moreover (again, as in the case for Condon's game),  if we find
a fixed point $v^* \in \langle M' \rangle^n$ such 
that $v^* = H'(v^*)$,  then  $\| F(v^*/C) - v^*/C \|_{\infty} \leq \frac{1}{C} \leq \epsilon \cdot q$,
and hence $\| r^* - v^*/C \|_{\infty} < \epsilon$.

Hence we have shown that approximating the value vector $r^*$, for a given instance $G$
of Shapley's stochastic game,  within a given desired additive error $\epsilon > 0$
(given in binary),  is polynomial-time reducible to \Tarski{}.
\end{proof}

\section{Computing a Tarski fixed point on general finite lattices}

\label{sec:gen-lattice}

Thusfar we have studied the Tarski fixed point problem
only in the setting of monotone functions on the Euclidean grid
lattice $[N]^d$.   In this section we consider a
more general black-box model for monotone functions $f:L \rightarrow
L$, over an arbitrary finite lattice $(L,\preceq)$, where the lattice's
elements $L \subseteq \{0,1\}^n$ are encoded as binary strings of some
given length $n$, and where we assume the entire lattice $(L,\preceq)$
is known {\em explicitly} by the algorithm that queries the
function $f$.  The algorithm has unbounded
computational power, but only has oracle access to the monotone
function $f:L \rightarrow L$.      We call this framework
the {\em explicit-lattice black-box model}.

We shall show how any black-box query upper bound for computing a Tarski fixed point 
on the finite grid $[N]^d$  can be translated to a query upper bound in
this more general explicit-lattice black-box setting.
In particular, generalizing the $\log^d N$
algorithm for Euclidean grids,
we show that in this black box model there is a
deterministic algorithm that computes a fixed point of 
$f:L \rightarrow L$ using $O(\log^d(|L|))$ queries to the function $f$, where 
$d = dim(L,\preceq)$ is the {\em dimension} of
the lattice $(L,\preceq)$.  The {\em dimension}, $dim(L,\preceq)$, of a finite lattice
$(L,\preceq)$, and more generally the dimension of any partial order, 
can be defined as the smallest integer $d \geq 1$ such that the relation $\preceq$ is the
intersection of $d$ total orders on the same underlying set $L$  (\cite{DM41},\cite{Ore1962}, \cite{Trotter92}).  
Equivalently (see, e.g., \cite{Ore1962}, Theorem 10.4.2), it is the smallest $d \geq 1$ such that there
is an injective embedding  of $(L,\preceq)$ in the euclidean grid
$([|L|]^d,\leq)$, where $\leq$ is the standard coordinate-wise partial
order on $[|L|]^d$.

Note that a lower bound of
$\Omega(\log^2 (|L|))$ queries for computing a fixed point of a monotone
function $f:L \rightarrow L$ in this black box model follows directly from our
lower bound of $\Omega(\log^2 N)$ for monotone functions on the
2D grid $f:[N]^2 \rightarrow [N]^2$, just by letting $L = [N]^2$.
At present, we do not know any
better lower bound than $\Omega(\log^2 (|L|))$ in this black box model for 
arbitrary finite
lattices.   Let us note again that this black box model is very different
from the one considered in \cite{CLT08}, where the lattice itself
is not known explicitly, but 
is only accessible via an oracle for its partial order.
Hence the linear $\Omega(|L|)$ lower bound on the number of queries 
(including queries to the partial order itself) given 
in \cite{CLT08} for finding a fixed point
has no
bearing on the black box model described here, where the lattice itself is explicitly known,
and only the monotone function is given by an oracle.

We now show how query upper bounds for finding a Tarski fixed point
of a monotone function on grid lattices $[N]^d$ can be translated
to query upper bounds in  black box model for finding a Tarski 
fixed point of a monotone function $f:L \rightarrow L$ on
an arbitrary finite lattice of dimension $d$.

\begin{theorem}
\label{thm:gen-lattice-fixed-point}
Let $(L, \preceq)$ be a finite lattice of dimension $d = dim(L,\preceq)$,
and let $f: L \rightarrow L$ be a monotone function on  $(L, \preceq)$.

If for some function $g(N, d)$ there exists some black-box algorithm $A'$,
that computes a fixed point of any monotone function $f':[N]^d  \rightarrow
[N]^d$ in at most $g(N,d)$ queries to $f'$, then there exists an algorithm $A$ in the
explicit-lattice black-box model
that computes a fixed point of $f$ in at most $g(|L|,d)$ queries to $f$.
\end{theorem}

\begin{proof}
The lattice $L$ has dimension $d=dim(L)$.
By the mentioned equivalent characterization of the dimension (see \cite{Ore1962}, Theorem 10.4.2) 
there exists a injective map (an embedding) $\varphi: L \rightarrow [|L|]^d$ (which is certainly computable, since
$L$ is finite),  such that for all $a, b \in L$,  $a \preceq b$  if and only 
if $\varphi(a) \preceq \varphi(b)$.
The algorithm querying $f$ knows the lattice $L$ explicitly, and has unbounded
computational power, using which it first computes such an
embedding $\varphi:  L \rightarrow [|L|]^d$.

Let $N = |L|$.
Let  $Y = \varphi(L) \subseteq [N]^d$ denote the image of $L$ under
the embedding map $\varphi$.
We shall use $f$ to define a new monotone function $f': [N]^d \rightarrow [N]^d$  
(monotone with respect to the standard coordinate-wise partial order $\leq$ on
$[N]^d$), such that there is a (computable) $1$-to-$1$ correspondence between
the fixed points of $f$ and $f'$.
For  any $x \in [N]^d$,  let
$S(x) := \{ a \in L |  \varphi(a) \geq x \}$, denote the set of points in $L$ whose
$\varphi$-image in $[N]^d$ is ``above'' $x$.
For any subset $L' \subseteq L$, let $\bigwedge(L')$  denote
the greatest lower bound (meet) of $L'$ in the finite 
lattice $(L,\preceq)$.
For any $x \in [N]^d$,  let  $a_x :=  \bigwedge(S(x)) \in L$.
We define the function $f' :[N]^d \rightarrow [N]^d$
as follows: for all $x \in [N]^d$, $f'(x) :=  \varphi(f(a_x))$.
Note that for all $x \in [N]^d$,  $f'(x) \in Y$.
Note that for all $y \in Y$,   $a_y = \bigwedge(S(y)) = \varphi^{-1}(y)$.
Hence note that for any $b \in L$,  $a_{\varphi(b)} = \bigwedge(S(\varphi(b))) = b$.

Now, for $x, y \in [N]^d$,   suppose $x \leq y$. 
Then we must have  $S(x) \supseteq S(y)$,  and hence we have
$a_x \preceq a_y$, and thus 
by monotonicity of $f$ we have $f(a_x) \preceq  f(a_y)$.
Hence  $f'(x) = \varphi(f(a_x)) \leq  \varphi(f(a_y)) = f'(y)$.

Therefore $f': [N]^d \rightarrow [N]^d$ is a monotone map on $[N]^d$
whose range is a subset of $Y$, and hence whose fixed points are in $Y$.
Moreover, for any $x^* \in Y$, if $x^* 
= f'(x^*)$  is a fixed point, then $\varphi^{-1}(x^*) = \bigwedge(S(x^*))$ 
is a fixed  point of the original monotone map $f:L \rightarrow L$. 
Likewise, for any fixed point $b^* = f(b^*)$ of $f:L \rightarrow L$, we have 
$f'(\varphi(b^*)) = \varphi(f(a_{\varphi(b^*)})) = \varphi(f(b^*)) = 
\varphi(b^*)$, and hence $\varphi(b^*)$ is a fixed point of $f'$.

Now, if we have an algorithm $A'$ that computes a fixed point of 
$f':[N]^d \rightarrow [N]^d$ using at most $g(N,d)$ queries of $f'$,
we can use that algorithm  
in the explicit-lattice black box model to compute a fixed point of
$f:L \rightarrow L$ using the same number $g(N,d) = g(|L|,d)$  of queries to $f$.
Namely,  whenever algorithm $A'$ 
wishes to query
a point $x \in [N]^d$,   algorithm $A$ queries $f(a_x)$ and uses
that to compute $f'(x) = \varphi(f(a_x))$, returning it to $A'$.  
After at most $g(N,d)$ queries, algorithm $A'$ will output a fixed
point $x^* \in [N]^d$ of $f'$, and algorithm $A$ can then output the fixed point
$\varphi^{-1}(x^*) \in L$ of the function $f$, using the same number of queries to $f$.
\end{proof}

The following corollary follows immediately from Theorem \ref{thm:gen-lattice-fixed-point}, and the $O(\log^d(N))$-query recursive binary search algorithm
for computing a fixed point of a monotone function $f:[N]^d \rightarrow [N]^d$.
\begin{corollary}
There is an algorithm  in the explicit-lattice black-box model that, given access to
an oracle for a monotone function $f:L \rightarrow L$ over a finite lattice 
$(L,\preceq)$ 
with dimension $dim(L) = d$,  finds a fixed
point of $f$ using $O(\log^d(|L|))$ queries to $f$.
\end{corollary}

We remark that F\"{u}redi and Kahn \cite{FK88} 
have shown that for any $d \geq 1$ 
there exists a finite lattice $(L, \preceq)$ of dimension $d$
with size $|L| \in O(d^2 \log^2 d)$. 
In other words, the dimension of a finite lattice $(L, \preceq)$ can
be as large as $\Omega \left(\frac{\sqrt{|L|}}{\log(|L|)} \right)$.
For a general, expicitly given, finite partial order $(P,\preceq)$, the task 
of computing its dimension is \NP-hard,
and in fact already the task of deciding whether its dimension
is $\leq k$ for any {\em fixed} $k \geq 3$ is \NP-complete (\cite{Yan82}).   On the other hand, 
there is a polynomial time algorithm for computing the dimension of 
a (explicitly given) finite {\em distributive} lattice, $(L, \preceq)$,
based on a reduction to maximum matching. By Birkoff's representation theorem for finite distributive lattices
(\cite{Birk37}) and 
Dilworth's theorem (\cite{Dil50}) it can
be shown that the dimension of any distributive lattice is the
same as the {\em width} (i.e., maximum  anti-chain size)  of its set of join-irreducible
elements, and this width can be computed using maximum matching.  We forgo a more detailed description of this algorithm, because it is
not germain to the rest of this paper.
An intriguing tangential question that remains open as far as we know is:
what is the complexity of computing the dimension of a general, expilicitly given
(not necessarily distributive) finite lattice
$(L,\preceq)$?  The \NP-hardness proof in \cite{Yan82} uses partial orders
that are not lattices.

\section{Conclusions}

\label{sec:conclusions}

We have studied the complexity of computing a Tarski fixed point
for a monotone function over a finite discrete Euclidean grid, and 
we have shown that this problem essentially captures the complexity of computing a 
($\epsilon$-approximate) pure Nash equilibrium
of a supermodular game.   We have also shown that computing the value
of Condon's and Shapley's stochastic games reduces to this $\Tarski$ 
fixed point
problem, where the monotone function is given succinctly (by a boolean
circuit).

We have provided several upper bounds for the $\Tarski$ problem, showing
that it is contained in both $\PLS{}$ and $\PPAD$.
On the other hand, in the oracle model, 
for 2-dimensional monotone functions $f:[N]^2 \rightarrow [N]^2$, we have shown a 
$\Omega(\log^2 N)$
lower bound for the (expected) number of (randomized) queries required to find a Tarski fixed point,
which matches the $O(\log^d N)$ upper bound for $d=2$.
We have also shown that any upper bound, $g(N,d)$, on the
number of queries needed for finding
a fixed point of a monotone function $f:[N]^d \rightarrow [N]^d$
translates to an upper bound of $g(|L|,d)$ in the
explicit lattice black-box model for finding a fixed
point of any monotone function $f:L \rightarrow L$, on
any finite lattice $(L,\preceq)$ of dimension $d = dim(L)$.

\vspace*{0.18in}
\noindent {\em Subsequent work and open questions:} 
In the conference version of this paper, we had raised 
as an open question whether \Tarski{} is contained in the classes $\CLS$ (\cite{DP11})
and  $\EOPL{}$ (\cite{FGMS18,FGMS20}).
Since then, it has been shown that $\CLS = \PPAD \cap \PLS$  (\cite{FGHS23})
and that $\EOPL = \CLS$ \cite{GHJMPRT24}.
Hence  \Tarski{} is contained in $\CLS = \EOPL$.
It remains open whether  $\Tarski{}$  is \EOPL-hard,
or even just \UniqueEOPL-hard.  It also remains open whether
 $\Tarski{}$ is contained in $\UniqueEOPL$.

After the publication of the conference version
of this paper \cite{EPRY-itcs20},  in subsequent work Fearnley,
P\'{a}lv\"{o}lgyi, and Savani
\cite{FPS22} showed that for $d \geq 3$ there is a
$O(\log^{2\lceil \frac{d}{3} \rceil} N)$ query black-box algorithm for finding a Tarski fixed point
for $f:[N]^d \rightarrow [N]^d$. 
This in particular refuted an earlier conjecture of ours that
for a small fixed constant number of dimensions, say $d=3, 4$, 
a $\Omega(\log^d N)$ lower bound on the number of queries should hold (in particular, their upper bound yields a $O(\log^2 N)$ query algorithm when $d=3$).
Subsequently,  Chen and Li \cite{CL22} improved 
the upper bound further, providing a $O(\log^{\lceil \frac{d+1}{2} \rceil} N)$ query algorithm for finding a Tarski fixed point.
As an immediate corollary of Chen and Li's \cite{CL22} upper bound
and our Theorem \ref{thm:gen-lattice-fixed-point}, we obtain
the following:

\begin{corollary}
There is an algorithm  in the explicit-lattice black-box model that, given access to
an oracle for a monotone function $f:L \rightarrow L$ over a finite lattice 
$(L,\preceq)$ 
with dimension $dim(L) = d$,  finds a fixed
point of $f$ using $O(\log^{\lceil \frac{d+1}{2} \rceil}(|L|))$ queries to $f$.
\end{corollary}

On the lower bound side, Br$\hat{\mathrm a}$nzei et. al.  \cite{BPR25} recently showed a  $\Omega(\frac{d \cdot \log^2 N}{\log d})$ lower bound
for the expected number of queries required by any randomized
algorithm to find a fixed point of a monotone function $f:[N]^d \rightarrow [N]^d$,
using a suitable generalization of our family of herringbone functions in higher dimensions. Note that this bound subsumes our $\Omega(\log^2 N)$ lower bound: it is the same for fixed constant dimension $d$, and
is stronger when the dimension $d$ is large.

In another recent work, Chen, Li, and Yannakakis \cite{CLY23} addressed
a question we raised in the conference version \cite{EPRY-itcs20} of this paper regarding the
complexity of computing a Tarski fixed point when that fixed
point is (promised to be) unique.   They showed
that in the black-box oracle model, the worst-case number of 
queries required to find a fixed point for 
a monotone function $f:[N]^d \rightarrow [N]^d$, 
is the same (as a function of $N$
and $d$)  whether or not the function is promised to have
a unique fixed point or not.   This is surprising because
uniqueness of the fixed point for such a monotone function
imposes rather strong structural constraints on the behaviour of the function which do not hold in general.   It is worth noting however that the result
in \cite{CLY23} works only in the black-box model and does
not imply a polynomial-time reduction in the white-box model 
from the problem of computing a fixed point for
an arbitrary monotone function to that of computing
one for a monotone function with a (promised) unique
fixed point.  This is because their ``reduction'' maintains an exponential amount of information between queries,
and hence in its current form requires exponential time to implement.
In the same spirit, although it is not directly related to the
task of computing a Tarski fixed point,  a more recent work by Chen, Li, and Yannakakis \cite{CLY24,CLY25} has shown, surprisingly,  that in the black-box model finding an 
$\epsilon$-fixed point
for a  function $f:[0,1]^d \rightarrow [0,1]^d$
that defines a 
contraction map under the $l_\infty$-norm\footnote{Say, a $(1-\epsilon)$-contraction, 
but the specific contraction rate is not so crucial, see \cite{CLY24}.}
can
be done with $O(d \log (1/\epsilon))$ queries.
As observed in this paper, the task of computing the value
of Condon's and Shapley's stochastic games reduces
to the task of computing a Tarski fixed point for a monotone 
function.  Moreover, a closely related proof (see \cite{EY2010}) also shows that computing the value of Condon's and Shapley's stochastic
games reduces to computing an
$\epsilon$-fixed point for
a suitably succinctly presented contraction map $f:[0,1]^d \rightarrow [0,1]^d$
under the $l_\infty$ norm.  
Importantly, again, the black-box query algorithm 
of \cite{CLY24,CLY25} does
not at present yield a polynomial-time algorithm (as a function of 
$d$ and $\log(1/\epsilon)$) for computing such
a $\epsilon$-fixed point,  
where $f$ is assumed to be
succinctly presented, e.g., discretized and given by a boolean
circuit.  This is because their query algorithm again keeps track of an exponential amount of information in order to determine what point to query next, and the currently best known way to implement their algorithm requires exponential time.     An even more recent work
by Haslebacher et. al. \cite{HLSW25}, extending the 
results of \cite{CLY24,CLY25},  shows that 
 in the black-box model finding an 
$\epsilon$-fixed point
for a  function $f:[0,1]^d \rightarrow [0,1]^d$
that defines a 
$\lambda$-contraction map under {\em any} $l_p$-norm where $p \in [1,\infty) \cup \{\infty\}$, 
can
be done with $O(d^2 (\log (\frac{1}{\epsilon}) + \log (\frac{1}{1-\lambda})))$ queries.   However, again, 
for similar reasons as in the $l_\infty$ case of \cite{CLY24,CLY25}, these black box query upper bounds do not at present
yield a polynomial-time algorithm (as a function  of $d$, $\log(1/\epsilon)$, and $\log(1/(1-\lambda))$), for computing such a $\epsilon$-fixed point when the input function $f$ is assumed to be succinctly presented as input.

A recent work by Batziou et. al. (\cite{BFGMS25}) considers monotone $l_{\infty}$-contractions, i.e., functions $f:[0,1]^d \rightarrow [0,1]^d$  that are both monotone
and contractions with respect to the $\ell_\infty$ norm.
They provide a $O((c \cdot \log(1/\epsilon))^{\lceil d/3 \rceil})$-query algorithm, for some fixed constant $c$, for finding an $\epsilon$-fixed
point in this setting.  This is an (exponentially) worse
query upper bound than the $O(d \log(1/\epsilon))$-query
algorithm provided by \cite{CLY24,CLY25} for arbitrary contraction
functions.   However, an advantage of the result
of \cite{BFGMS25} for monotone contractions is
that each step of their query algorithm can be implemented
in time polynomial in the encoding size of the input (the succinct representation of the function $f$). 

A very recent work by Chen, Li, and Yannakakis (\cite{CLY25new})  provides a further improved
$O(\log^{\lceil \frac{d-1}{3} \rceil + 1} N)$-query black-box algorithm for finding a Tarski fixed point of a general monotone function $f:[N]^d \rightarrow [N]^d$.  In particular, this implies that $O(\log^2 N)$ queries suffice for finding a Tarski fixed point when $d=4$.
Their black-box algorithm, as it currently stands, does not automatically yield a white-box algorithm with the same time complexity
because it builds on the framework of \cite{CLY23} in which determining which query to make next requires, in principle, maintaining an exponential amount of
information between queries.

Even more recently, building on recent works of Chen, Li, and Yannakakis, 
a paper by Feodorov and Haslebacher (\cite{FeHas26}) has established a deterministic algorithm for computing an $\epsilon$-approximate fixed point of an $l_\infty$-contraction map using only $(\log \frac{1}{\epsilon})^{O(\sqrt{d} \log d)}$ queries {\em and} time.   Note that, in particular, this result subsumes and greatly strengthens the result established by \cite{BFGMS25} for
monotone $l_\infty$-contractions. 
The results of \cite{FeHas26} yield, in particular, the first deterministic subexponential-time algorithm for computing the value of  Condon's simple stochastic games, and the first (also deterministic) subexponential-time algorithm for $\epsilon$-approximating the value for Shapley's stochastic games.

  Another intriguing computational problem
  whose complexity status remains open, and which turns out to be
  P-time reducible to
  $\Tarski$, is the  $\Arrival$ problem,
  a certain kind of reachability problem on a directed graph,
  first defined and studied
  by Dohrau et. al. \cite{DGKMW17}.
  The input to the $\Arrival$ problem consists
  of a directed graph $G = (V,E,(\preceq_v)_{v \in V}, s, d)$ in which the out-going
  edges out of each
  vertex, $v \in V$, are totally orderered by the given $\preceq_v$. We are also given
  a start vertex $s \in V$,
  and a target vertex $d \in V$.  We consider the 
  deterministic walk that starts at $s$ and such that the first time 
  the walk visits any vertex $v \in V$ it immediately leaves via $v$'s
  first outgoing edge, and thereafter, if $v$ is visited again a second time,
  the walk leaves $v$ via its second outgoing edge, and so on, until all
  outgoing
  edges of $v$ have been used, after which the next time the walk visits $v$ it
  exits
  $v$ again using its first outgoing edge.
  So, in effect, this describes a deterministic walk along the directed edges of
  $G$ using a round-robin scheduler at each vertex in order to determine which
  outgoing edge to exit that node from next. 
  The $\Arrival$ problem asks, given such an input, whether starting from vertex $s$
  this deterministic walk will ever reach the target vertex $d$.

  Dohrau et. al. \cite{DGKMW17} showed that $\Arrival$ is in $\NP \cap \coNP$.
  Subsequent papers have shown
  that it is in $\UP \cap \coUP$  \cite{GHHKMS18},
  that a search version of $\Arrival$ is in $\PLS$ \cite{K17},
  and also in $\CLS (= \EOPL)$ \cite{GHHKMS18}, and even in $\UniqueEOPL$ \cite{FGMS20}.
  It remains open whether $\Arrival$ is in $\Ptime$-time.
  
  More recently, G\"{a}rtner et al. \cite{GHH21} used 
  the $O(\log^{O(d)}(N))$ algorithms for the Tarski fixed point problem
  to provide the first subexponential-time algorithm for
  $\Arrival$.  In fact, although they didn't
  mention this explicitly, it follows directly from their results that
  the $\Arrival$ problem (in both its search
  and decision version) is reducible in polynomial time to $\Tarski$.
  For completeness, we describe this reduction in more detail in
  Appendix \ref{appendix:B}.   Even more recently
  Haslebacher \cite{Has25} has shown that the $\Arrival$ problem is
  also reducible in polynomial time to the task of computing an
  approximate fixed point for a (succinctly presentable) $l_1$ contraction map $g:[0,1]^d \rightarrow [0,1]^d$ (the maps $g$ in Haslebacher's 
  reduction are in fact both monotone and $l_1$ contractions).

  Let us note that a more general version of the $\Arrival$ problem, called
  $\RecursiveArrival$, which considers the same kind of deterministic walk, but on the vertices
  of a succinctly represented infinite-state digraph (specifically, a 1-exit Recursive graph),
  has recently been shown to be contained in $\UniqueEOPL$ (\cite{Web23}), but is not
  yet known to be P-time reducible to $\Tarski$.

Let us mention that another important application domain in economics for Tarski's 
fixed point theorem is in the setting of stable matching
(\cite{GaleShapley1962})
and its many
generalizations (see, e.g., the books \cite{RothSotomayor1990,EcheniqueImmorlicaVazirani2023}).   It is well known (see, e.g., \cite{adachi2000})
that for any two-sided matching model with strict preferences the non-empty set of stable matchings can be characterized as the set of Tarski fixed points of a monotone
function $f: L \rightarrow L$ on a finite lattice $L$.   Indeed, if $M$ (``males'') and $W$ (``females'') denote
the set of agents on the two respective sides of the matching model, 
then the $(|M| + |W|)$-dimensional euclidean grid lattice $(L,\preceq)$ can be defined (\cite{adachi2000}) as the set $L = \times_{m \in M} (W \cup \{m\})   \times \times_{w \in W} (M \cup \{w\})$, where the partial order 
$\preceq$ on $L$
is defined as the coordinate-wise partial order, where in each
coordinate $m \in M$, we let $\preceq_m$ denote the (strict) preference total
order that agent $m \in M$ has over the set $W \cup \{ m \}$,\footnote{Agent $m$ ``prefers" itself over  $w \in W$
if it would rather remain unmatched than be matched with $w$.}
and for any $w \in W$, we let $\preceq_w$ denote the {\em reverse of} the preference total order that agent $w$ has over the set $M \cup \{w\}$.    
As is well-known,
the classic Gale-Shapley \cite{GaleShapley1962} ``male-proposal'' Deferred Acceptance (DA) algorithm 
computes (in polynomial time) a male-optimal stable matching, and this corresponds precisely 
to the greatest fixed point of this corresponding monotone
function (greatest from the perspective of males).
Indeed, standard Kleene iteration on these functions $f$ (as defined
in \cite{adachi2000}),
starting from the ``top'' element of the lattice $L$ will converge
to the greatest fixed point in at most $|M| + |W|$ iterations, and can be seen
to essentially mimic an accelerated version of the ``proposal'' rounds
of the DA algorithm.
Note that the lattice for the stable matching problem has dimension $d=|M|+|W|$ that is of the same order as (in fact larger than) the height $N = \max(|W|,|N|)+1$,
unlike the other applications (e.g. submodular games, stochastic games) we discussed in this paper,
where $d$ is much smaller (typically, exponentially smaller) than $N$.

There are many important models in the economics
literature which generalize the basic  Gale-Shapley two-sided stable matching model
(see, e.g., \cite{RothSotomayor1990,EcheniqueImmorlicaVazirani2023,EcheniqueOviedo2004,HatfieldMilgrom2005}) and for a number
of these models the existence of such a  ``generalized stable
matching'' is established via an application of Tarski's fixed point theorem. It will be interesting to investigate the complexity
of computing such generalizations of stable matchings, both in
the white-box and black-box model, in light of the connection
to  the \Tarski{} problem.

Clearly, a major remaining open question stemming from our work and the subsequent
improved upper bounds is: 
can we improve the current best (lower or upper) bounds on the number of queries required to find
a fixed point of a monotone function $f: [N]^d \rightarrow [N]^d$?
There is currently a very wide gap, indeed an exponential gap, between the best lower and
upper bounds in the black box model.
Of course, ultimately, we would like to know whether 
one can obtain a polynomial
time algorithm in the standard white box model for this task,
which would constitute a huge breakthrough on a number of fundamental computational problems. If no such algorithm can be found, is there a substantial complexity-theoretic impediment
to doing so (such as \EOPL- or \UniqueEOPL-hardness)?

\vspace*{0.08in}

\noindent {\large\bf Acknowledgements}

\vspace*{0.07in}

\noindent Thanks to Alexandros Hollender for pointing 
us to \cite{BussJohnson12}.  Thanks to Angus Joshi for comments
on an earlier draft.

\appendix

\section{Appendix A: \label{appendix:PPAD-closed-under-Turing-red}  proof 
that $\PPAD$ is closed under P-time Turing reductions.}

\label{appendix:A}

For the sake of completeness, we provide a proof of a
result stated by Buss and Johnson
\cite{BussJohnson12} (see also \cite{Johnson2011}), but with the proof left to the reader there,
namely that $\PPAD$ is closed under polynomial-time Turing reductions.

\begin{theorem}[cf. \cite{BussJohnson12}, \cite{Johnson2011}]
$\FP^{\PPAD} = \PPAD$.
\end{theorem}
\begin{proof}
Suppose that a total search problem $Q$ is in $\FP^{\PPAD}$. In  other words, there is a P-time 
Turing machine $M$,
that solves $Q$ using an oracle from $\PPAD$, which wlog we can
assume is an oracle for the $\PPAD$-complete problem End-Of-Line (EOL) 
(\cite{Pap94}).
Recall that EOL is a total search problem defined as follows:
a valid input $I$ consists of two boolean circuits $S$ (``successor'') and $P$
(``predecessor''), each with $m$ input bits and $m$ output bits,  where  moreover $P(0^m)$ = $0^m$, $S(0^m) \neq 0^m$,  and $P(S(0^m)) = 0^m$. 
The circuits succinctly describe a directed graph $G_I = (V_I, E_I)$,
where $V_I = \{0,1\}^m$ 
and $E_I = \{  (u,v) \in V_I \times V_I \mid  u \neq v, S(u) = v, \mbox{and} \ P(v) = u \}$.  
Note that all nodes of $G_I$ have in-degree $\leq 1$ and out-degree $\leq 1$.
Note furthermore that the default source node $0^m \in V_I$ has in-degree $=0$ and out-degree $=1$, and hence it has sum total degree  $=1$.  We call any node whose sum total degree is $= 1$ an {\em endpoint}.
The task is to output a different endpoint node, $v^*  \in V_I$, $v^* \neq 0^m$,
whose sum total degree is also equal to $1$  (i.e., where $v^*$ is either a ``source'' 
node other than $0^m$, or a ``sink'' node).   Such a node $v^*$ 
must exist by a simple counting argument:
the total number of outgoing edges, summed over all nodes of $G_I$, must equal the total number of
incoming edges. But the default node $0^m$ has one outgoing  and no incoming edge, 
thus there exists a node with more incoming than outgoing edges, and that node must have in-degree $=1$
and out-degree $=0$.\footnote{Indeed, this argument shows a sink node must 
exist, but the 
EOL problem does not insist on outputing a sink.  The defining complete problem for another  total search complexity class, \PPADS{}, is the problem that
asks to output a sink node given the same input as the EOL problem.}

Given an input $x$ of length $n$, the machine $M$ generates adaptively a sequence
of at most $p(n)$ queries to EOL, for some polynomial $p(\cdot)$, where each query may depend on the 
previous answers,
and at the end outputs a solution to $Q$ for the input $x$.
Each query specifies a succinct graph for EOL (via polynomial-sized circuits for the predecessor 
and successor
of each node). We can assume wlog that the nodes of all these graphs are
binary strings of a certain length $m$ (polynomial in $n$), that 
the default source node in each graph is $0^m$ (this is only for simplicity, and isn't necessary to assume), and that furthermore each solution to the problem $Q$ is a string of length $m$.
We will define one ``big'' graph $G$ for EOL (i.e., predecessor and successor circuits $P$ 
and $S$)
such that every non-default endpoint of $G$ provides a solution of $Q$ for $x$.

The nodes of $G$ are of the form $(\langle y_1, \ldots , y_k \rangle, v)$, for some $k$, where $0 \leq k \leq p(n)$,
where the $y_i$ and $v$ are binary strings of length $m$.
Such a tuple can be represented of course by a string of length $O(p(n)m)$,
for example by the string $1 y_1 1 y_2 \ldots 1 y_k 0 \ldots 0 v$ (with enough $0$'s so all the  strings have exactly
the same length, say $(m+1)p(n) +m$).
Call such a tuple $(\langle y_1, \ldots, y_k\rangle, v)$ 
{\em valid} if $y_1$ is a legal answer to the 
first query generated by $M$
on input $x$, $y_2$ is a legal answer to the second query generated (which 
depends on $x$ and $y_1$),
$y_3$ is a legal answer to the third query, and so on, and moreover either $(i)$ 
after the 
sequence $y_1, \ldots, y_k$ 
of answers, machine $M$ does not generate any further queries but returns 
solution $v$,
or $(ii)$ $M$ generates a $(k+1)$-th query and $v$ is any node of the 
corresponding graph
(i.e. a string of length $m$, since each such corresponding graph has all strings of length $m$ as 
its nodes).

The nodes of the graph $G$ are all valid tuples. Or in other words, we can 
consider as
nodes all strings of length $(m+1)p(n) +m$, and those strings $w$ that do 
not correspond
to valid tuples have self-loops , i.e., we let $S(w)=P(w)=w$ (hence they 
are not solutions).
Clearly, given a string $w$, we can tell in polynomial time if it 
corresponds to a
valid tuple.
The default source node is the tuple $(\langle \rangle, 0^m)$,
where ``$\langle \rangle$'' denotes an empty tuple. 
It remains to define $S$ and $P$ for valid tuples.

Let $(\langle y_1, \ldots, y_k \rangle, v)$ be a valid tuple where there is a $(k+1)$-th query.
Every $y_i$ is a sink or a source in its corresponding query graph.
Suppose that the number of $y_i$ that are sources is ${\mathbf{even}}$.
If $v$ is not a sink in the $(k+1)$-th graph then $S((\langle y_1, \ldots, y_k \rangle, v)) = 
(\langle y_1, \ldots , y_k \rangle, s(v))$
and if $v$ is not a source then $P((\langle y_1, \ldots, y_k \rangle, v)) = (\langle y_1, \ldots, y_k \rangle, p(v))$,
where $s$ and $p$ are the successor and predecessor functions in the $(k+1)$-th 
graph.
If $v$ is a sink then if $M$ makes a $(k+2)$-nd query after seeing the
answer $v$ to its $(k+1)$-st query, then we let $S((\langle y_1, \ldots, y_k \rangle, v)) = (\langle 
y_1, \ldots , y_k,v \rangle,0^m )$;
otherwise $M$ returns a solution $u$ and we let $S((\langle y_1, \ldots, y_k\rangle, v))
=(\langle y_1, \ldots, 
y_k,v \rangle, u)$,
and $P((\langle y_1, \ldots, y_k,v \rangle, u)) = (\langle y_1, \ldots, y_k \rangle, v)$, and 
$(\langle y_1, \ldots , y_k,v \rangle, 
u)$ is a sink of the graph $G$  (from which the solution $u$ to $Q$ is easily extractable).
If $v$ is a source $\neq 0^m$ then if $M$ makes 
a $(k+2)$-nd query after seeing the answer $v$ to its 
$(k+1)$-st query, then we let $P((\langle y_1, \ldots, y_k \rangle, v)) = 
(\langle y_1, \ldots, y_k,v \rangle,0^m 
)$;
otherwise $M$ returns a solution $u$ and we let $P((\langle y_1, \ldots, y_k\rangle, v))=(\langle y_1, \ldots, 
y_k,v \rangle, u)$,
and $S((\langle y_1, \ldots, y_k,v \rangle, u)) = (\langle y_1, \ldots, y_k \rangle, v)$ and $(\langle y_1, \ldots, y_k,v \rangle, 
u)$ is a source of the graph $G$ (from which, again, the solution $u$ to $Q$ can be easily
extracted).
Lastly, if $v=0^m$ then $P((\langle y_1, \ldots, y_k \rangle, 0^m) ) = ( \langle y_1, \ldots,
y_{k-1} \rangle, y_k)$.

If, on the other hand, 
the number of $y_i$ that are sources is ${\mathbf{odd}}$, then we reverse the 
directions of the edges, as follows.
If $v$ is not a source in the $(k+1)$-th graph then 
$S((\langle y_1, \ldots, y_k \rangle, v)) = 
(\langle y_1, \ldots, y_k \rangle, p(v))$
and if $v$ is not a sink then $P((\langle y_1, \ldots, y_k \rangle, v)) = 
(\langle y_1, \ldots, y_k \rangle, s(v))$,
where $s$ and $p$ are the successor and predecessor functions in the $(k+1)$-th 
graph.
If $v$ is a sink then if $M$ makes a $(k+2)$-nd query after seeing the answer $v$
to its $(k+1)$-st query, then we let
$P((\langle y_1, \ldots, y_k \rangle, v)) = (\langle y_1, ..., y_k,v \rangle,0^m )$;
otherwise $M$ returns a solution $u$ and we let $P((\langle y_1, \ldots, y_k \rangle, v))=
(\langle y_1, \ldots, 
y_k,v \rangle, u)$,
and we let $S((\langle y_1, \ldots, y_k,v \rangle, u)) = (\langle y_1, \ldots, y_k \rangle, v)$, and $(\langle y_1, \ldots, y_k,v \rangle, 
u)$ is a source of the graph $G$ (from which, again, the solution $u$ to $Q$ can
easily be extracted).
If $v$ is a source $\neq 0^m$ then if $M$ makes a $(k+2)$-nd query
after seeing the answer $v$ to its $(k+1)$-query, then we let 
$S((\langle y_1, \ldots, y_k \rangle, v)) = (\langle y_1, \ldots, y_k,v \rangle,0^m )$;
otherwise $M$ returns a solution $u$ and we let $S((\langle y_1, \ldots, y_k \rangle, v))= (\langle y_1, \ldots, 
y_k,v \rangle, u)$,
and $P((\langle y_1, \ldots, y_k,v \rangle, u)) = 
(\langle y_1, \ldots , y_k \rangle, v)$ and $(\langle y_1, \ldots, y_k,v \rangle, 
u)$ is a sink of the graph $G$ (from which, again, the solution $u$ to $Q$ can
easily be extracted).
If $v=0^m$ then $S((\langle y_1, \ldots, y_k \rangle, 0^m) ) = ( \langle y_1, \ldots,
y_{k-1} \rangle, y_k)$.

This completes the definition for the ``big'' EOL instance
$G$, from any of whose non-default endpoints we can easily extract a 
solution for $Q$.
The key point is that separating the $\mathbf{even}$ and $\mathbf{odd}$ case,
and reversing the direction of the edges in the case where the number of $y_i$'s
that are sources is $\mathbf{odd}$,  enables us to efficiently ``attach'' the corresponding directed line graphs for different EOL queries carried out by $M$ to each other, to form one ``big''  directed line
graph $G$, without having to directly ``compute'' solutions for any of the
smaller EOL instances.

This completes the proof that any total search problem $Q$ in $\FP^{\PPAD}$
is already in $\PPAD$.   
\end{proof}

We note that an easy adaptation of the same proof also shows that $\FP^\PPA = \PPA$, since
the canonical complete problem for \PPA{} is just an undirected version of the EOL problem.
Moreover, essentially the same proof also shows that $\FP^{
\PPADS} = \PPADS$. Indeed, the case of $\PPADS$ is easier than $\PPAD$ and doesn't require
splitting into $\mathbf{even}$ and $\mathbf{odd}$ cases (and reversing direction of edges
in odd cases), because when using a $\PPADS$ oracle that always returns a sink of the given EOL input instances,
all the $y_i$'s in valid tuples in the above reduction will be sink nodes, and hence there are 
no $\mathbf{odd}$ valid tuples.

\section{Appendix B:   $\Arrival$ is P-time reducible to $\Tarski$}

\label{appendix:B}

In this section we elaborate on the polynomial time reduction,
mentioned in the Conclusions section, 
from the $\Arrival$ problem to $\Tarski$, which follows from
the results of \cite{GHH21}, who provided a subexponential-time
algorithm for $\Arrival$ by exploiting 
available algorithms for the $\Tarski$ problem.

  We can view an instance of the $\Arrival$ problem (\cite{DGKMW17,GHH21}) as being given by
  $\mathcal{A} = (V, g^0, g^1, s, d, d')$,
  where $V$ is a finite set of vertices,
  $s \in V$ is a distinguished
  start vertex, and $d,d' \not\in V$ are
  two  distinguished terminal vertices
  (they are not included in $V$ for notational convenience later), 
  and the functions $g^i: V \rightarrow V \cup \{d, d'\}$,
  $i \in \{0,1\}$,  define two outgoing edges
  $(v,g^0(v))$ and $(v,g^1(v))$  out of each vertex $v \in V$
  to any vertex in $V \cup \{d, d'\}$.\footnote{The restriction to only two outgoing edges from each vertex in $V$
  is w.l.o.g., and
  doesn't make the problem any easier.   This is clearly implicit
  already in \cite{DGKMW17}. To see why, note that
  if a vertex $v \in V$ has $k > 2$ outgoing edges, 
  those edges can be replaced by introducing an auxiliary directed complete binary tree, $T_v$, 
  of depth $D = \lceil \log_2(k+1) \rceil$, rooted at $v$.
  The leaves of $T_v$ are ordered
  lexicographically according to the {\em reverse} of
  the binary string labeling the path from the root $v$ of $T_v$ 
  to each leaf.
  Let these leaves, ordered accordingly, be $v'_0,v'_1, \ldots , v'_{2^D-1}$.
  The leaf, $v'_i$,  $0 \leq i \leq k-1$, is
  identified with the head node of the $i$'th  outgoing edge from the
  original node $v \in V$. 
  For each of the remaining $D-k$ leaves, $v'_j$, $j \geq k$, of $T_v$,   we let $g^0(v'_j) := g^1(v'_j) := v$.  (And then we can 
  merge nodes, as described in the next footnote, to get rid of all such nodes 
  $v'_j$, $j \geq k$, both of whose outgoing edges lead to the same node in the new digraph.)
  It is simple to check that this construction ensures that the switching walk
  on the new outdegree-2 directed graph, obtained by augmenting 
  the original instance $\mathcal{A}$ with these auxiliary binary trees, visits exactly the same sequence
  of non-auxiliary nodes as the switching walk on $G$  (with a bounded
  number of visits to auxiliary nodes in between any two such visits
  to non-auxiliary nodes).}
  The role of the two terminal vertices $d$ and $d'$, which have no outgoing
  edges, will be made clear shortly ($d$ is
  the ``good'' target vertex 
  and $d'$ is the ``bad'' dead end). 

Associated with an $\Arrival$ instance,
$\mathcal{A}$, is a directed graph called the {\em switch graph}, $G(\mathcal{A}) = (V \cup \{ \hat{s},d,d' \}, E)$,
where $\hat{s} \not\in V \cup \{d,d'\}$ is an extra artificial start vertex
(included for notational convenience), 
and where the edge set of $G(\mathcal{A})$ is
  $E = \{ (\hat{s}, s) \} \cup \{ (v, g^0(v)) \mid v \in V   \}  \cup \{ 
  (v, g^1(v)) \mid
  v \in V \}$.   So, $\hat{s}$ has no incoming
  edges, and the only outgoing edge from $\hat{s}$
  is $(\hat{s},s) \in E$.

  Furthermore, we assume wlog
  that $\mathcal{A}$ has the following properties
  which we can ensure via some fairly simple polynomial time 
  transformations,
  which don't change its status as a Yes or No instance of $\Arrival$.
  For any vertex $v \in V \cup \{ \hat{s} \}$, there exist 
  directed paths in $G(\mathcal{A})$ from $v$ to both of the terminal vertices
  $d$ and $d'$.
  Furthermore, for all vertices $v \in V$, $g^0(v) \neq g^1(v)$.\footnote{We can assume this because any vertex $v \in V$ with $g^{0}(v) = g^1(v)$
  can be eliminated from the
  switch graph $G$.  If some other vertex $u \in V$ 
  has $g^i(u) = v$, $i \in \{0,1\}$, then we can 
  let $g^i(u) := g^0(v) (= g^1(v))$.
  Repeatedly carrying out this step until no further vertices
  can be removed ensures no remaining vertices where $g^0(v) = g^1(v)$.}
  An $\Arrival$ instance that satisfies all the above properties
  is called a {\em terminating} $\Arrival$ instance.

  A {\em switching run} on $\mathcal{A}$, starting
  at a vertex $v_0 \in V$,
  proceeds deterministically in discrete time steps, as follows.
  For each time $t = 0,1,2,\ldots$, we use a function
  $h_t: V \rightarrow \{0,1\}$ to denote the current {\em switching-state}.
  The run 
  consists of a sequence  $\langle (v_t,h_t) \rangle_{0 \leq t \leq T}$
  of vertex and switching-state pairs, $(v_t,h_t)$, for $0 \leq t \leq T$.
  where the termination time $T$ will be defined later (it can be at most exponential in $|V|$).  
  The initial switching state, $h_0$, is defined
  as $h_0(v) := 0$ for all $v \in V$.
  Inductively, if the prefix of the run up until time $t$ consists
  of $(v_0,h_0), (v_1,h_1), \ldots, (v_t,h_t)$, then,
  firstly, if $v_t \in \{d,d'\}$ then the run is said to {\em terminate}
  immediately at vertex $v_t$ at time step $t$,
  and otherwise if $v_t \in V$,
  the next pair 
  $(v_{t+1}, h_{t+1})$ at time step $t+1$ is defined as follows:
  $v_{t+1} := g^{h_t(v_t)}(v_t)$ \ ; \   $h_{t+1}(v_t) := 1 - h_t(v_t)$, \ and
  for all $v \in V \setminus \{v_t\}$,
  \ $h_{t+1}(v) := h_t(v)$.

  The $\Arrival$ decision problem is defined as follows:
  given such an instance, $\mathcal{A}$, starting at the designated
  start vertex $v_0 := s$, does the run eventually terminate at vertex $d$,
  i.e., does there exist a time $t' \in \nat$ such
  that $v_{t'} = d$?

  As mentioned,
  G\"{a}rtner et al. \cite{GHH21} used
  the available $O(\log^{O(d)}(N))$ algorithms for the Tarski fixed point problem
  to provide the first subexponential time algorithm for
  the $\Arrival$ problem.  In fact, although they didn't
  mention this explicitly, their results readily imply 
  that the $\Arrival$ problem (in both its search
  and decision version) is reducible in P-time to $\Tarski$,
  as we now explain.


It was shown in (\cite{DGKMW17}, Lemmas 2 and 3),
  that for any terminating $\Arrival$ instance, $\mathcal{A}$,
  starting at any vertex $v_0 \in V$
  the switching run will eventually reach
  (and hence ``{\em terminate}'' at) one of the two
  terminal vertices, $d$ or $d'$, and that it will
  do so after crossing each edge at most $2^n-1$ times, where $n = |V|$.
Consider the following {\em switching-run-edge-count (SREC)} (or ``edge-profile'') 
function
  $x^*: E \rightarrow [2^n]$,
  defined by letting $x^*(e)$ be the number of times that directed edge $e \in E$
  is traversed in the switching run starting at vertex $s$, with the stipulation by definition
  that the artificial edge $(\hat{s},s)$ is traversed
  exactly once at the ``start'' of the run, and hence
  that $x^*((\bar{s},s)) = 1$.\footnote{The artificial start state $\hat{s}$ is used
  for notational convenience later.}
This SREC function satisfies some basic constraints that we will
now describe.
 Dohrau et. al. \cite{DGKMW17} defined the key notion of
  a {\em switching edge-flow (SEF)}, which is any function $x: E \rightarrow \nat$
 which satisfies the following constraints
  capturing ``flow conservation'' and
  ``switch consistency''.  To facilitate defining the constraints,
   for each $v \in V$,
  let $E^+(v) = \{ (v,w) \in E \mid  w \in V \cup \{d,d'\} \}$ denote the set of outgoing
   edges from $v$, and for each $v \in V \cup \{d,d'\}$ let
    $E^-(v) = \{ (w,v) \in E \mid w \in V \cup \{ \hat{s} \} \}$ denote the set of incoming edges to $v$ (including possible the edge from $\hat{s}$ in case $v=s$).
  The constraints defining a SEF are:
\begin{eqnarray}
 x((\hat{s},s))  = 1   \ \  ;
 &&   \mbox{(the flow of $1$ ``starts'' on $(\hat{s},s) \in E$)}   \label{eq1-sef} \\
    \sum_{e \in E^+(v)} x(e) - \sum_{e \in E^-(v)} x(e)  =  0   \ , &&  \mbox{$\forall v \in V$; \ \ (``flow conservation'')}
    \label{eq2-sef} \\
    x((v,g^1(v))) \  \leq \  x((v,g^0(v))) \  \leq \ x((v,g^1(v))) + 1 \ ,  && \mbox{$\forall v \in V$;   \ (``switch consistency'')}  \label{eq3-sef}
  \end{eqnarray}

Clearly, the SREC function $x^*$ itself satisfies 
all these constraints, and hence constitutes a switching edge-flow (SEF).
However, it is not necessarily unique: there can be other switching edge-flows (\cite{DGKMW17}).  Nevertheless, Dohrau et. al. (\cite{DGKMW17}, Lemma 1)
showed that the SREC function $x^*$ constitutes the {\em least} SEF, in the sense that for any  
SEF, $x':E \rightarrow \nat$, we must have $x^* \leq x'$,
meaning $x^*(e) \leq x'(e)$, for all $e \in E$.
Note that therefore $x^*$ also minimizes the sum $\sum_e x(e)$
among all SEFs.

Note that for any SEF, $x : E \rightarrow \nat$, 
the flow conservation constraints imply that
exactly one of the following two cases holds:
(A) $\sum_{e \in E^-(d)} x(e) = 1$
  and $\sum_{e \in E^{-}(d')} x(e) = 0$, or else
  (B) $\sum_{e \in E^-(d')} x(e) = 1$ and $\sum_{e \in E^-(d)} x(e) = 0$.   However, since the SREC function
  $x^*$ satisfies one of
  these two and  is also the least SEF, this implies that for any given terminating $\Arrival$
  instance $\mathcal{A}$,
   either {\em all} SEFs of $\mathcal{A}$ satisfy (A),
  or {\em all} SEFs of $\mathcal{A}$ satisfy (B).

Thus, in order to decide the $\Arrival$ problem
all we need to do (as already observed in \cite{DGKMW17}) is to compute a
(any) SEF, $x : E \rightarrow \nat$,
for the given terminating
$\Arrival$ instance, and then check whether $x$
satisfies $\sum_{e \in E^-(d)} x(e) = 1$.

Given any SEF, $x: E \rightarrow \nat$, consider the
corresponding
function on vertices $y^{[x]}: V \rightarrow \nat$, defined
by letting 
\begin{equation}
y^{[x]}(v) := \sum_{e \in E^{-}(v)} x(e) \quad , \quad \mbox{for all
$v \in V$.} 
\label{eq:def-y-up-x}
\end{equation}

Note, in particular, that if $x^*$ is the SREC function,
then $y^{[x^*]}: V 
\rightarrow \nat$  is equivalent to the {\em switching run vertex-count (SRVC)}
function, denoted $y^*$, which by definition 
counts the number of times, $y^*(v)$, that
each vertex, $v \in V$, is visited on the switching run.
It was
shown in (\cite{GHH21}, Lemma 1) that the number
of times each vertex in $V$ is visited on the switching
run is at most $2^n$, and hence
we can write $y^*= y^{[x^*]}$  as $y^*: V \rightarrow [2^n]$.



For $v \in V$  and $j \in \{0,1\}$, let
$H^{-}_j(v) = \{ u \in V \mid g^j(u) = v \}$.
For any function $y: V  \rightarrow \nat$,
consider the following system of constraints:

\begin{eqnarray}
y(v)  & = &   \sum_{ u \in H^-_0(v)} \left\lceil \frac{y(u)}{2} \right\rceil   + \sum_{u \in H^-_1(v)} \left\lfloor \frac{y(u)}{2}
\right\rfloor +  \delta_{=s}(v) \quad,  \quad  \label{eq1-svf}  
\mbox{for all $v \in V$ } \ ,  
\end{eqnarray}
where, by definition, $\delta_{=s}(v) = 0$ if $v\neq s$ and $\delta_{=s}(v) = 1$ if $v=s$.
We call any function $y: V  \rightarrow \nat$ that satisfies these constraints a {\em switching vertex-flow (SVF)}.
For any SVF, $y: V \rightarrow \nat$,
let $x^{[y]}: E \rightarrow \nat$ denote the function obtained
by letting, for all $v \in V$:
\begin{equation}
x^{[y]}((v,g^0(v)) := \left\lceil \frac{y(v)}{2} \right\rceil
\quad \mbox{and} \quad x^{[y]}((v,g^1(v)) := \left\lfloor \frac{y(v)}{2} \right\rfloor  \quad , \ \  \mbox{and} \quad x^{[y]}((\hat{s},s)) := 1.
\label{eq-x-to-y}
\end{equation}

The following claim can be shown easily.

\begin{claim}\mbox{}

\begin{enumerate}
\item For any SEF, $x$: \   $y^{[x]}$ is a SVF. \ For any SVF, $y$: \   $x^{[y]}$ is a SEF.
\item For any SEF, $x$: \  $x^{[y^{[x]}]} = x$. \  For any SVF, $y$: \ 
$y^{[x^{[y]}]} = y$.
\item For any two SEFs, $x$ and $x'$,  if $x \leq x'$
then $y^{[x]} \leq y^{[x']}$.\\
For any two SVFs, $y$ and $y'$, if $y \leq y'$
then $x^{[y]} \leq x^{[y']}$.
\end{enumerate}
\end{claim}

Hence, there is an easy-to-compute one-to-one correspondence between
SEFs and SVFs, which respects 
the coordinate-wise partial order between SEFs and SVFs 
(i.e., gives an isomorphism between the respective partial orders). 
Moreover, as noted, the least element of the SEFs is
the SREC, $x^*$, whereas the least element of
the SVFs  is the SRVC, $y^* = y^{[x^*]}$.


Hence, in order to decide the $\Arrival$ problem, it also 
suffices to compute a (any)  SVF.

G\"{a}rtner et al \cite{GHH21} used a Tarski fixed point 
characterization of a ``multi-run''
generalization of SVF functions, corresponding to
a ``multi-run'' generalization of SEFs. (They called their 
generalization of a SEF 
a ``candidate switching flow'').
For our purposes, we will not need to describe what such
multi-run generalizations are.
Although they did not mention this
explicitly, a special case of
their results, namely the case where we let the set
$S \subseteq V$ of their multi-run problem be all vertices, i.e, $S := V$, provides a
Tarski fixed point characterization of SVFs as follows.

Consider the defining system of equations (\ref{eq1-svf})
for an SVF, $y$.  This can be viewed as a system of
$n = |V|$ equations, 
in $n$ variables,  $\langle y_v \rangle_{v \in V}$, 
with one equation $y_v = F_v(y)$ for each vertex
$v \in V$.   We can write this entire
system in vector notation as $y=F(y)$.
Note, importantly, that $F: \nat^{n} \rightarrow \nat^{n}$ 
defines a monotone function on $\nat^{n}$, because all
coefficients on the RHS of equations (\ref{eq1-svf}) are non-negative.
Alas,  $\nat^{n}$ is not a finite lattice.
G\"{a}rtner et al (\cite{GHH21}, Lemma 9) considered
the following variant of $F(y)$, which imposes a finite ``top''.
For $n = |V|$, let $N= 2^{n}$, and for each $v \in V$, let 
\begin{equation}
      D_v(y) :=   \min ( F_v(y),  N) 
\end{equation}
Let $[N] = \{0,\ldots,N\}$.  
Clearly $D:[N]^{n} \rightarrow [N]^{n}$ defines 
a monotone function from the finite lattice $[N]^{n}$ to 
itself, and hence has a fixed point.

\begin{proposition}[\cite{GHH21}, Lemma 9,
    when letting $S:= V$]
Every fixed point solution $y' \in [N]^{n}$ of $y = D(y)$ 
is also a fixed point solution of $y = F(y)$.
\end{proposition}

%

Hence, it follows that, given an arrival instance $\mathcal{A}$, 
in order to compute an SVF for it, and hence to decide the 
$\Arrival$ problem, it suffices
to compute a (any) fixed point of the discrete monotone function 
$D:[N]^{n} \rightarrow [N]^{n}$, whose fixed point equations $y = D(y)$ can easily
be described (and $D(y)$ can be evaluated) in P-time given the instance $\mathcal{A}$ (and given, additionally $y \in [N]^{n}$).
This shows
\begin{proposition}
    $\Arrival$ is polynomial time reducible to
$\Tarski$.
\end{proposition}

\end{document}